\theoremstyle{plain}
\newtheorem{theorem}{Theorem}[section]
\newtheorem{corollary}[theorem]{Corollary}
\newtheorem{proposition}[theorem]{Proposition}
\newtheorem{lemma}[theorem]{Lemma}
\theoremstyle{definition}
\newtheorem{definition}[theorem]{Definition}
\theoremstyle{remark}
\newtheorem{remark}[theorem]{Remark} 
\newtheorem{example}[theorem]{Example}
\numberwithin{equation}{section}
\numberwithin{figure}{section}
\numberwithin{table}{section}
\newcommand{\R}{\mathbb{R}}
\newcommand{\N}{\mathbb{N}}
\newcommand{\C}{\mathbb{C}}                           
\newcommand{\Z}{\mathbb{Z}}
\newcommand{\s}[1]{\CMcal{#1}}
\newcommand{\bb}[1]{\mathscr{#1}}
\newcommand{\rr}[1]{\mathfrak{#1}}
\newcommand{\n}[1]{\mathbb{#1}}
\newcommand{\expo}[1]{\,\mathrm{e}^{#1}\,}                 
\newcommand{\dd}{\,\mathrm{d}}
\newcommand{ \ii}{\,\mathrm{i}\,}
\newcommand{\virg}[1]{\lq\lq#1\rq\rq}                \newcommand{\ie}{\textsl{i.\,e.\,}}
\newcommand{\etc}{\textsl{etc}.\,}
\begin{document}

\title{The Lieb-Robinson condition and the  Fr\'echet topology}

\author[S. Bachmann]{Sven Bachmann}

\author[G. De~Nittis]{Giuseppe De Nittis}

\author[J. Gomez]{Juli\'an G\'omez}

\address[G. De~Nittis]{Facultad de Matemáticas \& Instituto de Física,
  Pontificia Universidad Católica de Chile,
  Santiago, Chile.}
\email{gidenittis@uc.cl}

\address[S. Bachmann]{Department of Mathematics,
  The University of British Columbia,
  Vancouver, Canada.}
\email{sbach@math.ubc.ca}

\address[J. G\'omez]{Departamento de Matemáticas, 
	Universidad de los Andes, 
	Bogotá, Colombia \& 
	Facultad de Matemáticas, 
	Pontificia Universidad Católica de Chile,
	Santiago, Chile.}
\email{jd.calderon@uniandes.edu.co}

\vspace{2mm}

\date{\today}

\begin{abstract}
We define various notions of locality for *-automorphisms of the algebra of observables for an infinitely extended quantum spin system and study their relationship. In particular, we show that the ubiquitous characterization which arises from the Lieb-Robinson bound implies but is not equivalent to continuity with respect to the natural Fr\'echet topology of almost local observables, which is a non-commutative analog of the Schwartz space. 

\medskip

\noindent \textbf{MSC 2020}: Primary: 81R15; Secondary: 46N50, 46A04

\noindent \textbf{Keywords}: \textit{Spin lattice systems, almost local algebras, Fr\'echet algebras, Lieb--Robinson bounds, ALP--automorphisms.}

\end{abstract}

\maketitle

\tableofcontents

\section{Introduction}\label{sec:Intr0}
The study of quantum lattice systems has revealed profound connections between locality, dynamics, and the algebraic structure of observables. One of the most remarkable manifestations of these connections is found in the so-called Lieb-Robinson bounds. First introduced in the early 1970s \cite{Lieb-Robinson Original}, these bounds establish an effective light cone for the Heisenberg dynamics of quantum spin systems with short-range interactions, drawing an analogy with the causal structure of relativistic field theories. The support of the time evolution of a local observable remains exponentially small outside this effective cone, thus offering a quantitative formulation of {\sl almost-locality} in non-relativistic quantum many-body dynamics.

A Lieb-Robinson bound is associated with the quantum dynamics $\{\tau_t^\Phi:t\in\mathbb{R}\}$ generated by the formal Hamiltonian $\sum_{\Lambda\subset\mathbb{Z}^d}\Phi(\Lambda)$.  While the original bound was for finite range or exponentially decaying interactions, see also~\cite{Nachtergaele-Sims Exponential Clustering}, recent generalizations initiated in~\cite{HastingsPolLRB} have been developed to describe and control the dynamics of systems with long-range interactions, namely where the interactions decay polynomially. In all cases, the Lieb-Robinson bound has the form
$$||[\tau_t^\Phi(A), B]||\; \leq\; ||A|| \, ||B|| \,\mathrm{e}^{v\vert t\vert}\sum_{i \in \Sigma_1} \sum_{j \in \Sigma_2} F(d(i,j))\;,$$
where $A,B$ are observables supported on sets $\Sigma_1,\Sigma_2\subset \Z^d$ and $d(i,j)$ is the distance between the sites $i,j\in\Z^d$. The type of decay of the interaction $\Phi$ determines the decay of the non-increasing function $F:[0, \infty) \rightarrow (0, \infty)$. For times in a compact interval $[0,T]$, the exponential can be treated as a constant, yielding our definition of a Lieb-Robinson-type automorphism, Definition~\ref{defLR-aut}. This formulation is very close to that of~\cite{AdiabaticThm}, see also~\cite{Nachtergaele-Sims-Young-Quasilocality-bounds}. 

In this paper, we are not aiming to provide further refinements of the Lieb-Robinson bound. Rather, we are interested more abstractly in the concept of locality of automorphisms of quasi-local algebra of observables, and in particular algebraic and topological characterizations thereof. While the global $C^*$-algebra of observables is obtained by completing the space of local observables in the norm toplogy, one may also consider a dense subalgebra consisting of observables that decay rapidly in space. This subalgebra, which we denote by $\bb{A}_\infty$, can be endowed with the structure of a Fr\'echet algebra and it is a non-commutative analog of the space of Schwartz functions, see Section~\ref{sub:ALO}. It appeared in various guises before: Almost local observables were first defined in algebraic quantum field theory~\cite{araki1967collision}, adapted to the lattice setting in~\cite{bachmann2016lieb}, and used more systematically in~\cite{ogata2021Cohom} and \cite{Kapustin-Sopenko-Local-Noether}.

This Fr\'echet algebra $\bb{A}_\infty$ and the Fr\'echet topology provide a convenient framework for describing almost local observables and the propagation properties of automorphisms. One may be lead to believe that continuity of an automorphism with respect to this topology is equivalent to the automorphism being of Lieb-Robinson type dynamics. We will show that this is not the case: not all continuous automorphisms with respect to this topology arise from dynamics governed by local interactions. Indeed,

\begin{theorem}
Any Lieb-Robinson-type automorphism is continuous on the algebra of almost-local observables with respect to the Fr\'echet topology. However, Fr\'echet continuity alone does not imply the existence of a Lieb-Robinson bound.
\end{theorem}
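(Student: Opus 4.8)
\medskip

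The plan is to prove the two assertions separately. Recall from Section~\ref{sub:ALO} that, writing $\mathfrak{A}$ for the quasi-local $C^\ast$-algebra and $E_n\colon\mathfrak{A}\to\mathfrak{A}_{\Lambda_n}$ for the norm-one conditional expectation onto the ball $\Lambda_n$, the Fr\'echet topology on $\bb{A}_\infty$ is generated by an increasing family of seminorms of the form $p_k(A)\asymp\|A\|+\sup_{n\geq1}(1+n)^{k}\,\|A-E_n(A)\|$, $k\in\N$; in particular $\|\cdot\|\le p_0$ and $\|A-E_n(A)\|\le(1+n)^{-k}p_k(A)$. I shall also use the standard quasi-locality estimate $\|B-E_n(B)\|\le 2\sup\{\|[B,X]\|:X\in\mathfrak{A}_{\Lambda_n^{\mathrm{c}}},\,\|X\|\le1\}$ (from averaging $B$ over the unitary group of $\mathfrak{A}_{\Lambda_n^{\mathrm{c}}}$), noting that the supremum may be taken over strictly local $X$ by density and norm-continuity of the commutator.

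\emph{Lieb--Robinson $\Rightarrow$ Fr\'echet continuity.} Let $\alpha$ be a Lieb--Robinson-type automorphism, with constant $C$ and decay function $F$ as in Definition~\ref{defLR-aut}; the point at which the argument becomes sharp is that $F$ decreases faster than every inverse power of its argument — exactly what makes the Schwartz-type algebra $\bb{A}_\infty$, rather than a polynomially weighted one, the natural codomain. Fix $k$, $A\in\bb{A}_\infty$, a target radius $n$, put $\rho=\lfloor n/2\rfloor$, and split $A=E_\rho(A)+(A-E_\rho(A))$. For strictly local $X\in\mathfrak{A}_{\Lambda_n^{\mathrm{c}}}$ with $\|X\|\le1$, bilinearity of the commutator together with the Lieb--Robinson bound applied to $E_\rho(A)\in\mathfrak{A}_{\Lambda_\rho}$ gives
$$\|[\alpha(A),X]\|\;\le\;\|[\alpha(E_\rho(A)),X]\|+2\|A-E_\rho(A)\|\;\le\;C\,|\Lambda_\rho|\,\|A\|\,G(n-\rho)+2\|A-E_\rho(A)\|,$$
where $G(r):=\sup_i\sum_{j:\,d(i,j)\geq r}F(d(i,j))$ again decays faster than every inverse power. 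Feeding this into the quasi-locality estimate and using $|\Lambda_\rho|\leq(1+n)^{d}$, $G(n-\rho)\leq G(n/2)$, we get $\|\alpha(A)-E_n(\alpha(A))\|\leq 2C(1+n)^{d}\|A\|\,G(n/2)+4\|A-E_\rho(A)\|$. Multiplying by $(1+n)^{k}$: the first term is bounded uniformly in $n$ since $(1+n)^{k+d}G(n/2)\to0$, and the second is $\leq 4(1+n)^{k}(1+\rho)^{-(k+1)}p_{k+1}(A)=O(p_{k+1}(A))$. Taking the supremum over $n$ and using $\|\cdot\|\leq p_0\leq p_{k+1}$ yields $p_k(\alpha(A))\leq C_k\,p_{k+1}(A)$; in particular $\alpha(\bb{A}_\infty)\subseteq\bb{A}_\infty$ and $\alpha$ is Fr\'echet continuous. (Once $\alpha(\bb{A}_\infty)\subseteq\bb{A}_\infty$ is known, continuity also follows from the closed graph theorem, since the Fr\'echet topology dominates the norm topology and $\alpha$ is norm-isometric.)

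\emph{Fr\'echet continuity $\not\Rightarrow$ Lieb--Robinson.} I will exhibit a $\ast$-automorphism $\alpha$ of $\mathfrak{A}$ that is Fr\'echet continuous but admits no Lieb--Robinson estimate with any non-increasing $F$ vanishing at infinity. Partition $\Z^d$ into finite blocks $Q_1,Q_2,\dots$ with $\mathrm{diam}(Q_k)=k$, packed so that the blocks meeting $\Lambda_n$ are contained in $\Lambda_{\phi(n)}$ with $\phi(n)=n+o(n)$ (for $d=1$: consecutive intervals of lengths $1,2,3,\dots$, so $\phi(n)=n+O(\sqrt n)$). For each $k$ pick two sites of $Q_k$ at distance $k$, let $U_k\in\mathfrak{A}_{Q_k}$ implement their transposition, put $\beta_k:=\mathrm{Ad}(U_k)$ (pairwise commuting, disjoint supports), and define $\alpha:=\prod_k\beta_k$ on the dense subalgebra of local observables (a locally finite product), extended by continuity; $\alpha$ is a well-defined $\ast$-automorphism since it is isometric there. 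It kills every Lieb--Robinson bound: with $A=\sigma^x$ at the first distinguished site of $Q_k$ and $B=\sigma^z$ at the second, $\alpha(A)=\sigma^x$ sits at the support of $B$, so $\|[\alpha(A),B]\|=2=2\|A\|\|B\|$ while their distance is $k$; hence a bound $\|[\alpha(A),B]\|\leq C'\|A\|\|B\|\sum_{i\in\Sigma_1,\,j\in\Sigma_2}F(d(i,j))$ with $\Sigma_1,\Sigma_2$ the two singletons would force $F(k)\geq 2/C'$ for all $k$ — impossible. Yet $\alpha$ is Fr\'echet continuous: as $\beta_k$ is supported in $Q_k$, $\alpha$ maps $E_n(A)\in\mathfrak{A}_{\Lambda_n}$ into the algebra of the union of the blocks meeting $\Lambda_n$, hence into $\mathfrak{A}_{\Lambda_{\phi(n)}}$, so
$$\|\alpha(A)-E_{\phi(n)}(\alpha(A))\|\;\leq\;2\,\|\alpha(A)-\alpha(E_n(A))\|\;=\;2\,\|A-E_n(A)\|,$$
which decays faster than every inverse power of $n$, hence of $\phi(n)\leq 2n$. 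Re-indexing the radius, $p_k(\alpha(A))\leq C_k\,p_k(A)$, so $\alpha(\bb{A}_\infty)\subseteq\bb{A}_\infty$ and $\alpha$ is Fr\'echet continuous, while having no Lieb--Robinson bound.

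\emph{Main obstacle.} In the first half the crux is the interplay between the truncation radius $\rho$ and the target radius $n$: truncating $A$ to a ball is what lets the Lieb--Robinson bound act, but it produces a volume factor $|\Lambda_\rho|\sim\rho^{d}$ that must be absorbed by the decay of $F$, so the estimate closes precisely when $F$ lies in the fast-decay class of Definition~\ref{defLR-aut} — this is the structural reason the Schwartz-type algebra is the right target and why one cannot expect the same conclusion for merely polynomially decaying interactions. In the second half the only delicate point is that the block automorphism distorts localization radii by $n\mapsto n+o(n)$; this is harmless for the rapidly decaying seminorms of $\bb{A}_\infty$, and it is why one chooses a slowly growing block size such as $\mathrm{diam}(Q_k)=k$: it keeps the distortion sublinear while still forcing the effective light cone of $\alpha$ to be unbounded, which is what destroys every Lieb--Robinson estimate.
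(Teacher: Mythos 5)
Your proposal is correct and proves the two assertions, but it does so by a noticeably different route than the paper, so a comparison is in order.

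For the first assertion, you prove Lieb--Robinson $\Rightarrow$ Fr\'echet continuity directly: truncate $A$ to radius $\rho=\lfloor n/2\rfloor$, apply the commutator bound to the truncated piece against local $X$ outside $\Lambda_n$, absorb the volume factor $|\Lambda_\rho|\sim n^d$ by the rapid decay of $F$, and interpolate. The paper instead factors the argument through the intermediate notion of an almost-locality-preserving (ALP) automorphism: it first shows (Lemma~\ref{lemma: bounding sum regions} and Proposition~\ref{Lemma Nuevo Lieb Robinson}) that a Lieb--Robinson type automorphism is ALP with a surface-area factor $s^{d-1}$ rather than your cruder volume factor, and then separately shows (Proposition~\ref{Proposicion Continuidad ALP en Frechet}) that ALP implies Fr\'echet continuity. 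Your shortcut is perfectly adequate for the stated theorem — because $F$ is rapidly decaying, the distinction between $s^{d-1}$ and $s^d$ washes out — but it gives up the more refined ALP conclusion, which the paper needs elsewhere (notably to establish the group property in Proposition~\ref{prop_grup} and the converse implication ALP $\Rightarrow$ LR in Proposition~\ref{ALP implies LB}). You should also note explicitly, as you essentially compute, that condition~(ii) of Definition~\ref{defLR-aut} ($f_F \in \s{S}(\N_0)$) dominates your $G$, since $g_F(r)\geq (r+1)^{d-1}F(r)$ and hence $F$ itself is in $\s{S}(\N_0)$; without this observation, "$G$ decays faster than every power" is an unsupported claim.

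For the second assertion, your block-permutation automorphism is structurally the same idea as the paper's flip-type counterexample $\psi_\zeta$ with $\zeta$ polynomial of degree $\geq 2$ (Example~\ref{Example Not ALP} and Section~\ref{Example Not ALP-2}): in both cases single-site observables at arbitrarily large mutual distance $k$ get swapped, so no $F$ vanishing at infinity can satisfy the commutator bound, while the dilation of supports is sublinear in the radius. Your proof of Fr\'echet continuity for this example is, however, more streamlined than the paper's: you observe directly that $\alpha$ maps $\bb{A}(\Lambda_n)$ into $\bb{A}(\Lambda_{\phi(n)})$ with $\phi(n)\leq 2n$ eventually, and then transfer the tail bounds by the $p_r\leq 2f_r$ inequality and monotonicity of $f_r$. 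The paper goes through a more hands-on argument with Pauli bases (Remark~\ref{Remark: PauliBasis}) to establish the exact identity $p_{r_{m,n}}(\psi_\zeta(A))=p_{r_{m,n}}(A)$ for a special sequence of radii and then interpolates; the payoff of that extra work is an explicit constant $M_\zeta$ and a framework that carries over to the one-parameter family $\sigma^t_\zeta$ and to the polynomial algebras $\bb{A}_{(k)}$ later in the paper. Finally, in $d>1$ you could simply adopt the paper's trick of Remark~\ref{Remark: Ex From Z to Z^d}: run the construction on a one-dimensional sublattice $\Z\times\{0\}^{d-1}$ and act as the identity elsewhere, which avoids the mildly informal "pack $\Z^d$ into blocks of diameter $k$" step.
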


Besides Lieb-Robinson-type automorphisms, we shall define two different types of `locality-preserving automorphisms', see Definition~\ref{ALP automorphism firts definition} and~Definition \ref{ALP Definition 2}. These are two characterizations of families of automorphisms that, in principle, are not explicitly defined through interaction potentials, yet describe local propagation up to errors of a specific form. Heuristically, these locality-preserving automorphisms are such that if $A$ is strictly supported in a ball of radius $s$, then $\alpha(A)$ can be approximated by an element supported in a ball of radius $s+r$, up to errors that decay fast in the fattening parameter $r$ (and may grow as a function of the initial surface area $s^{d-1}$). Despite their different formulations, these two notions are equivalent, and in fact also equivalent to the Lieb-Robinson-type condition sketched above.

\begin{theorem}
An automorphism is of Lieb-Robinson type if and only if it is almost locality-preserving.
\end{theorem}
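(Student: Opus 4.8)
The plan is to establish the two implications separately, the common engine being the family of conditional expectations $\mathbb{E}_\Lambda$ onto the local subalgebras $\bb{A}_\Lambda$ together with the elementary estimate
\[
\|X-\mathbb{E}_\Lambda(X)\|\;\leq\;2\,\sup\bigl\{\,\|[X,Y]\|\;:\;Y\in\bb{A}_{\Lambda^{c}},\ \|Y\|\leq 1\,\bigr\},
\]
valid for every $X$ in the quasi-local algebra, which converts commutator bounds into approximation-by-local-observables statements and conversely. The second ingredient is purely geometric: shell counting in $\Z^{d}$ combined with the summability and decay of the weight function $F$ of the Lieb-Robinson bound (and, where needed, its convolution/summability calculus).

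For the implication \emph{Lieb-Robinson type $\Rightarrow$ almost locality-preserving}, let $A$ be strictly supported in a ball $\Lambda_{s}=B_{s}(x_{0})$, fix $r>0$, put $\Lambda=B_{s+r}(x_{0})$ and set $A^{(r)}:=\mathbb{E}_{\Lambda}(\alpha(A))\in\bb{A}_{\Lambda}$. The displayed estimate and the Lieb-Robinson bound give
\[
\|\alpha(A)-A^{(r)}\|\;\leq\;2\|A\|\sup_{\substack{Y\in\bb{A}_{\Lambda^{c}},\ \|Y\|\leq1}}\ \sum_{i\in\Lambda_{s}}\sum_{j\in\supp Y}F(d(i,j))\;\leq\;2\|A\|\sum_{i\in\Lambda_{s}}\ \sum_{j:\,d(j,x_{0})>s+r}F(d(i,j)).
\]
Splitting $\Lambda_{s}$ into shells indexed by the depth $k=s-d(i,x_{0})$ — there are $O(s^{d-1})$ sites at each depth $k\in\{0,\dots,s\}$, and such an $i$ satisfies $d(i,j)\geq r+k$ — and summing the convergent tails $\sum_{m\geq r+k}m^{d-1}F(m)$ over $k$ yields a bound $C\,s^{d-1}G(r)$, with $G$ in the decay class of $F$ (exponential stays exponential; a polynomial $F$ loses $d$ powers). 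This is the required estimate, i.e.\ Definition~\ref{ALP automorphism firts definition}, and the already-available equivalence of the two formulations then upgrades it to Definition~\ref{ALP Definition 2}.

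For the converse \emph{almost locality-preserving $\Rightarrow$ Lieb-Robinson type}, let $A\in\bb{A}_{\Sigma_{1}}$ and $B\in\bb{A}_{\Sigma_{2}}$. Since $\mathbb{E}_{\Sigma_{2}^{c}}(\alpha(A))$ commutes with $B$, one has $\|[\alpha(A),B]\|\leq 2\|B\|\,\|\alpha(A)-\mathbb{E}_{\Sigma_{2}^{c}}(\alpha(A))\|$. Telescoping $\mathrm{id}-\mathbb{E}_{\Sigma_{2}^{c}}$ site by site over $\Sigma_{2}$, using that the maps $\mathbb{E}_{\{j\}^{c}}$ are norm-one and pairwise commuting, bounds the right-hand side by $2\|B\|\sum_{j\in\Sigma_{2}}\|(\mathrm{id}-\mathbb{E}_{\{j\}^{c}})\alpha(A)\|$; for each $j$ one invokes almost locality-preservation with fattening radius just below $d(j,\Sigma_{1})$, so the approximating local element misses $j$ and the $j$-th term is $\lesssim\|A\|$ times a rapidly decaying function of $d(j,\Sigma_{1})$ and the ``surface'' of $\Sigma_{1}$. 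Re-summing over $\Sigma_{2}$ and comparing this surface$\times$tail profile with $\sum_{i\in\Sigma_{1}}F(d(i,j))$ via the same shell counting produces the Lieb-Robinson bound, with an $F$ in the decay class dual to the one lost above — which is why the statement is an equivalence of \emph{types}, not a bijection with a fixed $F$.

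The main obstacle is the final step of the converse: the Lieb-Robinson bound carries the fine pairwise structure $\sum_{i\in\Sigma_{1}}\sum_{j\in\Sigma_{2}}F(d(i,j))$, whereas a ball-based statement of almost locality only controls approximation by a single local observable, so replacing $\Sigma_{1}$ by its bounding ball is far too lossy for sparse or extended $\Sigma_{1}$. This is exactly where the more flexible Definition~\ref{ALP Definition 2}, phrased directly in terms of general supports, and its equivalence with Definition~\ref{ALP automorphism firts definition} are used: through that reformulation, together with a decomposition of $\Sigma_{1}$ into balls and the $F$-function calculus, one recovers the genuine pairwise bound. Verifying these geometric estimates and tracking the decay classes through both implications is routine but is where essentially all the work lies.
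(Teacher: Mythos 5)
The forward implication you sketch (Lieb--Robinson type $\Rightarrow$ ALP) is the same route the paper takes: the commutator-to-approximation estimate via the partial trace, the Lieb--Robinson double sum over $\Lambda_s$ and $B_p(s+r)^c$, and the shell-counting argument that produces the $s^{d-1}$ surface prefactor and the tail $g_F(r)$ (this is exactly Lemma~\ref{lemma: bounding sum regions} and Proposition~\ref{Lemma Nuevo Lieb Robinson}). No concerns there.

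The converse direction has a genuine gap, and you have in fact put your finger on it. After the single telescoping over $\Sigma_2$, you need to control $\|(\mathrm{id}-\Pi_j)\alpha(A)\|$ with $A\in\bb{A}(\Sigma_1)$ for an \emph{arbitrary} finite $\Sigma_1$, and the only ALP input you have is over balls. Replacing $\Sigma_1$ by its bounding ball of radius $s\sim\operatorname{diam}(\Sigma_1)$ pays the $s^{d-1}$ surface cost in every term, and for sparse or extended $\Sigma_1$ this is not dominated by $\sum_{i\in\Sigma_1}F(d(i,j))$ -- the whole point of the pairwise Lieb--Robinson structure. Your proposed fix, \emph{``decomposition of $\Sigma_1$ into balls and the $F$-function calculus''}, does not cash out as stated: an observable $A\in\bb{A}(\Sigma_1)$ is not a sum of observables supported in the individual balls, so you cannot simply linearize the problem over a covering of $\Sigma_1$. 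You also misattribute the fix to Definition~\ref{ALP Definition 2}: in this paper that definition is \emph{also} formulated over balls (only the interval-based version of \cite{Ranarad-Walter-Witteveen-Converse-Lieb} is region-agnostic, and only in $d=1$), so the equivalence of the two definitions is of no help here.

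The missing idea is Lemma~\ref{ALP-Bound for regions}: the paper applies the ALP hypothesis only to \emph{single sites}, where the $s^{d-1}$ prefactor is trivial, obtaining $\alpha(\bb{A}(\{i\}))\subset_{f(z_{i,j})}\bb{A}(\{j\}^c)$ with $z_{i,j}=\lfloor d(i,j)/2\rfloor$. It then runs a \emph{double} telescoping, via the near-inclusion machinery of Remark~\ref{Generalization Near-Inclusions}: once over $i\in\Lambda$, yielding $\bb{A}(\{j\})\subset_{2\sum_i f(z_{i,j})}\alpha(\bb{A}(\Lambda^c))$; then, after pushing this through $\alpha^{-1}$, once more over $j\in\Sigma$, yielding $\bb{A}(\Lambda)\subset_{4\sum_{i,j}f(z_{i,j})}\alpha^{-1}(\bb{A}(\Sigma^c))$; and finally converting the near-inclusion into a commutator bound via Lemma~\ref{lemma_AX}. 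The result is the genuine pairwise bound $\|[\alpha(A),B]\|\leqslant 8\sum_{i\in\Lambda}\sum_{j\in\Sigma}f(z_{i,j})\|A\|\|B\|$, and one finishes by dominating $8f(\lfloor\cdot/2\rfloor)$ with a bona fide reproducing function (Lemma~\ref{Bounding by reproducing function}). Your outline telescopes on only one side and never invokes the inverse automorphism, which is precisely where the pairwise structure is recovered; this part of the argument is not routine and is what your proposal leaves unproved.
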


Let us immediately point out that part of this discussion has a review character, since it also recalls and adapts known results, see in particular~\cite{Nachtergaele-Sims-Young-Quasilocality-bounds}, and also~\cite{Kapustin-Sopenko-Lieb} and~\cite{Ranarad-Walter-Witteveen-Converse-Lieb}.

Finally, we will consider the polynomial cases, see Section~\ref{Pol Case sec}. From the Lieb-Robinson perspective, these are automorphisms generated by interactions whose decay is only polynomial, which is often referred to as long-range interactions, see for example~\cite{HastingsPolLRB,Kuwahara,Improved LR polynomial}. Here again, the natural algebraic setting is to consider a subalgebra of observables $\bb{A}_{(k)}$ (see Definition~\ref{Def: Polynomial local observables}) whose local approximations decay at least as a power law of degree $k$ with the distance. We shall define $(k)$-almost locality preserving automorphisms and relate them to Lieb-Robinson type bounds for long-range interactions. Even in this larger class of automorphisms, there is no equivalence between a polynomial Lieb-Robinson-type condition and continuity with respect to the natural topology formalizing decay of order $k$ in space.

\medskip
 
 \noindent
{\bf Acknowledgements.}
GD's research is supported by the grant \emph{Fondecyt Regular - 1190204}. SB acknowledges support from NSERC of Canada.


\section{Algebraic description of a spin lattice system}
A \emph{spin lattice system} is a $d$-dimensional interacting spin system on $\Z^d$. In this section we will present the algebraic formalism and the basic facts for the description of the kinematics and the dynamics of such systems. The interested reader can refer to \cite{Bratteli-Robinson-2} for a more complete presentation of the subject.

\subsection{Local and quasi-local observables}
  To each site $j\in\Z^d$, we associate a finite-dimensional $C^*$-algebra $\bb{A}_j$ isomorphic to ${\rm Mat}_{n_j} (\C)$, the algebra of $n_j\times n_j$ matrices with complex entries (equipped with its natural operations and norm). We do not require the dimensions $n_j$ to be uniformly bounded, but we shall assume that $\log(n_j)$ grows at most polynomially with $\max_{1 \leq i \leq d} |j_i|$.

\medskip

 Let $\s{P}(\Z^d)$ denote the power set of $\Z^d$, and let $\s{P}_0(\Z^d)\subset \s{P}(\Z^d)$  denote the collection of finite subsets.
  To each finite region $\Lambda \in \s{P}_0(\Z^d)$, we associate the local algebra
  \[
  \bb{A}(\Lambda) \;:=\;\bigotimes_{j\in\Lambda}\bb{A}_j\;.
  \] 
  If $\Lambda\subset\Sigma$, there is a natural inclusion $\bb{A}(\Lambda) \hookrightarrow\bb{A}(\Sigma)$ obtained by extending the elements of $\bb{A}(\Lambda)$ by the identity outside of $\Lambda$.
 Since $\s{P}_0(\Z^d)$ endowed with the inclusion is a directed set, one can construct  the inductive limit 
  \[
  \bb{A}_{\rm loc}\;:=\;\varinjlim_{\Lambda\in \s{P}(\Z^d)}\bb{A}(\Lambda)\;.
  \]
  The  $C^*$-algebra $\bb{A}$ is defined as a completion of the $\ast$-algebra $\bb{A}_{\rm loc}$, \ie
\[
\bb{A}\;:=\;\overline{ \bb{A}_{\rm loc}}^{\;\|\;\|}\;.
\]  
 Elements of $\bb{A}_{\rm loc}$ are referred to as \emph{local} observables, while elements of the completion $\bb{A}$ are the \emph{quasi-local} observables.  
  
\medskip

The lattice $\Z^d$ is equipped with the Euclidean metric. In fact, all the results of this paper hold for a more general discrete metric space $(\Gamma,d)$ provided the following holds:
		\begin{enumerate}
	\item There exist $C_d > 0$ such that for any $r \in \N$,
		\begin{equation}\label{VolumenBound}
			\sup_{j \in \Gamma} \left|B_j(r + 1)\right| \leqslant C_d r^{d},
		\end{equation}
			where $B_j(r) \; := \; \{i \in \Gamma: \; d(j,i) \leqslant r - 1\}$ and $B_j(0) = \emptyset$.
			\vspace{1mm}
			
		\item There exists $K_d > 0$ such that for any $r \in \N$,
		\begin{equation}\label{BoundaryBound}
			\sup_{j \in \Gamma} \left|\partial B_j(r + 1)\right| \leqslant K_d r^{d -1},
		\end{equation}
		where $\partial B_j(r) \; := \; B_j(r) - B_j(r - 1)$.

\end{enumerate}
\subsection{Almost-local observables}\label{sub:ALO}

Evidently, $\bb{A}_{\rm loc}$ is a dense $\ast$-subalgebra of $\bb{A}$ of \virg{compactly supported} observables. It is useful to consider other algebras that sit between $\bb{A}_{\rm loc}$ and $\bb{A}$ and are defined by the rate of convergence of the local approximations. To formalize this, for any  $A\in \bb{A}$ and $j \in \Z^d$, let us consider the function
  \begin{equation}\label{Seminorms fj}
  f_{j,r}(A)\;:=\;\inf_{B\in \bb{A}({B_j(r))}}\;\|A-B\|\;.
  \end{equation}
  with the convention that $f_{j,0}(A)=\inf_{c\in\C}\|A-c{\bf 1}\|$. In \cite{Kapustin-Sopenko-Local-Noether}, the authors discuss some properties of this function. In particular, this is a monotonically decreasing non-negative function of $r$ which takes values in $\left[0, \|A\|\right]$ and approaches zero as $r\to\infty$. It is, moreover, subadditive: for every fixed $r\geqslant 0$ and  $j\in\Z^d$ one has that
  \[
   f_{j,r}(A_1+A_2)\;\leqslant\;  f_{j,r}(A_1)\;+\;  f_{j,r}(A_2)
  \]
  Thus, the $f_{j,r}$ provides a family of seminorms on $\bb{A}$.
  The fact that they are only seminorms and not norms follows by observing that when $A\in \bb{A}(\Lambda)$ is a local observable then   $f_{j,r}(A)=0$
  for every  $r$ such that $\Lambda\subseteq B_j(r)$.
 
 \medskip

A sequence $g:\N_0\to\C$ is in 
 $\s{S}(\N_0)$ if
 \[
|||g|||_k\;:=\;\sup_{r\in\N_0}\;(1+r)^k|g(r)|\;<\;+\infty\;,\qquad \forall\;k\in\N_0
\] 
 and  $\s{S}(\N_0)$ turns out to be a  a Fr\'echet space when topologized by the family  of norms $|||\;|||_k$ (see Appendix \ref{app:Rap_seq}). Since $(1+r)^{k+1}\geqslant (1+r)^k$
 it follows that $|||g|||_k\leqslant |||g|||_{k+1}$ for every $g\in\s{S}(\N_0)$, meaning that the system of norms is increasing.
  The subset of monotonically decreasing non-negative sequence in $\s{S}(\N_0)$ will be denoted by $\s{S}^+(\N_0)$.
Given a  $g\in \s{S}(\N_0)$ we will say that $A\in\bb{A}$ is $g$-localized around $j$ if $f_{j,r}(A)\leqslant g(r)$ for all $r\in \s{S}(\N_0)$. In this case one has that the sequence $r\mapsto f_{j,r}(A)$ is in $\s{S}^+(\N_0)$. Evidently the role of the special point $j$ in the concept of localization play no special role. In fact, one has that
\[
f_{0,r+d(j, 0)}(A)\;\leqslant\;f_{j,r}(A)\;\leqslant\;f_{0,r-d(j,0)}(A)
\]
 for all $r\geqslant d(j,0)$. Therefore, localization around any $j\in\Z^d$ is equivalent to localization around $0$.  With this in mind we can consider only the family of seminorms $f_r:=f_{0,r}$ on $\bb{A}$. Let us define
 \[
   \bb{A}_{\rm \infty}\;:=\;\{A\in\bb{A}\;|\; r\mapsto f_r(A)\in\s{S}^+(\N_0)\}\;.
 \]
It follows immediately that
  \[
 \bb{A}_{\rm loc}\;\subset\;  \bb{A}_{\rm \infty}\;\subset\;\bb{A}\;,
  \]
  and so $\bb{A}_{\rm \infty}$ is a dense $\ast$-subspace of $\bb{A}$. Over this subspace, we can define the following family of norms
  \[
  |||A|||_k'\;:=\;\|A\|\;+\; |||f(A)|||_k\;,\qquad k\in\N_0\; .
  \]
\begin{proposition}\label{Frechet algebra properties}   
  The space $\bb{A}_\infty$ is a Fr\'echet space with respect to system of norms $\{|||\cdot|||_k':k\in\N_0\}$. Moreover it coincides with the closure of $\bb{A}_{\rm loc}$ with respect to the Fr\'echet topology. Finally,  $\bb{A}_\infty$ has the structure of a Fr\'echet $\ast$-algebra: for all $A,B \in \bb{A}_\infty$ and $k \in \N_0$,
  $$|||A \ B|||_k \leqslant \frac{3}{2} \ |||A|||_k \ |||B|||_k, \qquad |||A^*|||_k \; = \; |||A|||_k.$$
  \end{proposition}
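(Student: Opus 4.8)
The plan is to verify the four assertions in sequence, leaning on a handful of elementary properties of the functions $f_r$. Beyond those already recorded in the text ($f_r$ is a seminorm, $0\leqslant f_r\leqslant\|\cdot\|$, and $f_r$ is nonincreasing in $r$), I would first note two further facts: subadditivity gives the Lipschitz estimate $|f_r(A)-f_r(B)|\leqslant f_r(A-B)\leqslant\|A-B\|$; and since $B_0(r)$ is finite and each $\bb{A}_j$ is finite-dimensional, $\bb{A}(B_0(r))$ is a finite-dimensional, hence closed, subspace, so the infimum in~\eqref{Seminorms fj} is attained. I would also record that $f_r(A^*)=f_r(A)$, because $\bb{A}(B_0(r))$ is a $\ast$-subalgebra and the involution is isometric. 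Granting these, the first assertion --- that $\bb{A}_\infty$ is a linear subspace on which each $|||\cdot|||_k'$ is a norm --- is routine: subadditivity shows $f(A+B)\in\s{S}^+(\N_0)$ and $f(\lambda A)\in\s{S}^+(\N_0)$ whenever $f(A),f(B)$ do, so $\bb{A}_\infty$ is a subspace; each $|||\cdot|||_k'$ is then a seminorm (again from subadditivity of $f_r$) which separates points because $|||A|||_k'\geqslant\|A\|$; and the family is countable and increasing since the norms $|||\cdot|||_k$ on $\s{S}(\N_0)$ are.

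For completeness, let $(A_n)$ be Cauchy for every $|||\cdot|||_k'$. It is then norm-Cauchy, so $A_n\to A$ for some $A\in\bb{A}$. Fixing $k$ and $\varepsilon>0$, choose $N$ with $(1+r)^k f_r(A_n-A_m)\leqslant\varepsilon$ for all $r$ and all $n,m\geqslant N$; the crucial step is to send $m\to\infty$ here and use the Lipschitz continuity of each $f_r$, which yields $(1+r)^k f_r(A_n-A)\leqslant\varepsilon$ for all $r$ and all $n\geqslant N$, i.e. $|||f(A_n-A)|||_k\leqslant\varepsilon$. Applying this with $n=N$, $\varepsilon=1$, and then using subadditivity, bounds $(1+r)^k f_r(A)$ by $|||f(A-A_N)|||_k+|||f(A_N)|||_k\leqslant 1+|||f(A_N)|||_k<\infty$, so $f(A)\in\s{S}^+(\N_0)$ and $A\in\bb{A}_\infty$; and the previous display then gives $|||A_n-A|||_k'\to 0$ for every $k$. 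I expect this to be the main obstacle --- not that any single estimate is hard, but because $A\mapsto f(A)$ is not linear, $\bb{A}_\infty$ is not literally a closed subspace of $\bb{A}\times\s{S}(\N_0)$, so one cannot simply invoke completeness of $\bb{A}$ and of $\s{S}(\N_0)$ (the latter from Appendix~\ref{app:Rap_seq}); one must pass to the limit inside the estimate, and carefully distinguish $|||f(A_n-A)|||_k$, which is what the Fr\'echet topology controls, from $|||f(A_n)-f(A)|||_k$.

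For density of $\bb{A}_{\rm loc}$ I would take, for $A\in\bb{A}_\infty$, an element $B_n\in\bb{A}(B_0(n))\subset\bb{A}_{\rm loc}$ realizing the (attained) infimum defining $f_n(A)$, so $\|A-B_n\|=f_n(A)\to 0$. The point is that for $r\geqslant n$ one has $\bb{A}(B_0(n))\subseteq\bb{A}(B_0(r))$, whence $f_r(A-B_n)=f_r(A)$, whereas for $r<n$ trivially $f_r(A-B_n)\leqslant\|A-B_n\|=f_n(A)$; inserting the bound $f_s(A)\leqslant(1+s)^{-k-1}|||f(A)|||_{k+1}$, both ranges give $(1+r)^k f_r(A-B_n)\leqslant(1+n)^{-1}|||f(A)|||_{k+1}$, so $|||f(A-B_n)|||_k\to 0$ and therefore $|||A-B_n|||_k'\to 0$ for every $k$. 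Combined with the completeness just proved, this identifies $\bb{A}_\infty$ with the Fr\'echet-topology closure of $\bb{A}_{\rm loc}$.

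Finally, for the Fr\'echet $\ast$-algebra structure, $|||A^*|||_k'=|||A|||_k'$ is immediate from $\|A^*\|=\|A\|$ and $f_r(A^*)=f_r(A)$. For submultiplicativity I would choose optimal local approximants $A_r,B_r\in\bb{A}(B_0(r))$, note that $A_rB_r\in\bb{A}(B_0(r))$ so that $f_r(AB)\leqslant\|AB-A_rB_r\|$, and use the splitting $AB-A_rB_r=A(B-B_r)+(A-A_r)B_r$ together with $\|B_r\|\leqslant\|B\|+f_r(B)$ to obtain $f_r(AB)\leqslant\|A\|f_r(B)+\|B\|f_r(A)+f_r(A)f_r(B)$. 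Multiplying by $(1+r)^k$, bounding the cross term with the help of $(1+r)^k\leqslant(1+r)^{2k}$, taking the supremum over $r$, and adding $\|AB\|\leqslant\|A\|\,\|B\|$, the right-hand side reassembles into $\big(\|A\|+|||f(A)|||_k\big)\big(\|B\|+|||f(B)|||_k\big)=|||A|||_k'\,|||B|||_k'$, which in particular is $\leqslant\tfrac32\,|||A|||_k'\,|||B|||_k'$; if one prefers not to use the attainment of the infimum, near-optimal approximants give the same bound up to a factor $1+\varepsilon$. No part of this last step presents any real difficulty once the properties of $f_r$ from the first paragraph are in hand.
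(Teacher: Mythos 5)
Your proof is correct and complete. The paper itself does not give a proof here---it defers to \cite[Lemma 2.1 \& Proposition B.1]{Kapustin-Sopenko-Local-Noether}---but your self-contained argument (attained infima via finite-dimensionality of $\bb{A}(B_0(r))$, the Lipschitz estimate $|f_r(A)-f_r(B)|\leqslant\|A-B\|$ to pass to the limit inside the Cauchy estimate, tail approximants $B_n$ with $f_r(A-B_n)=f_r(A)$ for $r\geqslant n$ giving density, and the decomposition $AB-A_rB_r=A(B-B_r)+(A-A_r)B_r$) is the natural route and handles correctly the non-linearity of $A\mapsto f(A)$ which, as you note, prevents any shortcut via closedness in a product space. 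You in fact obtain the sharper constant $|||AB|||_k'\leqslant|||A|||_k'\,|||B|||_k'$, which of course implies the stated factor $\tfrac32$.
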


\medskip

A short summary of the theory of Fr\'echet spaces and $\ast$-algebras is provided in Appen\-dix \ref{app:Fre_Al}. The results above are proven in \cite[Lemma 2.1 \& Proposition B.1]{Kapustin-Sopenko-Local-Noether}.
The elements of $\bb{A}_{\rm \infty}$ are called \emph{almost-local} observables.
  
\medskip
  
For practical reasons it is preferable to introduce another set of seminorms that provides the same topology for $\bb{A}_{\rm \infty}$. For that let us denote with 
   $\n{U}_j$  the group of unitary elements in the subalgebra $\bb{A}_{j}$.
  For any $j \in \Z^d$, let $\Pi_j: \bb{A} \rightarrow \bb{A}$ be the partial trace operator over the site $j$, which is given by: 
  \[
  \Pi_j(A)\;:=\;\int_{\n{U}_j}\dd\mu_{j}(U)\; UAU^*\;,
  \]
  with $\mu_{j}$ the normalized Haar measure on the group of the unitary operators $\n{U}_j$. For any finite subset $\Lambda \subseteq \Z^d$, the corresponding partial trace operator $\Pi_\Lambda: \bb{A} \rightarrow \bb{A}$ is defined as
  $$\Pi_\Lambda := \bigotimes_{j \in \Lambda} \Pi_j\;.$$
  For a general subset $\Sigma \subseteq \Z^d$, the partial trace operator is defined as the direct limit over the family of finite subsets $\Lambda \subset \Sigma$ as
  $$\Pi_\Sigma = \varinjlim_{\Lambda \subset \Sigma} \Pi_\Lambda.$$
The linear map $\Pi_\Lambda$ is positive and bounded $\|\Pi_\Lambda(A)\|\leqslant \|A\|$. Moreover, $\Pi_\Lambda(A)\in \bb{A}(\Lambda^c)$ while $\Pi_\Lambda(A)=A$ for every $A\in \bb{A}(\Lambda^c)$. For any $\Lambda_1, \Lambda_2 \subseteq \Z^d$,   the partial trace satisfies $\Pi_{\Lambda_1} \circ \Pi_{\Lambda_2} = \Pi_{\Lambda_1 \cup \Lambda_2} = \Pi_{\Lambda_2} \circ \Pi_{\Lambda_1}$. 
Applying this to $\Lambda = \Z^d$, one gets that $\Pi_{\Z^d}:\bb{A}\to\C{\bf 1}$. Therefore, $\Pi_{\Z^d}(A)=\omega_\infty(A){\bf 1}$, and the map $\omega_\infty:\bb{A}\to\C$ is, in fact, the unique tracial state on $\bb{A}$ (also called infinite temperature equilibrium state).
For more details on the partial trace the reader is referred to \cite{nachtergaele-scholz-werner-13} and \cite[Remark in p. 245]{Bratteli-Robinson-2}.
  
\medskip  
  
Using these partial trace operators, let us introduce the following family of functions:
 \[
 p_r(A)\;:=\;\|A- \Pi_{B_0(r)^c}(A)\|\;,\qquad A\in \bb{A}\;.
 \] 
In particular, $p_0(A) = \|A - \omega_\infty(A)\bf{1}\|$.
With these functions one can  define a new family of norms  over $\bb{A}_{\rm loc}$ given by:
 \begin{equation}\label{nor_001}
 \|A\|_k \;:=\; \|A\| + \sup_{r\in\N_0}\; (1 + r)^k\; p_r(A)\;.
 \end{equation}
 The next result  can be found in \cite[Proposition D.1]{Kapustin-Sopenko-Local-Noether}. Nevertheless, we will sketch the proof since it is  based on certain inequalities that will be important for the following of this work.
\begin{lemma}\label{Equivalence bounds rapidly decaimiento}
	The family of norms $\|\cdot\|_k$ and $|||\cdot|||_k'$ defined over $\bb{A}_{\rm loc}$ are equivalent. Hence, they generate the same 
	Fr\'echet $\ast$-algebra
	 $\bb{A}_{\infty}$.
\end{lemma}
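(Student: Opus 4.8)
The plan is to show that the two families of norms $\|\cdot\|_k$ and $|||\cdot|||_k'$ on $\bb{A}_{\rm loc}$ are related by two-sided estimates of the form $c_k|||A|||_{m(k)}'\leqslant\|A\|_k\leqslant C_k|||A|||_{n(k)}'$ for suitable shifted indices $m(k),n(k)$ and constants depending only on $d$. This is exactly the notion of equivalence for two increasing systems of norms generating the same Fréchet topology, so it suffices to produce these inequalities. The two families are built respectively from the local-approximation seminorm $f_r(A)=\inf_{B\in\bb{A}(B_0(r))}\|A-B\|$ and from $p_r(A)=\|A-\Pi_{B_0(r)^c}(A)\|$, so everything reduces to comparing $f_r$ and $p_r$.

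First I would establish the easy direction $f_r(A)\leqslant p_r(A)$: since $\Pi_{B_0(r)^c}(A)\in\bb{A}(B_0(r))$, it is an admissible competitor in the infimum defining $f_r(A)$, whence $f_r(A)\leqslant\|A-\Pi_{B_0(r)^c}(A)\|=p_r(A)$. Multiplying by $(1+r)^k$, taking the supremum over $r$, and adding $\|A\|$ gives $|||A|||_k'\leqslant\|A\|_k$ immediately, with no index shift. The reverse direction is where the real work lies: one must bound $p_r(A)$ in terms of the $f_s(A)$'s. The standard trick is that if $B\in\bb{A}(B_0(s))$ is a near-optimal approximant for $f_s(A)$, then $\Pi_{B_0(s)^c}(A)-B=\Pi_{B_0(s)^c}(A-B)$ (because $\Pi_{B_0(s)^c}$ fixes $B$), so $\|A-\Pi_{B_0(s)^c}(A)\|\leqslant\|A-B\|+\|\Pi_{B_0(s)^c}(B-A)\|\leqslant 2\|A-B\|$, using $\|\Pi_\Sigma\|\leqslant 1$. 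Taking the infimum over $B$ yields $p_s(A)\leqslant 2f_s(A)$.

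With $p_r(A)\leqslant 2f_r(A)$ in hand, the remaining step is purely combinatorial bookkeeping: $(1+r)^k p_r(A)\leqslant 2(1+r)^k f_r(A)\leqslant 2\,|||f(A)|||_k\leqslant 2\,|||A|||_k'$, so $\|A\|_k\leqslant\|A\|+2\,|||f(A)|||_k\leqslant 2\,|||A|||_k'$, again with no index shift. Thus in fact the two systems satisfy $|||A|||_k'\leqslant\|A\|_k\leqslant 2\,|||A|||_k'$ for every $k$, which is a much stronger statement than mere equivalence. I would then remark that since the norms are defined on $\bb{A}_{\rm loc}$ and both completions coincide by Proposition~\ref{Frechet algebra properties}, the induced Fréchet $\ast$-algebras agree; the multiplicative and $\ast$-compatibility properties transfer verbatim from the $|||\cdot|||_k'$ side.

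The only subtle point, and the one I would be most careful about, is the identity $\Pi_\Sigma(B)=B$ for $B\in\bb{A}(\Sigma^c)$ together with linearity giving $\Pi_\Sigma(A)-B=\Pi_\Sigma(A-B)$; this is stated in the excerpt right after the definition of the partial trace ($\Pi_\Lambda(A)=A$ for $A\in\bb{A}(\Lambda^c)$ and $\Pi_\Lambda$ linear and contractive), so it is available. A secondary technical caveat is the case $r=0$, where one should check that $p_0(A)=\|A-\omega_\infty(A)\mathbf{1}\|$ is controlled by $f_0(A)=\inf_{c\in\C}\|A-c\mathbf{1}\|$ by the same argument, using that $\omega_\infty$ is a state and $\Pi_{\Z^d}(A)=\omega_\infty(A)\mathbf{1}$, so $\|A-\omega_\infty(A)\mathbf{1}\|=\|\Pi_{\Z^d}(A-c\mathbf{1})+(A-\Pi_{\Z^d}(A))-(A-c\mathbf{1})+(A-c\mathbf{1})\|$ — more cleanly, $\|A-\omega_\infty(A)\mathbf{1}-(A-c\mathbf{1})\|=|c-\omega_\infty(A)|=|\omega_\infty(A-c\mathbf{1})|\leqslant\|A-c\mathbf{1}\|$, hence $p_0(A)\leqslant 2f_0(A)$ as well. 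I expect no genuine obstacle here; the lemma is a soft equivalence-of-seminorms statement, and the content is entirely in the contractivity of the partial traces.
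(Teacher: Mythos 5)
Your proof is correct and follows essentially the same route as the paper's: the easy inequality $f_r(A)\leqslant p_r(A)$ comes from $\Pi_{B_0(r)^c}(A)$ being an admissible competitor for $f_r$, and the reverse inequality $p_r(A)\leqslant 2f_r(A)$ follows from writing $A-\Pi_{B_0(r)^c}(A)=(A-B)+\Pi_{B_0(r)^c}(B-A)$ for a (near-)optimal $B\in\bb{A}(B_0(r))$ and using that partial trace is a contraction fixing $\bb{A}(B_0(r))$. The only cosmetic difference is that the paper exploits finite-dimensionality of $\bb{A}(\Lambda)$ to pick an exact minimizer $A_\Lambda$, whereas you work with near-optimal approximants — both are fine, and your explicit check of the $r=0$ case is a correct (if not strictly needed) extra precaution.
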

\proof
Since $f_r(A)\leqslant p_r(A)$ by construction, one immediately gets 
$|||A|||_k'\leqslant \|A\|_k$ for every $A\in\bb{A}_{\rm loc}$ and $k \in \N_0$. For the other inclusion, let us observe that for any finite $\Lambda \subseteq \Z^d$, the algebra $\bb{A}(\Lambda)$ is a closed finite dimensional subalgebra of $\bb{A}$. Thus, given $A \in \bb{A}$, there exists $A_\Lambda \in \bb{A}(\Lambda)$ such that
    $$\|A - A_\Lambda\|\; =\; \inf_{B \in \bb{A}(\Lambda)} \|A - B\|\;.
    $$
    Using this, we get 
  \[
  \|A - A_\Lambda\|\; \leqslant\;\|A - \Pi_{\Lambda^c}(A)\|
  \]
since $\Pi_{\Lambda^c}(A)$ is supported in $\bb{A}(\Lambda)$. 
Using the linearity of $\Pi_{\Lambda^c}$, the fact that it acts as the identity on $\bb{A}(\Lambda)$ and the usual triangle inequality, one gets
\[
\|A - \Pi_{\Lambda^c}(A)\|\; \leqslant\;||A - A_\Lambda|| + ||\Pi_{\Lambda^c}(A_\Lambda - A)||\;\leqslant\; 2||A - A_\Lambda||
\]
where the last equality is consequence of the fact that the partial trace cannot increase the norm of an operator. This shows that
$p_r(A)\leqslant2f_r(A)$ and in turn $\|A\|_k\leqslant 2|||A|||_k'$.\qed	 

\medskip

From these bounds and Proposition \ref{Frechet algebra properties}, we obtain that for any  $A,B \in \bb{A}_\infty$ and any $k \in \N_0$
$$||AB||_k \; \leqslant \; 3 \ ||A||_k \ ||B||_k.$$
For the involution, it follows directly from the definition that $||A^*||_k = ||A||_k$. 
Moreover, similarly as we have defined the functions $f_{j,r}$, we can also consider the family of functions $p_{j,r}(A)$ defined as
$$p_{j,r}(A) := \|A - \Pi_{B_j(r)^c}(A)\|.$$
In fact, the same estimates as those presented in Lemma~\ref{Equivalence bounds rapidly decaimiento} remain valid for $f_{j,r}$ and $p_{j,r}$.

\medskip

 Following  \cite{Kapustin-Sopenko-Local-Noether}, we further introduce the subspaces $\bb{D}_\sharp \subset\bb{A}_\sharp$, where $\sharp$ stands for ${\rm loc}$ or ${\infty}$, \etc, and $\bb{D}(\Lambda) \subset \bb{A}(\Lambda)$ for $\Lambda \in \s{P}(\Z^d)$. These spaces are made by elements such that: (i) $A^*=-A$ (anti-self-adjoint), and (ii) $\omega_\infty(A)=0$ (traceless). One has inclusions $\bb{D}(\Lambda) \subset\bb{D}_{\rm loc} \subset   \bb{D}_{\infty}\subset   \bb{D}$. All these spaces are real Lie algebras with respect to the commutator. The space $\bb{D}_{\infty}$ is the completion of $\bb{D}_{\rm loc}$ with respect to one of the systems of norms in Lemma \ref{Equivalence bounds rapidly decaimiento}. Similarly $\bb{D}$ is the completion of $\bb{D}_{\rm loc}$ with respect to the $C^*$-norm.
For any $A\in\bb{D}_{\infty}$,
\[
\|A\|\;=\;p_0(A)\;\leqslant\; 2f_0(A)\;\leqslant \;2|||f(A)|||_k
\] 
where the first equality is valid for traceless operator, the second is in the proof in Lemma~\ref{Equivalence bounds rapidly decaimiento} and the last one follows from the definition of  $|||\cdot|||_k$. Then,  for every traceless operator $A$, one gets that
\[
\frac{1}{3}|||A|||_k'\;\leqslant \;|||f(A)|||_k\;\leqslant \;|||A|||_k'
\]
where the second equality follows immediately from the definition. Therefore the topology on $\bb{D}_{\infty}$ can be induced by the simpler system of norms
$$
||A||'_k\; :=\; |||f_r(A)|||_k\;=\;\sup_{r \in\N_0}(1 + r)^k\; f_r(A)\;.
$$
 It turns out that $\bb{D}_{\infty}$ is a Fr\'echet-Lie algebra.
 
\begin{remark}[Pauli basis]\label{Remark: PauliBasis}
	For any finite subset $\Lambda := \{j_1, \dots, j_r\} \subset \Z^d$, let us construct a particular basis for the real subspace  $\bb{D}(\Lambda)$. First, note that the subspace ${\rm Mat}_{n_{j_i}}^{{\rm sh}}(\C)$ of skew-Hermitian $n_{j_i} \times n_{j_i}$ matrices has \emph{real} dimension $n_{j_i}^2$ and possesses a (non-unique) basis $\{E_k^{(j_i)}\}_{k=1,\ldots,n_{j_i}^2}$ such that $E_1^{(j_i)} = \ii{\bf 1}$ and ${\rm Tr}(E_k^{(j_i)}) = 0$ for $2 \leqslant k \leqslant  n_{j_i}^2$. We will refer to such matrices as \emph{(generalized) Pauli matrices}. Hence, the collection
$$
\left.\left\{E_{k_1}^{(j_1)} \otimes E_{k_2}^{(j_2)} \otimes \dots \otimes E_{k_r}^{(j_r)} \;\right|\; (k_1, \dots, k_r) \neq (1, \dots, 1)\right\}
$$
forms a basis for $\bb{D}(\Lambda)$. We will refer to any basis of this form as a \emph{(generalized) Pauli basis}.
 \hfill $\blacktriangleleft$
\end{remark}
 
\subsection{Automorphisms and Fr\'echet-type automorphisms}

A classical result in the theo\-ry of $C^*$-algebras is that every $*$-homomorphism $\pi$ between two $C^*$-algebras $\bb{B}$ and $\bb{C}$ is norm-decreasing, with equality if and only if $\pi$ is one-to-one, see \cite[Section 2.3]{Bratteli-Robinson-1}. Hence, if $\pi$ admits an inverse $\pi^{-1}$, then this inverse is automatically continuous. We refer to any invertible $*$-homomorphism $\alpha: \bb{A} \rightarrow \bb{A}$ as an automorphism, and denote the set of all automorphisms of $\bb{A}$ by ${\rm Aut}(\bb{A})$. Clearly, ${\rm Aut}(\bb A)$ is a group under composition.
In analogy, a $*$-homomorphism $\alpha:\bb{A}_\infty\to \bb{A}_\infty$ is a map which respects the $\ast$-algebra structure of $\bb{A}_\infty$. It is continuous if and only if for every $k\in\N_0$ there is a $m\in \N_0$ and a constant $C_k>0$ such that $\|\alpha(A)\|_k\leqslant C_k\|A\|_m$ for every $A\in\bb{A}_\infty$. A $*$-homomorphism $\alpha$ is called an automorphism of $\bb{A}_\infty$ if it is a continuous $*$-homomorphism with a continuous inverse $\alpha^{-1}$. The group of continuous automorphisms of $\bb{A}_\infty$ is denoted by ${\rm Aut}(\bb{A}_\infty)$. 

%

Before discussing the relation between ${\rm Aut}(\bb A_\infty)$ and ${\rm Aut}(\bb A)$, we introduce two fami\-lies of automorphisms of $\bb A$ that will play a role later on.

%
\begin{example}[Flip-type automorphisms]\label{ex:flip_a}
We start with a class of 1-dimensional automorphisms which will be relevant in the sequel. The spin chain is assumed to have isomorphic algebras for all sites, namely $\bb{A}_j \simeq {\rm Mat}_{n} (\C)$ for every $j\in\Z$.
Consider a label function $\zeta:\Z\to\Z$ such that: i) it is strictly increasing (hence injective), and ii)
$\Delta_\zeta(j):=\zeta(j+1)-\zeta(j)\geqslant 2$.
Therefore, for each $j\in\Z$ one has a unique pair $\zeta(j)$ and $\zeta(j+1) - 1$ and one can consider the map $\psi_\zeta: \bb{A} \rightarrow \bb{A}$ initially defined on simple tensors by exchanging the elements at sites $\zeta(j)$ and $\zeta(j+1) - 1$ for all {$j \in \Z$}, keeping the others fixed. More precisely, one has that $\psi_\zeta$ sends the element
\begin{align*}
\ldots \otimes A_{\zeta(j - 1)} \otimes \ldots \otimes A_{\zeta(j) - 1} \otimes A_{\zeta(j)} \otimes A_{\zeta(j) + 1} \otimes A_{\zeta(j) + 2} \otimes \ldots \otimes A_{\zeta(j + 1) - 1} \otimes \ldots 
\end{align*}
to 
\begin{align*}
\ldots \otimes A_{\zeta(j) - 1} \otimes \ldots \otimes A_{\zeta(j - 1)} \otimes  A_{\zeta(j + 1) - 1} \otimes A_{\zeta(j) + 1} \otimes A_{\zeta(j) + 2} \otimes \ldots \otimes A_{\zeta(j)} \otimes \ldots\;.
\end{align*}
This map extends linearly to any local observable $A\in \bb{A}_{\rm loc}$. By construction, $\psi_\zeta$ defines a $*$-homomorphism on $\bb{A}_{\rm loc}$. In fact, verifying that $\psi_\zeta(AB)=\psi_\zeta(A)\psi_\zeta(B)$ and $\psi_\zeta(A^*)=\psi_\zeta(A)^*$ is straightforward on monomials. Moreover, one gets that $\psi_\zeta^2 = {\rm id}$, which shows that $\psi_\zeta$ is involutive. It turns out that $\psi_\zeta$ can be extended to an automorphism of $\bb{A}$. First of all, notice that for any finite subset $\Lambda \subset \Z$, $\psi_\zeta$ can be restricted to an $*$-homomorphism $\psi_\zeta|_\Lambda: \bb{A}(\Lambda) \rightarrow \bb{A}_{\rm loc} \subset \bb{A}$. Since $\psi_\zeta|_\Lambda$ is injective, it follows that $\|\psi_\zeta|_\Lambda (A)\| = \|A\|$ for every finite $\Lambda$, and in turn $\|\psi_\zeta(A)\| = \|A\|$ for all $A \in \bb{A}_{\rm loc}$. Hence, by density, $\psi_\zeta$ extends to a $\ast$-automorphism $\psi_\zeta: \bb{A} \rightarrow \bb{A}$ such that  $\psi_\zeta^2 = {\rm id}$ still holds. Automorphisms of this type will be called \emph{generalized flips}. For instance,
the simple case $\zeta(j)=2j$ corresponds to the flip-automorphism which exchanges observables between (neighbouring) even and odd sites.
 \hfill $\blacktriangleleft$
\end{example}

\begin{example}[Shift-type automorphisms]\label{ex:shif_a}
As in the previous example, let us construct ano\-ther family of automorphisms on the one-dimensional spin lattice system $\bb{A}$ on $\Z$ which will be relevant in the sequel. Consider a strictly increasing (hence injective) function $\xi: \Z \rightarrow \Z$, and consider the map $\sigma_\xi: \bb{A} \rightarrow \bb{A}$ defined on simple tensors by shifting the elements at the site $\xi(j)$ to the site $\xi(j+1)$ for all $j\in\Z$. More precisely, one has that $\sigma_\xi$ sends
\begin{align*}
\ldots \otimes A_{\xi(j - 1)} \otimes \ldots \otimes A_{\xi(j)} \otimes A_{\xi(j) + 1} \otimes A_{\xi(j) + 2} \otimes \ldots \otimes A_{\xi(j + 1) - 1} \otimes \ldots 
\end{align*}
to
\begin{align*}
\ldots \otimes A_{\xi(j - 2)} \otimes \ldots \otimes A_{\xi(j - 1)} \otimes  A_{\xi(j) + 1} \otimes A_{\xi(j) + 2} \otimes \ldots \otimes A_{\xi(j + 1) - 1} \otimes A_{\xi(j)} \otimes \ldots\;.
\end{align*}
The argument which shows that $\sigma_\xi$ extends to  a 
$\ast$-automorphism of  $\bb{A}$ is similar to the one used in Example \ref{ex:flip_a}. Automorphisms of this type will be called \emph{generalized shifts}. The simplest case $\xi(j)=j$ correspond to the usual shift of the full spin chain.
\hfill $\blacktriangleleft$
\end{example}
\begin{remark}\label{Remark: Ex From Z to Z^d}
Both families of automorphisms, $\psi_\zeta$ and $\sigma_\xi$, introduced in Examples \ref{ex:flip_a} and \ref{ex:shif_a} respectively, were initially defined for the one-dimensional spin lattice $\Z$. However, these constructions can be naturally extended to higher-dimensional lattices. To do so, it is enough to identify $\Z$ with the sublattice $\Z \times \{0\} \times \cdots \times \{0\} \subset \Z^d$ and define the action of $\psi_\zeta$ and $\sigma_\xi$ on this sublattice as before, while letting them act trivially (i.e., as the identity) on the remainder of the lattice. 
\hfill $\blacktriangleleft$
\end{remark}	  
%
%

We now turn to the relation between automorphisms of the Fr\'echet algebra $\bb{A}_\infty$ and those of the algebra $\bb{A}$.
\begin{proposition}\label{Frechet continuity implies Operator continuity}
Let $\alpha: \bb{A}_\infty \rightarrow \bb{A}_\infty$ be an automorphism of the Fr\'echet algebra $\bb{A}_\infty$. Then, there exists a unique continuous extension $\widetilde{\alpha}: \bb{A} \rightarrow \bb{A}$, with $\widetilde{\alpha} \in {\rm Aut}(\bb{A})$.    
\end{proposition}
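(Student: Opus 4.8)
The plan is to show that $\alpha$, being continuous for the Fréchet topology, is in particular bounded with respect to the $\|\cdot\|_0'$-norm by a finite combination of the Fréchet norms, and then to exploit the density of $\bb{A}_{\rm loc}$ (equivalently $\bb{A}_\infty$) in $\bb{A}$ with respect to the $C^*$-norm to extend $\alpha$ by continuity. The key point is that the $C^*$-norm $\|A\|$ is \emph{dominated} by every Fréchet norm: indeed $\|A\| \leqslant \|A\| + \sup_{r}(1+r)^k p_r(A) = \|A\|_k$ for all $k$, so $\|\cdot\|$ is continuous on $\bb{A}_\infty$ for the Fréchet topology. By Fréchet continuity of $\alpha$, for $k=0$ there is an $m\in\N_0$ and $C>0$ with $\|\alpha(A)\|_0 \leqslant C\|A\|_m$ for all $A\in\bb{A}_\infty$; since $\|\alpha(A)\| \leqslant \|\alpha(A)\|_0$, this already gives a norm estimate, but $\|A\|_m$ is not controlled by $\|A\|$, so boundedness of $\alpha:(\bb{A}_\infty,\|\cdot\|)\to(\bb{A},\|\cdot\|)$ is \emph{not} immediate and in fact need not hold for the restriction in this crude form.

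So the actual argument should use that $\alpha$ is a $*$-\emph{homomorphism} of $C^*$-algebras first, and only invoke continuity at the end. First I would observe that $\alpha:\bb{A}_\infty\to\bb{A}_\infty\subset\bb{A}$ is a $*$-homomorphism between the $*$-algebra $\bb{A}_\infty$ and the $C^*$-algebra $\bb{A}$. By the standard $C^*$-inequality for $*$-homomorphisms into a $C^*$-algebra applied to the restriction $\alpha|_{\bb{A}(\Lambda)}:\bb{A}(\Lambda)\to\bb{A}$ for each finite $\Lambda$ — where $\bb{A}(\Lambda)$ is a full (unital) $C^*$-subalgebra of $\bb{A}_\infty$ — one gets $\|\alpha(A)\|\leqslant\|A\|$ for every $A\in\bb{A}(\Lambda)$, hence for every $A\in\bb{A}_{\rm loc}$. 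This is the mechanism already used in Example~\ref{ex:flip_a}: a $*$-homomorphism out of a finite-dimensional (hence complete) $C^*$-algebra is automatically norm-contractive. Therefore $\alpha$ restricted to $\bb{A}_{\rm loc}$ is $\|\cdot\|$-contractive, and since $\bb{A}_{\rm loc}$ is $\|\cdot\|$-dense in $\bb{A}$, it extends uniquely to a contractive linear map $\widetilde\alpha:\bb{A}\to\bb{A}$, which is again a $*$-homomorphism by continuity of the algebraic operations. The same argument applied to $\alpha^{-1}$ (which is also a continuous $*$-automorphism of $\bb{A}_\infty$ by hypothesis) yields a contractive $*$-homomorphism $\widetilde{\alpha^{-1}}:\bb{A}\to\bb{A}$, and $\widetilde\alpha\circ\widetilde{\alpha^{-1}}$ and $\widetilde{\alpha^{-1}}\circ\widetilde\alpha$ agree with the identity on the dense subalgebra $\bb{A}_{\rm loc}$, hence everywhere. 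Thus $\widetilde\alpha$ is invertible with continuous inverse, i.e. $\widetilde\alpha\in\mathrm{Aut}(\bb{A})$. Uniqueness of the extension is automatic from density and continuity.

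The main obstacle is the very first step: justifying that $\alpha$ is norm-contractive on $\bb{A}_{\rm loc}$ \emph{without} circularity. One must be careful that $\alpha$ is only assumed $*$-homomorphic on $\bb{A}_\infty$ (not a priori $\|\cdot\|$-continuous there), so one cannot directly quote the $C^*$-fact "$*$-homomorphisms are contractive" for $\alpha$ itself; the trick is that its restriction to each finite-dimensional $\bb{A}(\Lambda)$ \emph{is} a genuine $*$-homomorphism of $C^*$-algebras (the domain being complete), and this is enough since $\bb{A}_{\rm loc}=\bigcup_\Lambda\bb{A}(\Lambda)$. One should also double-check that $\alpha$ maps $\bb{A}(\Lambda)$ into $\bb{A}$ — true, since $\alpha(\bb{A}(\Lambda))\subset\bb{A}_\infty\subset\bb{A}$ — and that $\alpha$ preserves the unit, so the $C^*$-estimate applies to unital $*$-homomorphisms; here one uses that $\alpha$, being an algebra automorphism of $\bb{A}_\infty$ (which is unital with unit $\mathbf{1}$), satisfies $\alpha(\mathbf{1})=\mathbf{1}$. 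Everything after the contractivity step is routine density-and-continuity bookkeeping.
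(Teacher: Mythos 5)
Your proof is correct and takes essentially the same approach as the paper: restrict $\alpha$ to the finite-dimensional $C^*$-subalgebras $\bb{A}(\Lambda)$, invoke automatic contractivity of $*$-homomorphisms of $C^*$-algebras to get $\|\alpha(A)\|\leqslant\|A\|$ on $\bb{A}_{\rm loc}$, extend by $\|\cdot\|$-density, and repeat for $\alpha^{-1}$ to check inversibility on a dense subalgebra. The only detail the paper spells out that you compress into "routine bookkeeping" is the verification that $\widetilde\alpha$ agrees with $\alpha$ on all of $\bb{A}_\infty$ (not merely on $\bb{A}_{\rm loc}$), which uses $\|\cdot\|\leqslant\|\cdot\|_k$ and Fréchet continuity of $\alpha$; this is indeed straightforward and your argument is sound.
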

\proof
	First, observe that for any $\Lambda \in \s{P}_0(\Z^d)$, the $*$-homomorphism $\alpha$ can be restricted to the subalgebra $\bb{A}(\Lambda)$. Using the inclusion map $\bb{A}_\infty \xhookrightarrow{} \bb{A}$, this restriction induces a $*$-homomorphism of $C^*$-algebras $\alpha|_\Lambda: \bb{A}(\Lambda) \rightarrow \bb{A}$. 
Therefore, one has that $\|\alpha(A)\|=\|\alpha|_\Lambda(A)\|
\leqslant\|A\|$ for every $A\in \bb{A}(\Lambda)$. Since this is true for all finite $\Lambda$, one concludes that $\|\alpha(A)\|
\leqslant\|A\|$ for every $A\in \bb{A}_{\rm loc}$. The continuity and the density of 	$\bb{A}_{\rm loc}$ implies that $\alpha$ can be uniquely extended to a global $*$-homomorphism $\widetilde{\alpha}: \bb{A} \rightarrow \bb{A}$.

It remains to prove that $\widetilde{\alpha}|_{\bb{A}_\infty} = \alpha$. Let $A \in \bb{A}_\infty$, and let $\{A_n\}_{n \in \N}$ be a sequence in $\bb{A}_{\rm loc}$ converging to $A$ in the Fr\'echet topology. By continuity of $\alpha$, for any $k \in \N_0$,
\[
\lim_{n\to\infty}\|\alpha(A) - \alpha(A_n)\|_k \;=\;0\;.
\]
 Since $\|\cdot\| \leqslant \|\cdot\|_k$, one concludes that
    \[
    \lim_{n\to\infty}\|\alpha(A) - \widetilde{\alpha}(A_n)\| \;=\;0
    \]
From the inequality
\[
\|\alpha(A) - \widetilde{\alpha}(A)\|\;\leqslant\;\|\alpha(A) - \widetilde{\alpha}(A_n)\|+\|\widetilde{\alpha}(A_n)- \widetilde{\alpha}(A)\|
\]    
and the continuity of $\widetilde{\alpha}$ one infers that  $\alpha(A) = \widetilde{\alpha}(A)$.

The same argument works for the inverse $*$-homomorphism $\alpha^{-1}$ which can be extended to a unique 
	$*$-homomorphism $\widetilde{\alpha'}: \bb{A} \rightarrow \bb{A}$ which coincides with $\alpha^{-1}$ over $\bb{A}_\infty$.
	It remains to show that $\widetilde{\alpha'}$ is the inverse of $\widetilde{\alpha}$. For that, observe that $(\widetilde{\alpha'}\circ\widetilde{\alpha})(A)=\widetilde{\alpha'}(\alpha(A))= {\alpha^{-1}}(\alpha(A))=A$ for every $A\in \bb{A}_{\rm loc}$
where we used that $\alpha(A)\in \bb{A}_\infty$. Then $\widetilde{\alpha'}\circ\widetilde{\alpha}$ acts as  the identity in 
$\bb{A}_{\rm loc}$ and by density $\widetilde{\alpha'}\circ\widetilde{\alpha}={\rm Id}_{\bb{A}}$. A similar argument shows that $\widetilde{\alpha}\circ\widetilde{\alpha'}={\rm Id}_{\bb{A}}$
which implies that $\widetilde{\alpha'}=\widetilde{\alpha}^{-1}$.
\qed

\medskip

The result above states that any Fr\'echet automorphism admits a unique \emph{lift} to an automorphism of the full algebra.  
We can express this result with the symbol $\imath:{\rm Aut}(\bb{A}_\infty)\hookrightarrow {\rm Aut}(\bb{A})$ where $\imath$ is the \emph{lift map} described above. In the next section we will study under which condition an element in ${\rm Aut}(\bb{A})$ induces a Fr\'echet automorphism in ${\rm Aut}(\bb{A}_\infty)$. 

\section{Restriction of automorphisms to the Fr\'echet $\ast$-algebra}
In this section we want to explore the opposite direction of Proposition~\ref{Frechet continuity implies Operator continuity}. More precisely, we want to find conditions that allow to restrict an automorphism $\alpha \in {\rm Aut}(\bb{A})$ to a  Fr\'echet automorphism of the $\ast$-algebra $\bb{A}_\infty$.  
For this purpose we will go through the notions of  \emph{almost locality preserving automorphism} and \emph{Lieb-Robinson bound condition}.

\subsection{Almost locality preserving automorphisms}\label{sec-ALP}
The following discussion was first introduced in \cite{Kapustin-Sopenko-Lieb} in the context of one-dimensional spin lattice systems. It provides a sufficient condition under which automorphisms of the quasi-local algebra $\bb{A}$ restrict to automorphisms of the Fr\'echet algebra $\bb{A}_\infty$. 

To begin, given an automorphism $\alpha \in {\rm Aut}(\bb{A})$, we introduce the following non-negative quantity:
\begin{equation}\label{Function H_alpha}
	 H_\alpha (s, r)\; := \;\sup_{p \in \Z^d } \left(\sup_{\substack{A \in \bb{A}(B_p(s)) \\ ||A|| = 1}} \left(\inf_{B \in \bb{A}(B_p(s + r))} ||\alpha(A) - B||\right)\right)\;,
\end{equation}
defined for $s, r \geqslant 1$. For convenience, we set $H_\alpha(s, 0) := 1$ for all $s \geqslant 1$. 
\begin{definition}[Almost locality preserving automorphisms]\label{ALP automorphism firts definition}
We will say that an automorphism $\alpha \in {\rm Aut}(\bb{A})$ is \emph{local} if there exists a $R$ such that $ H_\alpha (s, r)=0$ for every $r>R$ and $s \geqslant 1$; in that case $R$ is called the range of locality. We will say that $\alpha$ is \emph{almost locality preserving} if there exists a function $f^{(\alpha)} \in \s{S}(\N_0)$ such that
 $$H_\alpha (s,r) \leqslant s^{d-1} f^{(\alpha)} (r).$$
\end{definition}
According to the discussion in Appendix \ref{app:Rap_seq}, one can find a function $\Tilde{f}^{(\alpha)} \in \s{S}^+(\N_0)$ such that $f^{(\alpha)} (r) \leqslant \Tilde{f}^{(\alpha)}$ for all $r$. Hence, without lossing of generality, we will say that an automorphism $\alpha$ is almost locality preserving if the function $f^{(\alpha)}$ belongs to $\s{S}^+(\N_0)$.

\medskip

We will denote by ${\rm L\text{-}Aut}(\bb{A})$ the set of local  automorphisms (of any range). If $\alpha$ is a local automorphism of range $R$, then for any $A \in \bb{A}(B_p(s))$ it follows that $\alpha(A)\in \bb{A}(B_p(s + R))$ for every $s \geqslant 1$. Therefore $\alpha$ maps local observables to local observa\-bles. These automorphisms are also known in the literature as quantum cellular automata (QCA), see in particular~\cite{GrossNesmeVogtsWerner}.
We will denote by ${\rm ALP\text{-}Aut}(\bb{A})$
the set of almost locality preserving  automorphisms
of $\bb{A}$. In this case, $\alpha$ maps local observables to quasi-local ones but with ``tails'' that decay rapidly in space. Clearly ${\rm L\text{-}Aut}(\bb{A})\subset {\rm ALP\text{-}Aut}(\bb{A})$. To better understand the structure and implications of this class of automorphisms, we first present a simple lemma that will be useful in analyzing their proper\-ties. We then provide two illustrative examples: one satisfying the ALP condition and another that explicitly violates it.

\begin{lemma}\label{lemma: ALP implies Diagonal}
	Let $\alpha \in \operatorname{Aut}(\bb{A})$ be an almost locality preserving automorphism. Let $D^{(\alpha)}: \N_0 \rightarrow \R_{\geqslant 0}$ be the function defined by
	$$
	D^{(\alpha)} (r) := 
	\begin{cases} 
		H_{\alpha} (r,r) & \text{if } r \geqslant 1, \\
		1 & \text{if } r = 0.
	\end{cases}
	$$
	Then, $D^{(\alpha)} \in \s{S}(\N_0)$.
\end{lemma}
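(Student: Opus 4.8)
The goal is to show that the diagonal sequence $D^{(\alpha)}(r) = H_\alpha(r,r)$ belongs to $\s{S}(\N_0)$, i.e.\ that for every $k \in \N_0$ the quantity $(1+r)^k H_\alpha(r,r)$ stays bounded in $r$. The starting point is the hypothesis that $\alpha$ is almost locality preserving, so there is a function $f^{(\alpha)} \in \s{S}^+(\N_0)$ with $H_\alpha(s,r) \leqslant s^{d-1} f^{(\alpha)}(r)$ for all $s,r \geqslant 1$. Specializing to $s = r$ immediately gives $D^{(\alpha)}(r) \leqslant r^{d-1} f^{(\alpha)}(r)$ for $r \geqslant 1$.

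Then the plan is simply to absorb the polynomial prefactor $r^{d-1}$ into the rapid decay of $f^{(\alpha)}$. Fix $k \in \N_0$. Since $f^{(\alpha)} \in \s{S}^+(\N_0) \subset \s{S}(\N_0)$, the norm $|||f^{(\alpha)}|||_{k+d-1}$ is finite, so $f^{(\alpha)}(r) \leqslant (1+r)^{-(k+d-1)} |||f^{(\alpha)}|||_{k+d-1}$ for all $r$. Combining, for $r \geqslant 1$ one gets
\[
(1+r)^k D^{(\alpha)}(r) \;\leqslant\; (1+r)^k\, r^{d-1}\, f^{(\alpha)}(r) \;\leqslant\; (1+r)^{k+d-1}\, f^{(\alpha)}(r) \;\leqslant\; |||f^{(\alpha)}|||_{k+d-1}\;,
\]
using $r^{d-1} \leqslant (1+r)^{d-1}$. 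For $r = 0$ the value $D^{(\alpha)}(0) = 1$ contributes only the finite term $(1+0)^k \cdot 1 = 1$ to the supremum. Hence $|||D^{(\alpha)}|||_k \leqslant \max\{1, |||f^{(\alpha)}|||_{k+d-1}\} < \infty$ for every $k$, which is exactly the statement that $D^{(\alpha)} \in \s{S}(\N_0)$.

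The argument is entirely routine and there is no real obstacle: the only thing to be slightly careful about is the separate treatment of $r=0$, where $H_\alpha$ is not defined by the supremum formula but set by convention to $1$, and the bookkeeping of which seminorm index of $f^{(\alpha)}$ one needs (namely $k+d-1$ rather than $k$) in order to control $(1+r)^k D^{(\alpha)}(r)$. I would present this as a two-line computation.
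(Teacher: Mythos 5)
Your proof is correct and follows exactly the paper's approach: specialize the ALP bound to $s=r$ to get $D^{(\alpha)}(r)\leqslant r^{d-1}f^{(\alpha)}(r)$, then absorb the polynomial factor into the rapid decay of $f^{(\alpha)}$. You actually spell out the seminorm bookkeeping (the index shift to $k+d-1$ and the $r=0$ case) that the paper leaves implicit as ``a straightforward observation.''
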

\begin{proof}
	It is an straightforward observation. From Definition~\ref{ALP automorphism firts definition}, one has that
	$$D^{(\alpha)}(r) \; =\; H_\alpha (r,r) \leqslant r^{d-1} f^{(\alpha)} (r).$$
	Since $f^{(\alpha)}$ belongs to $\s{S}^+(\N_0)$, it follows that the function $D^{(\alpha)}$ belongs to $\s{S}(\N_0)$. Furthermore, from Appendix \ref{app:Rap_seq}, we can assume without loss of generality that $D^{(\alpha)} \in \s{S}^+(\N_0)$.
\end{proof}
\begin{remark}

It is important to mention the physical interpretation of the bound in Definition~\ref{ALP automorphism firts definition}. The ``tail'' of the evolved operator $\alpha(A)$ outside $B_p(s+r)$ could be interpreted as a boundary effect. It is primarily due to the portion of $A$ supported near the surface of $B_p(s)$.

\noindent The $s^{d-1}$ prefactor reflects this, as it scales with the surface area of the ball $B_p(s)$. Because of this polynomial factor in $s$, the estimate $H_\alpha (s,r) \leqslant s^{d-1} f^{(\alpha)} (r)$ is generally loose (or trivial) when the distance $r$ is small relative to the radius $s$. 

\noindent The definition is most meaningful in the large $r$ regime, where the rapid decay of $f^{(\alpha)}(r)$ must be strong enough to overcome this $s^{d-1}$ surface-area factor. Lemma \ref{lemma: ALP implies Diagonal} provides a first, non-trivial check of this condition. It confirms that the decay is strong enough to control the bound even in the ``diagonal'' case $r=s$, where the distance of interest scales precisely with the radius of the ball.
\hfill $\blacktriangleleft$
\end{remark}

\begin{example}
Examples \ref{ex:flip_a} and \ref{ex:shif_a} provide examples of automorphisms in ${\rm L\text{-}Aut}(\bb{A})$. We consider the flip-type $\ast$-automorphism $\psi_\zeta: \bb{A} \rightarrow \bb{A}$ induced by the function  $\zeta(j) = Cj$ for a given natural number $C\geqslant 2$. For any $p \in \Z$, $s \geqslant 1$ and $A \in \bb{A}(B_p(s))$, it follows that $\psi_\zeta(A)$ is supported at most in $B_p(s + C-1)$. Hence, for $r > C-1$, the function $H_{\psi_\zeta}(s, r)$ vanishes.
This shows that $\psi_\zeta$ is a local automorphism with  range $R=C-1$. The shift-type automorphisms induced by label functions of the form $\xi(j)=Cj$  for a given natural number $C\geqslant 1$ are also local of finite range. \hfill $\blacktriangleleft$
\end{example}
\begin{example}\label{Example Not ALP}
With an adequate choice of label function, flip-type automorphisms can also violate the ALP-condition. Let $\zeta: \Z \rightarrow \Z$ be a strictly increasing polynomial function of degree at least 2. Then the difference $\Delta_\zeta(j)=\zeta(j + 1)  - \zeta(j)$ grows at least linearly with $|j|$ when $j\to\pm\infty$. Furthermore, there exists $j_\ast \in \Z$ such that $\zeta(j_\ast) \leq 0$ and $\zeta(j) > 0$ for all $j > j_\ast$. This implies that the automorphism $\psi_\zeta$ {spreads} the support $[\zeta(j_\ast)-1, \zeta(j)] \cap \Z$ into the larger support $[\zeta(j_\ast-1), \zeta(j + 1) - 1] \cap \Z$.

\noindent With these observations in mind, let us show that, for any $s \geqslant 1$ and $r \geqslant 0$, the function $H_{\psi_\zeta}(s, r)$ in Definition \ref{ALP automorphism firts definition} is bounded below by $1/2$. 
Fix $s \geqslant 1$ and take $r \in \N_0$. Since $\Delta_\zeta(j)$ grows at least linearly, there exists a $j' \geqslant j_\ast$ (depending on $s$ and $r$) such that $s + r <  \Delta_\zeta(j') - 1$. 
Pick  $A \in \bb{A}(\{\zeta(j')\}))$ such that ${\rm Tr}(A)=0$ and $||A|| = 1$. Consequently, $\psi_\zeta(A)\in \bb{A}(\{\zeta(j' + 1) - 1\})$. 

\noindent In order to simplify the notation, let us introduce $B_{s,r}:=B_{\zeta(j')}(s + r)$, the ball of center $\zeta(j')$ and radius $s+r$. We now observe that
\[
\begin{aligned}
	\inf_{B \in \bb{A}\left(B_{s,r}\right)} \|\psi_\zeta(A) - B\| \;&\geqslant\;  \frac{1}{2} \|\psi_\zeta (A) - \Pi_{B_{s,r}^c}(\psi_\zeta (A))\|\;,
\end{aligned}
\]
where the inequality is a restatement of $p_r(A)\leqslant 2f_r(A)$ proved in Lemma \ref{Equivalence bounds rapidly decaimiento}.
Since $\psi_\zeta(A)$ is supported in $\{\zeta(j' + 1) - 1\}\subset B_{s,r}^c$ one gets that 
\[
\Pi_{B_{s,r}^c}(\psi_\zeta (A))\; =\; {\rm Tr}(A)\;=\;0,
\]
and, in turn
\[
\inf_{B \in \bb{A}\left(B_{s,r}\right)} \|\psi_\zeta(A) - B\| \;\geqslant\; \frac{1}{2} \|\psi_\zeta(A)||\; =\; \frac{1}{2}\;.
\]
Hence $H_{\psi_\zeta}(s, r) \geqslant 1/2$ for all $s \geqslant 1$ and $r \geqslant 0$, and the automorphism is not in ${\rm ALP\text{-}Aut}(\bb{A})$.
The simplest example that meets the conditions above is the function $\zeta(j)=2{\rm sgn}(j)\;j^2$, with $j_*=0$ and $j'>\max\{(s + r - 1)/4,0\}$.
  \hfill $\blacktriangleleft$
\end{example}
\begin{remark}\label{Remark: Ex From Z to Z^d not ALP}
As pointed out in Remark \ref{Remark: Ex From Z to Z^d}, the automorphism in Example \ref{Example Not ALP} can be extended trivially to the spin lattice $\Z^d$, yielding examples of maps that do not satisfy the ALP condition in any dimension.
\hfill $\blacktriangleleft$
\end{remark}
The next result, that generalizes \cite[Lemma A.2]{Kapustin-Sopenko-Lieb},  shows that almost-locality preserving automorphisms of $\bb{A}$ behave well with respect to the Fr\'echet topology of $\bb{A}_\infty$. 
\begin{proposition}\label{Proposicion Continuidad ALP en Frechet}
    Let $\alpha \in {\rm ALP\text{-}Aut}(\bb{A})$. Then $\alpha(\bb{A}_\infty)\subseteq \bb{A}_\infty$ and the restriction
    $\alpha|_{\bb{A}_\infty}$
is  a continuous $\ast$-homomorphism in the Fr\'echet topology of $\bb{A}_\infty$.
\end{proposition}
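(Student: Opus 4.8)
The plan is to estimate $\|\alpha(A)\|_k$ for $A \in \bb{A}_\infty$ in terms of the norms $\|A\|_m$, using the seminorms $f_r$ (equivalently $p_r$) and the decay function $f^{(\alpha)} \in \s{S}^+(\N_0)$ furnished by the ALP hypothesis. First I would reduce to $A \in \bb{D}_\infty$, \ie traceless anti-self-adjoint elements: since $A = \tfrac{1}{2}(A - A^*) + \tfrac{1}{2}(A+A^*)$ and each summand can be handled after subtracting its $\omega_\infty$-value (using that $\alpha$ preserves $\omega_\infty$, being the unique tracial state on $\bb{A}$), it suffices to control the simpler norms $\|A\|_k' = |||f(A)|||_k$ on $\bb{D}_\infty$. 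So the core estimate is: bound $f_r(\alpha(A))$ in terms of $f_s(A)$ for suitable $s$.

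The key step is a telescoping/approximation argument. Fix $A \in \bb{D}_\infty$ with, say, $\|A\|=1$; for each $s \geqslant 1$, pick $A_s \in \bb{A}(B_0(s))$ with $\|A - A_s\| = f_s(A)$ (finite dimensionality gives the infimum is attained), and without loss of generality $A_s$ traceless. By the ALP bound applied to $A_s/\|A_s\|$ at center $p=0$ with $\|A_s\| \leqslant 1 + f_s(A) \leqslant 2$, there is $B_{s} \in \bb{A}(B_0(s + r))$ with $\|\alpha(A_s) - B_{s}\| \leqslant 2\, s^{d-1} f^{(\alpha)}(r)$. Then
\[
f_{s+r}(\alpha(A)) \;\leqslant\; \|\alpha(A) - B_{s}\| \;\leqslant\; \|\alpha(A) - \alpha(A_s)\| + \|\alpha(A_s) - B_{s}\| \;\leqslant\; f_s(A) + 2\, s^{d-1} f^{(\alpha)}(r),
\]
using that $\alpha$ is norm-preserving on $\bb{A}$. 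Now, given a target radius $t$, choose the splitting $s = \lceil t/2 \rceil$, $r = t - s \geqslant t/2 - 1$; then $f_s(A) \leqslant (1+s)^{-m}\|A\|_m' \leqslant C\, t^{-m}\|A\|_m'$ for every $m$, and $s^{d-1} f^{(\alpha)}(r) \leqslant C\, t^{d-1}\, |||f^{(\alpha)}|||_{m+d-1}\, (1+r)^{-(m+d-1)} \leqslant C'\, t^{-m}\, |||f^{(\alpha)}|||_{m+d-1}$. Hence $(1+t)^k f_t(\alpha(A)) \lesssim t^{k-m}(\|A\|_m' + |||f^{(\alpha)}|||_{m+d-1}\|A\|)$; taking $m = k$ (or $m = k$ in the first term and noting the second term is already controlled by $\|A\| \le \|A\|_0'$-type quantities) and the supremum over $t$, one gets $\|\alpha(A)\|_k' \leqslant C_k \|A\|_k'$ with $C_k$ depending only on $d$, $C_d$, and the seminorms $|||f^{(\alpha)}|||_{k+d-1}$ of the fixed decay function. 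This shows both $\alpha(\bb{A}_\infty) \subseteq \bb{A}_\infty$ (the sequence $t \mapsto f_t(\alpha(A))$ lies in $\s{S}^+$) and continuity. That $\alpha|_{\bb{A}_\infty}$ is a $\ast$-homomorphism is immediate, since it is the restriction of the $\ast$-automorphism $\alpha$ on $\bb{A}$.

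The main obstacle is bookkeeping the polynomial prefactor $s^{d-1}$ against the rapid decay of $f^{(\alpha)}$: one must split the radius so that $s$ is comparable to $r$ (the ``diagonal'' regime flagged by Lemma~\ref{lemma: ALP implies Diagonal}), and then verify that for \emph{every} $k$ the loss of $d-1$ powers is absorbed by using the $(k+d-1)$-th seminorm of $f^{(\alpha)}$ and the $k$-th norm of $A$ — this is exactly why $\s{S}^+(\N_0)$ is closed under multiplication by polynomials and why the index shift $m \rightsquigarrow m+d-1$ is harmless. A minor technical point to address carefully is the passage from $f_{s+r}$ back to $f_t$ for arbitrary $t$ (handled by monotonicity of $r \mapsto f_r$ and the freedom in choosing the split), and the reduction to traceless elements, which relies on $\omega_\infty \circ \alpha = \omega_\infty$ and the elementary bound $\|A - \omega_\infty(A)\mathbf{1}\|_k \leqslant 2\|A\|_k$.
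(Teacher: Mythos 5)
Your proof is correct and follows essentially the same strategy as the paper's: approximate $A$ by a local element supported in a ball of radius $\sim t/2$, apply the ALP bound to grow the radius by $\sim t/2$, and control the resulting diagonal error $s^{d-1}f^{(\alpha)}(r)$ with $s\approx r$ using the rapid decay of $f^{(\alpha)}$. The paper works directly with the seminorms $p_r$ and the partial-trace approximants $\Pi_{B_0(r)^c}(A)$ on all of $\bb{A}_\infty$ (so no detour through $\bb{D}_\infty$ or the extension Lemma~\ref{Lemma: Extension}) and packages the diagonal estimate as $D^{(\alpha)}(r)=H_\alpha(r,r)$ from Lemma~\ref{lemma: ALP implies Diagonal}, but these are cosmetic differences.
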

\proof 
We begin by recalling the function $D^{(\alpha)}: \N_0 \rightarrow \R_{\geqslant 0}$ introduced in Lemma \ref{lemma: ALP implies Diagonal}. From the proof of that lemma (and Appendix \ref{app:Rap_seq}), we can assume without loss of generality that $D^{(\alpha)} \in \s{S}^+(\N_0)$.

\noindent Let $A \in \bb{A}_\infty$. To simplify the notation, let $A|_r := \Pi_{B_0(r)^c}(A)$, with $r \in \N_0$. From the definition of $p_r(A)$ one gets
$
\|A|_r\|-\|A\|\;\leqslant\;\big|||A|_r||-||A||\big|\;\leqslant\;p_r(A)
$
and, in turn,
\begin{equation}\label{eq;norm of a pr}
||A|_r|| \leq p_r(A) + ||A||.
\end{equation}
Similarly, for any $r \in \mathbb{N}_0$, one has that   
    \[
    \begin{aligned}
    \big|p_{2r}(\alpha(A)) - p_{2r}(\alpha(A|_r))\big|\; &\leq\; p_{2r} (\alpha(A) - \alpha(A|_r))\; \leq\; 2 \|\alpha(A - A|_r)\|\\
     &=\; 2 \|A - A|_r\|\; =\; 2p_r(A)
    \end{aligned}
    \]
where the first inequality is obtained from the triangle inequality
along with the definition of $p_{2r}$, and
the second inequality follows from the the bound obtained in proof of Lemma \ref{Equivalence bounds rapidly decaimiento}.    
    Thus, one gets that
    \begin{equation}\label{eq;mm0}
    p_{2r}(\alpha(A))\; \leq\; p_{2r}(\alpha(A|_r)) + 2p_r(A)
    \end{equation}
        for every  $r\in\N_0$. The next task is to bound the first term on the right-hand side above. From the almost-locality preserving property, by fixing the ball $B_0(r)$ where the operator $A|_r$ is supported, one gets the  bound
    \begin{equation}\label{eq;mm1}
    p_{2r}(\alpha(A|_r))\; \leqslant\; 2\inf_{B \in \bb{A}(B_{0}(2r))} \|\alpha(A|_r) - B\|\; \leq\; 2D^{(\alpha)}(r)\; \|A|_r\|\;,
    \end{equation}
    where we used again the proof of Lemma \ref{Equivalence bounds rapidly decaimiento}.    Therefore, one infers from \eqref{eq;norm of a pr}, \eqref{eq;mm0} and \eqref{eq;mm1} that
\begin{equation}\label{eq;mm2}
	p_{2r}(\alpha(A)) \leqslant 2p_r(A) (D^{(\alpha)}(r) + 1) + 2D^{(\alpha)}(r) \|A\|\;.
\end{equation}
The same bound holds $p_{2r + 1}(\alpha(A))$, hence for all $r$, since $r\mapsto p_{2r + 1}(\alpha(A))$ is monotoni\-cally decreasing. Now $A\in\bb{A}_\infty$ implies that $p(A)\in\s{S}^+(N_0)$, and $\alpha\in {\rm ALP\text{-}Aut}(\bb{A})$ implies that $D^{(\alpha)}\in\s{S}^+(N_0)$. Hence $\alpha(A)\in \bb{A}_\infty$. 
 
It remains to prove the continuity. Since \[
 \sup_{r\in\N_0}\;(1+r)^k p_r(\alpha(A))\;\leqslant\;\sup_{r\in\N_0}\;(2 + 2 r)^k p_{2r} (\alpha(A)) \; ,
 \]
 for every $k\in\N_0$, one has
 \[
  \sup_{r\in\N_0}\;(1+r)^k p_r(\alpha(A))\; \leq\; 2^{k+1} \left(|||D^{(\alpha)}|||_k\|A\|+\sup_{r\in\N_0}\;(1+r)^k p_{r}(A)(1+|||D^{(\alpha)}|||_k)\right)\;.
 \]
By using the definition \eqref{nor_001} one gets
\begin{equation}\label{eq;mm3}
	\|\alpha(A)\|_k\;\leqslant\;C_\alpha(k) \|A\|_k\;,\qquad k\in\N_0
\end{equation}
 with constant $C_\alpha(k):=2^{k+1}(1+|||D^{(\alpha)}|||_k)$. This implies the continuity of $\alpha$ with respect to the Fr\'echet topology.
 \qed

\medskip

Although Proposition \ref{Proposicion Continuidad ALP en Frechet} tells us that an almost locality-preserving automorphism $\alpha$ of $\bb{A}$ can be restricted to a continuous homomorphism in the Fr\'echet $\ast$-algebra $\bb{A}_\infty$, one cannot conclude yet that this homomorphism is indeed an automorphism. For instance, the restriction $\alpha|_{\bb{A}_\infty}$ may not be surjective. We will come back to this question and answer it positively in Section~\ref{sec:restriction is automorphism}, see Corollary~\ref{corol_aut-F}.  

\medskip

First, we note that Proposition \ref{Proposicion Continuidad ALP en Frechet} establishes that ALP condition is a \emph{sufficient}  condition for Fr\'echet continuity. It turns out that it is however \emph{not necessary}, as will be illustrated with the examples of Section~\ref{Example Not ALP-2}.
\begin{proposition}\label{prop_no_go}
There is a Fr\'echet continuous automorphism $\alpha: \bb{A}_\infty \rightarrow \bb{A}_\infty$ such that $\alpha$ is not an almost locality preserving automorphism.
\end{proposition}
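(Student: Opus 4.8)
The plan is to construct an explicit counterexample by combining a generalized flip $\psi_\zeta$ (or shift $\sigma_\xi$) of the type in Examples~\ref{ex:flip_a}--\ref{ex:shif_a} with a spatial rescaling, so that the automorphism moves observables a bounded \emph{metric} distance (hence stays Fr\'echet continuous) but moves them across an unbounded number of \emph{sites} in a growing region (hence violates the ALP bound, whose prefactor $s^{d-1}$ cannot absorb a genuinely unbounded spread). Concretely, one works on $\Z$ (and extends trivially to $\Z^d$ by Remark~\ref{Remark: Ex From Z to Z^d}) with a label function whose gaps $\Delta_\zeta(j)$ grow, as in Example~\ref{Example Not ALP}; the key new ingredient is to arrange the underlying metric so that the sites being exchanged are at bounded metric distance, even though combinatorially the ball $B_p(s)$ between them contains many sites. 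Alternatively, and perhaps more cleanly within the stated framework where $\Z^d$ carries the Euclidean metric, one keeps the metric fixed and instead constructs $\alpha$ as an infinite product of local ``swaps'' $\alpha = \prod_n \psi_n$ where $\psi_n$ exchanges two single sites $a_n, b_n$ with $d(a_n,b_n)\to\infty$ but with the pairs placed sparsely enough (e.g. $a_n, b_n$ near distance $2^n$ from the origin) that the resulting map is still Fr\'echet continuous.

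First I would make precise the candidate $\alpha$ and verify it is a well-defined element of ${\rm Aut}(\bb{A})$: since each factor $\psi_n$ is a local automorphism supported in a finite region, and the supports can be taken mutually disjoint, the infinite product converges in the point-norm topology on $\bb{A}_{\rm loc}$ and extends to $\bb{A}$ exactly as in Example~\ref{ex:flip_a}, with $\alpha^{-1} = \alpha$ when the $\psi_n$ are involutive. Second, I would estimate $f_r(\alpha(A))$ for $A \in \bb{A}_\infty$: because a local observable $A$ supported near the origin is affected only by those swaps $\psi_n$ whose support it meets, and because the swaps are spatially separated, the tails $f_r(\alpha(A))$ can be bounded in terms of $f_{r'}(A)$ for a comparable $r'$, up to a constant depending only on the geometry of the pair placement — this gives $\|\alpha(A)\|_k \le C_k \|A\|_m$ for suitable $m$, hence Fr\'echet continuity of $\alpha$ and $\alpha^{-1}$, so $\alpha \in {\rm Aut}(\bb{A}_\infty)$. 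Third, I would show $\alpha \notin {\rm ALP\text{-}Aut}(\bb{A})$ by the same mechanism as Example~\ref{Example Not ALP}: picking $A$ a traceless norm-one observable at the single site $a_n$, one has $\alpha(A) \in \bb{A}(\{b_n\})$, and choosing $n$ large enough that $d(a_n,b_n) > s+r$, the partial-trace lower bound $f_r \ge \tfrac12 p_r$ forces $H_\alpha(s,r) \ge \tfrac12$ for all $s\ge 1$, $r\ge 0$; since $s^{d-1} f^{(\alpha)}(r) \to 0$ as $r\to\infty$ for any $f^{(\alpha)}\in\s{S}^+(\N_0)$ with $s$ fixed, no ALP bound can hold.

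The main obstacle is the second step — proving Fr\'echet continuity while the violating mechanism of the third step is active. The tension is that the third step needs the swaps to move observables arbitrarily far (large $d(a_n,b_n)$), while continuity needs the effect on $\|\cdot\|_k$ to be controlled. The resolution is spatial sparsity: a single observable $A$, localized near the origin with rapidly decaying tails, has only an exponentially small ``footprint'' on the far-away pairs $(a_n,b_n)$ with $n$ large, so although $\alpha$ can displace a \emph{particular far-away} observable badly, it does not degrade the decay profile of a \emph{given} $A\in\bb{A}_\infty$ by more than a bounded amount in each seminorm. Making this quantitative — i.e. choosing the sequence of pair locations and the index shift $m=m(k)$ so that the constants $C_k$ are finite — is the technical heart of the argument; I expect it reduces to a geometric series estimate using the volume bound~\eqref{VolumenBound} and the rapid decay of $r\mapsto f_r(A)$, and that the construction on $\Z^d$ with the Euclidean metric goes through with the pairs placed along a coordinate axis at exponentially spaced locations.
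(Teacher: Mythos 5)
Your construction is essentially the paper's own: the generalized flip $\psi_\zeta$ of Example~\ref{ex:flip_a} with $\zeta$ a polynomial of degree~$\geq 2$ \emph{is} exactly an infinite product of disjoint single-site swaps $\{\zeta(j),\zeta(j+1)-1\}$ whose displacement $\Delta_\zeta(j)-1$ grows without bound but sublinearly in the distance from the origin, and your ALP-violation argument (traceless single-site observable at $a_n$, pushed outside $B_{a_n}(s+r)$, combined with $f_r\geq \tfrac12 p_r$ to force $H_\alpha(s,r)\geq\tfrac12$) is verbatim the paper's Example~\ref{Example Not ALP}. So the approach is the same; the paper simply packages your sparse-pairs picture via a label function and proves non-ALP first, Fr\'echet continuity second.

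The one place where your sketch misdescribes the mechanism is the Fr\'echet continuity step. It is not that the rapid tail decay of $A$ tames the far-away swaps via a ``geometric series using the volume bound''; that framing would lead you astray. The operative fact is purely geometric and independent of $A$'s decay: along a sequence of \emph{good radii} $r_n$ at which the underlying site permutation $\pi$ maps $B_0(r_n)$ onto itself (equivalently, no swap pair straddles the boundary), one has the exact identity $p_{r_n}(\psi_\zeta(A))=p_{r_n}(A)$ for $A\in\bb{D}_{\rm loc}$ (the paper proves this via the Pauli basis and $\psi_\zeta^2=\mathrm{id}$). Sublinear displacement then gives a uniformly bounded ratio $r_{n+1}/r_n\leq C_\zeta+1$, so interpolating by monotonicity of $p_r$ yields $\|\psi_\zeta(A)\|_k\leq 2M_\zeta^k\|A\|_k$ with no index shift $m>k$. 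A cruder version of the same point, closer to your sketch, is: if $\pi$ displaces sites in $B_0(r')$ by at most $L(r')$ with $r'+L(r')\leq r$, then $f_r(\alpha(A))\leq p_{r'}(A)$, so the estimate comes down to boundedness of $(1+r)^k/(1+r')^k$, again a statement about sublinear displacement, not about the decay profile of the fixed observable $A$. If you want your exponentially spaced pairs $(a_n,b_n)$ near $2^n$, you must impose $d(a_n,b_n)=o(2^n)$ precisely to keep that displacement sublinear; once you do, the argument closes without any series over swaps.
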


\subsection{Fr\'echet continuous but not ALP automorphisms} \label{Example Not ALP-2}

In Example \ref{Example Not ALP}, we observed that any flip-type $*$-automorphism $\psi_\zeta$ defined on the one-dimensional spin lattice system, with $\zeta$ a polynomial function of degree at least $2$, violates the ALP condition. We claim that the flip-type automorphisms $\psi_\zeta$ are continuous with respect to the Fr\'echet topology for all increasing polynomials $\zeta$, which proves Proposition~\ref{prop_no_go}.

\medskip

In order to prove this, let us focus on the subspace $\bb{D}\subset \bb{A}$ of skew-adjoint elements with null trace endowed  with a (generalized) Pauli basis as defined in Remark \ref{Remark: PauliBasis}. We assume that $\zeta(0) \leqslant 0$ and $\zeta(1) \geqslant 1$. Since $\psi_\zeta$ is an automorphism that squares to the identity,
$$
p_r(\psi_\zeta(A)) = \|A - \psi_\zeta ( \Pi_{B_0(r)^c} (\psi_\zeta (A)))\|
$$
for any $A \in \bb{A}$. Take $n \geqslant 1$ and $m \geqslant 0$, and define the constant $r_{m,n} := \max \; \{\zeta(n) - 1, |\zeta(-m)|\}+1$.
We will prove that
\begin{equation}\label{edD_loc}
p_{r_{m,n}}(\psi_\zeta(A)) \; = \; p_{r_{m,n}}(A)\;,\qquad \forall\; A\in \bb{D}_{\rm loc}\;. 
\end{equation}
First, consider the case where $A$ is supported on $B_0(r_{m,n})$. In this case, $\Pi_{B_0(r)^c} (A) = A$ for every $r\geqslant r_{m,n}$. Moreover, by definition, also the support of 
$\psi_\zeta (A)$ remains within $B_0(r_{m,n})$, and in turn $ \Pi_{B_0(r)^c} (\psi_\zeta (A))=\psi_\zeta (A)=\psi_\zeta (\Pi_{B_0(r)^c} (A))$. Therefore, one deduces that $p_{r}(\psi_\zeta(A)) = p_{r}(A)=0$
for every $A\in\bb{A}(B_0(r_{m,n}))$ and $r\geqslant r_{m,n}$.
Now, let $A \in \bb{D}_{\rm loc}$ be a monomial which is not strictly supported on $B_0(r_{m,n})$. Using the Pauli basis, one can write 
$$A \; = \; \bigotimes_{p \in B_0(R)} E_{k_p}^{(p)},$$
for some $R > r_{m,n}$ such that $A \in \bb{A}(B_0(R))$. Since $A$ is not supported entirely in $B_0(r_{m,n})$,  it must contain nontrivial (traceless) Pauli matrices at sites in $B_0(R) \setminus B_0(r_{m,n})$. By definition of $\psi_\zeta$, it acts as the identity or it permutes elements located in $[\zeta(n)-1, \infty)\cap \Z \times \{0\} \times \ldots \times \{0\}$ with elements in the same region, and similarly for elements in $(-\infty, \zeta(-m)-1]\cap \Z \times \{0\} \times \ldots \times \{0\}$. Hence, after applying $\psi_\zeta$, we will have a Pauli matrix located on some $q \in B_0(R) \setminus B_0(r_{m,n})$, and thus one obtains
\[
 \Pi_{B_0(r_{m,n})^c}( \psi_\zeta (A))\; =\; \Pi_{B_0(r_{m,n})^c} (A) = 0
\]
since both $A$ and $\psi_\zeta (A)$ contain trace-less matrices in $B_0(r_{m,n})^c$. Acting on both sides by $\psi_\zeta$, extending the result by  linearity and comparing with the case of observables supported inside $B_0(r_{m,n})$ one gets\[
\psi_\zeta ( \Pi_{B_0(r_{m,n})^c} (\psi_\zeta (A))) \;=\; \Pi_{B_0(r_{m,n})^c} (A)\;,\qquad \forall\; A \in \bb{D}_{\rm loc}\;.
\]
This proves \eqref{edD_loc}. 

\medskip

Now, let us fix $m =0$ and define $r_n:=r_{0,n}$. It exists a $n_*\in\N$ such that $\zeta(n_*) - 1 \geqslant |\zeta(0)|$, implying that $r_n:=\zeta(n)>0$ for every $n\geqslant n_*$. For a polynomial $\zeta$, there is $C_\zeta > 0$ such that
$$\zeta(j + 1) - \zeta(j)  \leqslant C_\zeta \zeta(j) $$
for all $j \geqslant n_\ast$, or equivalently
$r_{n+1}\leqslant (C_\zeta+1)r_n$. 
From the proof of Lemma \ref{Equivalence bounds rapidly decaimiento}, one infers that 
\[
p_{r'}(A)\;\leqslant \;2 f_{r'}(A) \;\leqslant\; 2 f_{r}(A) \;
\;\leqslant \;2 p_{r}(A) \;
\]
for any $r' \geqslant r \geqslant 0$ and any $A \in \bb{A}$. 
Therefore,
\[
\begin{aligned}
\left(1 + r_n + s\right)^k p_{r_n + s}(\psi_\zeta(A)) &\leq 2 \left(1 + r_n + s\right)^k p_{r_n}(\psi_\zeta(A)) \\
\end{aligned}
\]
for every $k\in N_0$ and $s\geqslant0$. 
Letting $s\in[0,r_{n+1}-r_n-1]$
 and using the definition of $r_n$ and the relation \eqref{edD_loc},
one can improve the inequality above as
\[
\begin{aligned}
\left(1 + r_n+ s\right)^k p_{r_n + s}(\psi_\zeta(A)) &\leq  2\left(1 + r_n+r_{n+1}-r_n-1\right)^k p_{r_n}(A)\\&=\;2\left(r_{n+1} \right)^k p_{r_n}(A)\\
& \leq 2 \left(C_\zeta+1)^k(1+r_n\right)^k p_{r_n}(A)\;.\\
\end{aligned}
\]
This implies that
\[
\begin{aligned}
\sup_{r\in[r_n, r_{n+1}-1]}\left(1 + r\right)^k p_{r}(\psi_\zeta(A))\;& \leq 2\; \left(C_\zeta+1\right)^k\left(1+r_n\right)^k p_{r_n}(A)\\
\end{aligned}
\]
for every $n\geqslant n_*$, and in turn
\begin{equation}
\sup_{r\geqslant r_{n_*}}\left(1 + r\right)^k p_{r}(\psi_\zeta(A))\; \leq\; 2  \left(C_\zeta+1\right)^k \sup_{r\geqslant r_{n_*}}\left(1+r\right)^k p_{r}(A)\;.
\end{equation}
One needs a similar bound for $0 \leqslant r \leqslant r_{n_\ast} - 1$. Since $A$ is assumed to be trace-less, then also $\psi_\zeta(A)$ is trace-less and therefore, one has
 that $p_0(\psi_\zeta(A)) =\|A\| = p_0(A)$. Therefore, for $0 \leqslant r \leqslant r_{n_\ast} - 1$ one gets that
\[
(1 + r)^k p_r(\psi_\zeta(A))\; \leq\; 2 r_{n_\ast}^k p_0(\psi_\zeta(A))\;=\;2 r_{n_\ast}^k p_0(A)\;.
\]
Setting $M_\zeta := \max\{C_\zeta + 1, r_{n_\ast}\}$, one gets the bound
$$\sup_{r \in \N_0}(1 + r)^k p_r(\psi_\zeta(A)) \; \leqslant \; 2 M_\zeta^k \sup_{r \in \N_0}(1 + r)^k p_r(A)\;$$
valid
 for any $A \in \bb{D}_{\rm loc}$ and any $k \in \N_0$,which translates into
 \begin{equation}\label{continuity shift automorphism}
 \|\psi_\zeta(A)\|_k\;\leqslant\;2 M_\zeta^k\|A\|_k\;.
 \end{equation}
Since $\bb{D}_{\rm loc}$ is dense on $\bb{D}_\infty$, we can conclude that the restriction of $\psi_\zeta$  to $\bb{D}_\infty$  is continuous with respect to the Fr\'echet topology. Finally, with Lemma \ref{Lemma: Extension}, one can conclude  that $\psi_\zeta$ is Fr\'echet continuous on the whole $\ast$-algebra $\bb{A}_\infty$.

\medskip

We conclude as in Example \ref{Example Not ALP} with the case $\zeta(j)=2{\rm sgn}(j)\;j^2$. Here, $n_* = 1$, $r_n = 2n^2$ and $C_\zeta = 3$.
\begin{lemma}\label{Lemma: Extension}
    Let $T: \bb{A}_\infty \to \bb{A}$ be a linear map satisfying (1) $T(\mathbf{1}) = 0$, or (2) $T(\mathbf{1}) = \mathbf{1}$.
    Suppose that $T(\bb{D}_\infty)\subseteq \bb{D}_\infty$ and that $T|_{\bb{D}_\infty}: \bb{D}_\infty \rightarrow \bb{D}_\infty$ is continuous with respect to the Fr\'echet topology. Then, $T: \bb{A}_\infty \rightarrow \bb{A}_\infty$ is Fr\'echet continuous.
\end{lemma}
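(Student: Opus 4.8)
The plan is to transfer the hypothesis (which only controls $T$ on the subspace $\bb{D}_\infty$ of skew-adjoint traceless observables) to all of $\bb{A}_\infty$ via the standard decomposition into a scalar part and self/skew-adjoint parts. First I would record the elementary facts I need: $\bb{A}_\infty$ is $\ast$-closed (recall $|||A^*|||_k=|||A|||_k$), it contains $\mathbf{1}$ with $p_r(\mathbf{1})=0$ for all $r$ so that $\|\mathbf{1}\|_k=1$, and $\omega_\infty(A^*)=\overline{\omega_\infty(A)}$ since $\omega_\infty$ is a state. Moreover, since $\Pi_{B_0(r)^c}$ is $\C$-linear and $\ast$-preserving with $\Pi_{B_0(r)^c}(\mathbf{1})=\mathbf{1}$, the functionals $p_r$ satisfy $p_r(A-\omega_\infty(A)\mathbf{1})=p_r(A)$, $p_r(A^*)=p_r(A)$ and $p_r(\ii A)=p_r(A)$; together with $\|A-\omega_\infty(A)\mathbf{1}\|=p_0(A)\leqslant\|A\|_k$ this yields $\|A-\omega_\infty(A)\mathbf{1}\|_k\leqslant 2\|A\|_k$ for every $k$.

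Given $A\in\bb{A}_\infty$, I would then put $A_0:=A-\omega_\infty(A)\mathbf{1}$, which is traceless and still in $\bb{A}_\infty$, and split it as $A_0=C+B$ with $C:=\tfrac12(A_0-A_0^*)$ skew-adjoint and $B:=\tfrac12(A_0+A_0^*)$ self-adjoint; setting $D:=\ii B$, both $C$ and $D$ are skew-adjoint, traceless and almost-local, hence lie in $\bb{D}_\infty$, and
\[
A\;=\;\omega_\infty(A)\,\mathbf{1}\;+\;C\;-\;\ii\,D\;.
\]
By the remarks above $\|C\|_k,\|D\|_k\leqslant\|A_0\|_k\leqslant 2\|A\|_k$. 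Applying $T$ and using linearity gives $T(A)=\omega_\infty(A)\,T(\mathbf{1})+T(C)-\ii\,T(D)$; since $T(C),T(D)\in\bb{D}_\infty\subseteq\bb{A}_\infty$ by hypothesis and $T(\mathbf{1})\in\{0,\mathbf{1}\}\subseteq\bb{A}_\infty$ in either of the two allowed cases, it follows immediately that $T(\bb{A}_\infty)\subseteq\bb{A}_\infty$.

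For continuity I would fix $k\in\N_0$ and use the Fr\'echet continuity of $T|_{\bb{D}_\infty}$ to produce $m\geqslant k$ and a constant $\Tilde{C}_k>0$ with $\|T(E)\|_k\leqslant\Tilde{C}_k\|E\|_m$ for all $E\in\bb{D}_\infty$. Then, using $|\omega_\infty(A)|\leqslant\|A\|\leqslant\|A\|_m$, $\|T(\mathbf{1})\|_k\leqslant 1$ and $\|\cdot\|_k\leqslant\|\cdot\|_m$,
\begin{align*}
\|T(A)\|_k\;&\leqslant\;|\omega_\infty(A)|\,\|T(\mathbf{1})\|_k+\|T(C)\|_k+\|T(D)\|_k\\
&\leqslant\;\|A\|_m+\Tilde{C}_k\bigl(\|C\|_m+\|D\|_m\bigr)\;\leqslant\;\bigl(1+4\Tilde{C}_k\bigr)\|A\|_m\;,
\end{align*}
the first summand being absent when $T(\mathbf{1})=0$. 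As $k$ is arbitrary, this is exactly Fr\'echet continuity of $T:\bb{A}_\infty\to\bb{A}_\infty$.

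I do not anticipate a real obstacle. The two points that need care are: checking that $C$ and $D$ are simultaneously skew-adjoint, traceless and almost-local, so that the assumption on $T|_{\bb{D}_\infty}$ applies, which is precisely where the hypothesis $T(\mathbf{1})\in\{0,\mathbf{1}\}$ enters, since the one-dimensional subspace $\C\mathbf{1}$ is the only part of $\bb{A}_\infty$ on which $T$ is not controlled by the assumption and must be handled by hand; and verifying that passing from $A$ to $A_0$ and then to its self- and skew-adjoint parts costs only a universal factor in each seminorm $\|\cdot\|_k$, so that the estimate on $\bb{D}_\infty$ transfers back to $\bb{A}_\infty$ with the harmless constant $1+4\Tilde{C}_k$.
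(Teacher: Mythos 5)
Your proof is correct and follows essentially the same route as the paper: decompose $A$ into its scalar part $\omega_\infty(A)\mathbf{1}$ plus its traceless part, split the traceless part into two skew-adjoint pieces lying in $\bb{D}_\infty$, apply the hypothesis there, and treat the scalar contribution by hand using $T(\mathbf{1})\in\{0,\mathbf{1}\}$. The only cosmetic difference is that you work with the norms $\|\cdot\|_k$ while the paper uses $|||\cdot|||_k'$, but these generate the same topology, so the arguments are interchangeable.
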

\proof
    First, let us observe that the assertion holds on the subspace of traceless, not necessarily anti self-adjoint, elements. Since $T$ is continuous on $\bb{D}_\infty$, we have that given $m \in \N_0$, there exists a constant $C_m > 0$ and $n \in \N_0$ such that for any $B \in \bb{D}_\infty$, the following holds:
    $$|||T(B)|||_{m}' \;\leqslant\; C_m |||B|||_{n}'\;.$$
     Now, given $A \in \bb{A}_\infty$ such that $\omega_\infty (A) = 0$, we can decompose it as a linear combination of elements in $\bb{D}_\infty$ according to 
    $$A \;=\; \ii \frac{A + A^*}{2\ii} + \frac{A - A^*}{2}\;.$$
    By the linearity and continuity of $T$ on $\bb{D}_\infty$, one obtains that
    $$|||T(A)|||_{m}' \;\leqslant\;  C_m \left(\left|\left|\left|\frac{A + A^*}{2i}\right|\right|\right|_n' + \left|\left|\left|\frac{A - A^*}{2}\right|\right|\right|_n' \right) \;\leqslant\; 2 C_m |||A|||_n'\;.$$
    Now, consider the case where $T(\mathbf{1}) = 0$. Given any $A \in \bb{A}_\infty$, define its traceless part ${\tilde A} := A - \omega_\infty(A){\bf 1}$. Since $f_r(A) = f_r({\tilde A})$ for all $r$, one obtains, for any $k \in \N_0$:
    \begin{align*}
    |||\tilde A|||_{k}' \; &= \; ||\tilde A|| + \sup_{r \in \N_0}(1 + r)^k f_r(\tilde A) \;\leqslant\; \|A\| + |\omega_\infty(A)| + \sup_{r \in \N_0}(1 + r)^k f_r(A)\\
    &=\; |\omega_\infty(A)| + |||A|||_{k}' \;\leqslant\; 2 |||A|||_{k}'\;.
    \end{align*}
    Consequently, we get the bound:
    $$|||T(A)|||_{m}' = |||T(\Tilde{A})|||_{m}' \leqslant 4 C_\alpha |||A|||_{n}'.$$
     Next, consider the case where $T(\mathbf{1}) = \mathbf{1}$. Since $T(A) = T(\tilde A) + \omega_\infty(A)$ and $|||\omega_\infty(A)|||_k' = |\omega_\infty(A)| \leqslant \|A\| \leqslant |||A|||_k'$, a similar argument then yields the estimate:
     \begin{align*}
     |||T(A)|||_m'\; =\: |||T(\tilde A) + \omega_\infty(A)|||_m' \;\leqslant\; 2 C_m |||A|||_n' + |||A|||_n'\; =\; \left(2 C_m + 1\right) |||A|||_n'\;.
     \end{align*}
     This completes the proof.
\qed

\medskip

This concludes the proof of Proposition \ref{prop_no_go}. One may be lead to believe that the failure to be locality-preserving arises from the fact that the automorphisms just described are not connected to the identity. This is in fact not true and we conclude this section by providing an explicit  strongly continuous one-parameter family of continuous automorphisms of $\bb{A}_\infty$ connecting $\psi_\zeta$ to the identity.

\medskip

To begin, let us construct a one-parameter group of automorphisms $\sigma_\zeta^t$. For each $p \in \Z$, define the regions $L_{\zeta(p)} := \left([\zeta(p), \zeta(p + 1) - 1] \cap \Z\right) \times \Z \times \ldots \times \Z$ which provide a partition of $\Z^d$, and the intervals $J_{\zeta(p)} := \left([\zeta(p), \zeta(p + 1) - 1] \cap \Z\right) \times \{0\} \times \ldots \times \{0\}$. By construction, the automorphism $\psi_\zeta$ leaves invariant each subalgebra $\bb{A}(L_{\zeta(p)})$, acting non-trivially only on $J_{\zeta(p)}$, and as the identity elsewhere. Since the local algebra $\bb{A}(J_{\zeta(p)})$ is a finite-dimensional matrix algebra, it follows from \cite[Proposition 1.6]{Raeburn Morita} that the restriction $\psi_\zeta|_{\bb{A}(J_{\zeta(p)})}$ can be implemented as an inner automorphism. Moreover, because $\psi_\zeta$ acts as the identity on the rest of $L_{\zeta(p)}$, the same conclusion extends to the whole region $L_{\zeta(p)}$. That is, there exists a unitary $U_p \in \bb{A}(J_{\zeta(p)})$ such that
$$
\psi_{\zeta,p}(A)\;:=\;\psi_\zeta|_{\bb{A}(L_{\zeta(p)})}(A)\;=\;\text{Ad}_{U_p}(A)\;,\qquad A\in \bb{A}(L_{\zeta(p)}),
$$ 
where $\text{Ad}_{U}(A):=UAU^*$. Moreover, since $U_p \in \bb{A}(J_{\zeta(p)})$, we can write $U_p = \expo{itH_p}$ for some self-adjoint element $H_p \in \bb{A}(J_{\zeta(p)})$.
Hence, on each subalgebras $\bb{A}(L_{\zeta(p)})$, we define a local dynamics of the form
$$
\sigma_{\zeta,p}^t \; := \; \text{Ad}_{\expo{it H_p}}\;, \qquad t\in\R\;.
$$
Observe that $\sigma_{\zeta,p}^0={\rm Id}|_{\bb{A}(L_{\zeta(p)})}$ and $\sigma_{\zeta,p}^1=\psi_{\zeta,p}$. Since the self-adjoint elements $H_p$ commute with each other (supported in disjoint sets), we define $\sigma_\zeta^t$ on $\bb{A}_{\rm loc}$ as follows: given $A \in \bb{A}_{\rm loc}$, there exists a finite collection of regions $L_{\zeta(p_1)}, \ldots, L_{\zeta(p_r)}$ such that $A$ is supported in the union, and hence:
$$\sigma_\zeta^t(A) \; = \; e^{it(H_{p_1} + \ldots + H_{p_r})} A e^{-it(H_{p_1} + \ldots + H_{p_r})}.$$
As in Example \ref{Example Not ALP}, this family of automorphisms $\sigma_\zeta^t$ extends to the full algebra $\bb{A}$. Notice that, by construction, $\sigma_\zeta^{t + s} = \sigma_\zeta^t \circ \sigma_\zeta^s$, $\sigma_\zeta^0 = {\rm Id}$ and $\sigma_\zeta^1 = \psi_\zeta$.

\medskip

Let us now show that, for each $t \in \R$, the map $\sigma_\zeta^t$ is in fact an automorphism of the Fr\'echet algebra $\bb{A}_\infty$. Without loss of generality, assume that $\zeta(0) \leqslant 0$ and $\zeta(1) \geqslant 1$. As in the begging of this section, consider the collection $r_{m,n} = \text{max }\{\zeta(n) - 1, |\zeta(-m)|\} \ + 1$ for $m \geqslant 0$ and $n \geqslant 1$. Following the arguments provided above, in order to establish the Fr\'echet continuity of $\sigma_\zeta^t$, it suffices to show that for any $A \in \bb{D}_{\rm loc}$, the identity 
$$
p_{r_{m,n}}(\sigma_\zeta^t(A))\; =\; p_{r_{m,n}}(A)\;.
$$
holds. To do that, consider the ball $B_0(r_{m,n})$. Notice that $B_0(r_{m,n})$ is within the union of the regions $L_{\zeta(-m)}, L_{\zeta(-m + 1)}, \ldots, L_{\zeta(n-1)}$ and that $J_{\zeta(-m)}, J_{\zeta(-m + 1)}, \ldots, J_{\zeta(n-1)} \subseteq B_0(r_{m,n})\cap (\Z \times \{0\} \times \ldots \times \{0\})$

\medskip

Assume first that $A$ is supported in $B_0(r_{m,n})$. Since the automorphism $\sigma_\zeta^t$ is defined as the composition of the local automorphisms $\sigma_{\zeta,p}^t$, and the ball $B_0(r_{m,n})$ is contained in the union of disjoint regions $L_{\zeta(p)}$, it follows that $\sigma_\zeta^t(A)$ is also supported in $B_0(r_{m,n})$. Hence, for $r \geqslant r_{m,n}$, it follows that $\Pi_{B_0(r)^c}(A) = A$ and $\Pi_{B_0(r)}^c(\sigma_\zeta^t(A)) = \sigma_\zeta^t(A)$. Consequently, $p_r(\sigma_\zeta^t(A)) = p_r(A) = 0$ for all $r \geqslant r_{m,n}$.

\medskip

Now, take a monomial $A \in \bb{D}_{\rm loc}$ which is not strictly supported on $B_0(r_{m,n})$. As before, we can express this element using the Pauli basis as 
$$A \; = \; \bigotimes_{p \in B_0(R)} E_{k_p}^{(p)},$$
for some $R > r_{m,n}$ such that $A \in \bb{A}(B_0(R))$, and with nontrivial Pauli matrices at some sites $q \in B_0(R) \setminus B_0(r_{m,n})$. For those factors localized on these sites $q$, the automorphism $\sigma_\zeta^t$ acts as a composition of adjoint maps $\operatorname{Ad}_{e^{itH_l}}$ where the Hamiltonians $H_l$ are supported outside of $B_0(r_{m,n})$; hence, such elements will remain localized in $B_0(r_{m,n})^c$ under the action of $\sigma_\zeta^t$. Moreover, since $\sigma_\zeta^t$ acts as $\text{Ad}_U$ with $U$ unitary, the trace is preserved. Therefore, applying the partial trace, one obtains
$$\Pi_{B_0(r_{m,n})^c}(\sigma_\zeta^t(A)) = \Pi_{B_0(r_{m,n})^c}(A) = 0.$$ 
Extending by linearity, one gets that for all $A \in \bb{D}_{\rm loc}$
$$
\sigma_\zeta^{-t}\left(\Pi_{B_0(r_{m,n})^c}(\sigma_\zeta^t(A))\right) \;=\; \Pi_{B_0(r_{m,n})^c}(A)\;.
$$
This completes the argument for the Fr\'echet continuity of the maps $\sigma_\zeta^t$. Furthermore, the constant $2 M_\zeta^k$ in equation \eqref{continuity shift automorphism} works for all $t \in \R$.

\medskip

We now turn to the strong continuity of the one-parameter group $\sigma_\zeta^t$  with respect to the norm topology. To this end, let us consider an element $A \in \bb{A}_{\rm loc}$. Since $\sigma_\zeta^t$ is a group, it suffices to prove continuity at $t = 0$. We begin by estimating the norm difference:
\begin{align*}
	\left\| \expo{itH_p} A \expo{-itH_p} - A\right\| &\;\leqslant\; \left\| \expo{itH_p} A \expo{-itH_p} - A\expo{-itH_p}\right\| + \left\| A \expo{-itH_p} - A\right\|\\
	&\;\leqslant\; 2\|A\| \left\|\expo{itH_p} - \mathbf{1}\right\| \;\leqslant\; 2 \|A\| \left(\expo{|t| \|H_p\|}- 1\right)\;.
\end{align*}
Since $A$ is locally supported, only finitely many of the inner automorphisms $\text{Ad}_{\expo{itH_q}}$ in the definition of $\sigma_\zeta^t$ act non-trivially on $A$. Thus, by a similar computation as above, one gets:
$$\left\| \sigma_\zeta^tA) - A\right\| \;\leqslant \; 2 \|A\| \left(\expo{|t| \ \|H_{p_1} + \ldots + H_{p_r} \|}- 1\right),$$
for a finite collection $H_{p_1}, \ldots, H_{p_r}$ which, of course, depends on $A$. Now, for a general element $A \in \bb{A}$, a standard $\epsilon/3-$argument completes the proof of strong continuity with respect to the norm topology. Given $\epsilon > 0$, let $A_N \in \bb{A}_{\rm loc}$ be a locally supported element such that $\|A - A_N\| < \epsilon/3$, and let $\delta > 0$ such that for any $t$ with $|t| < \delta$. One has that $\|\sigma_\zeta^t(A_N) - A_N\| < \epsilon/3$. Then, it follows that:
$$\|\sigma_\zeta^t(A) - A\|\; \leqslant\; \|\sigma_\zeta^t(A - A_N)\| + \|\sigma_\zeta^t(A_N) - A_N\| + \|A - A_N\|\; < \;\epsilon\;.
$$

\medskip

A similar argument shows that $\sigma_\zeta^t$ is strongly continuous with respect to the Fr\'echet topology. We start with a finitely supported element $A \in \bb{A}_{\rm loc}$. Consider the following estimation:
\begin{align*}
\left\|\expo{itH_p}A\expo{-itH_p} - A\right\|_k &\leqslant \left\|\expo{itH_p}A \expo{-itH_p} - A \expo{-itH_p}\right\|_k + \left\|A\expo{-itH_p} - A\right\|_k\\
&\leqslant 3\left\|A \expo{-itH_p}\right\|_k  \left\|\expo{itH_p} - \mathbf{1}\right\|_k + 3\left\|A\right\|_k  \left\|\expo{-itH_p} - \mathbf{1}\right\|_k\\
& \leqslant  9\left(\left\|A\right\|_k  \left\|\expo{-itH_p}\right\|_k  \left\|\expo{itH_p} - \mathbf{1}\right\|_k + \left\|A\right\|_k  \left\|\expo{-itH_p} - \mathbf{1}\right\|_k \right),
\end{align*}
where the factors of $3$ arise from the multiplicative constant for the norms $||\cdot||_k$.
Since $f_r(A) \leqslant \|A\|$, it follows that $\|\expo{-itH_p}\|_k \leqslant 1 + |\zeta(p + 1)|^k$. Therefore, one concludes that
$$
\left\|\expo{itH_p}A\expo{-itH_p} - A\right\|_k\; \leqslant\; 9\|A\|_k (2 + |\zeta(p + 1)|^k) \left(e^{3|t| \ \|H_p\|_k} - 1\right)\;.
$$
With an analogous argument as for the norm case, only a finite number of $H_p$ act non-trivially on $A$, say $H_{p_1}, \ldots, H_{p_r}$. Without loss of generality, assume that $p_r$ is such that $|\zeta(p_r + 1)| > |\zeta(p_j + 1)|$. Then, it follows that
$$
\left\|\sigma_\zeta^t(A) - A\right\|_k \;\leqslant\; 9\|A\|_k (2 + |\zeta(p_r + 1)|^k) \left(e^{3|t| \ \|H_{p_1} + \ldots + H_{p_r}\|_k} - 1\right)\;.
$$
By an argument of $\epsilon/3$, we conclude the strong continuity of the one-parameter group $\sigma_\zeta^t$ with respect to the Fr\'echet topology.

\subsection{Approximately locality preserving automorphisms}\label{sec:restriction is automorphism}

We now turn to the remai\-ning claim left to prove about the restriction of elements in ${\rm ALP\text{-}Aut}(\bb{A})$, namely that it defines an automorphism of $\bb{A}_\infty$. This will follow from the fact that ${\rm ALP\text{-}Aut}(\bb{A})$ is a group. In order to prove this, we will reformulate the ALP condition. We start by recalling ideas of~\cite{Ranarad-Walter-Witteveen-Converse-Lieb}. The following definition is a restatement of \cite[Definition 2.2]{Ranarad-Walter-Witteveen-Converse-Lieb}.

\begin{definition}[$\varepsilon$-inclusion]
    Let $\bb{A}$ be a $C^*$-algebra, $\bb{B} \subset \bb{A}$ be a $C^*$-subalgebra and $\varepsilon > 0$. We say that a element $A \in \bb{A}$ is $\varepsilon$-included in $\bb{B}$ if there exists $B \in \bb{B}$ such that $||A - B|| \leq \varepsilon ||A||$. We denote this as $A {\in}_\varepsilon \bb{B}$. Similarly, given $\bb{B}$ and $\bb{C}$ $C^*$-subalgebras, we will say that $\bb{B}$ is $\varepsilon$-contained in $\bb{C}$, in symbols $\bb{B}\subset_\varepsilon\bb{C}$, if $A {\in}_\varepsilon \bb{C}$ for any $A \in \bb{B}$.
\end{definition}

In the following, we present some technical properties regarding the notion of $\varepsilon$-inclu\-sion. Both the statements and proofs have been adapted from those in \cite{Ranarad-Walter-Witteveen-Converse-Lieb}. As a notational convention, given a $C^*$-algebra $\bb{A}$ and a $*$-subalgebra $\bb{B}$, we will denote by  $\bb{B}' := \{A \in \bb{A}\;|\; \ [A,B] = 0\;,\;  \forall\; B \in \bb{B}\}$  the commutant of $\bb{B}$ in $\bb{A}$. Let us start with a general useful result.
\begin{lemma}\label{lemma_AX}
Let $\bb{A}$ be a $C^*$-algebra, and  $\bb{B}, \bb{C} \subseteq \bb{A}$  two $*-$subalgebras. If $\bb{B} \subset_\varepsilon \bb{C}'$, then 
    \[
    \|[B,C]\|\; \leq\; 2\varepsilon \|B\| \; \|C\|\;,
    \]
for any $B \in \bb{B}$ and $C \in \bb{C}$.
\end{lemma}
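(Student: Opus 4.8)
The plan is to use the defining property of $\varepsilon$-inclusion to replace $B$ by a nearby element that commutes with all of $\bb{C}$, and then control the commutator by the approximation error. Concretely, fix $B \in \bb{B}$ and $C \in \bb{C}$. Since $\bb{B} \subset_\varepsilon \bb{C}'$, there exists $B' \in \bb{C}'$ with $\|B - B'\| \leqslant \varepsilon \|B\|$.

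Next I would write $[B,C] = [B - B', C] + [B', C]$. The second term vanishes because $B' \in \bb{C}'$ commutes with every element of $\bb{C}$, in particular with $C$. Hence $[B,C] = [B - B', C] = (B - B')C - C(B - B')$, and the triangle inequality together with submultiplicativity of the $C^*$-norm gives
\[
\|[B,C]\|\; \leqslant\; \|(B-B')C\| + \|C(B-B')\|\; \leqslant\; 2\,\|B - B'\|\,\|C\|\; \leqslant\; 2\varepsilon\,\|B\|\,\|C\|\;,
\]
which is exactly the claimed bound. Since $B \in \bb{B}$ and $C \in \bb{C}$ were arbitrary, this finishes the proof.

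There is essentially no obstacle here: the only input is the existence of a single approximant $B'$ from the definition of $\bb{B} \subset_\varepsilon \bb{C}'$, and the rest is the standard two-term estimate for a commutator. The one point worth stating carefully is that $B'$ is chosen depending on $B$ (but not on $C$), which is legitimate since membership in the commutant $\bb{C}'$ is a condition against all of $\bb{C}$ at once; this is what makes the same $B'$ work uniformly in $C$.
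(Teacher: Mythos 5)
Your proof is correct and follows essentially the same route as the paper: pick an approximant $B' \in \bb{C}'$ with $\|B - B'\| \leqslant \varepsilon\|B\|$, observe that $[B,C] = [B-B',C]$ since the commutant term vanishes, and bound the remaining commutator by $2\|B-B'\|\|C\|$. The paper's proof is identical in substance, merely more compressed.
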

\proof
Let $B \in \bb{B}$, $C \in \bb{C}$ and let $A \in \bb{C}'$ satisfy $\|B - A\| \leq \varepsilon \|B\|$. Then
\[
\|[B,C]\| \;=\; \|[B - A, C]\| \;\leq\; 2 \|B - A\| \; \|C\|\; \leq\; 2\varepsilon \|B\| \; \|C\|
\]
as claimed.
\qed

\medskip

The next result specifically concerns spin algebras.
\begin{lemma}\label{Near-Inclusions}
Let $\bb{A}$ be the spin algebra associated with the lattice $\Z^d$, and let
$\alpha\in {\rm Aut}(\bb{A})$ be an automorphism of $\bb{A}$.
	\begin{enumerate}
 \item Let $\Lambda \subseteq \Z^d$ and assume that $A \in \bb{A}$ satisfies 
         $$\|[A,B]\| \leq  \varepsilon \|A\| \ \|B\|\;, \qquad\; \forall\; B \in \alpha(\bb{A}(\Lambda)).$$
     Then, $A \in_\varepsilon \alpha (\bb{A}(\Lambda^c))$.
    
    \item Let $\Lambda_i \subseteq \Z^d$, with $i=1,2$, be two subsets such that $\alpha(\bb{A}(\Lambda_i)) \subset_{\varepsilon_i} \bb{A}(\Sigma)$ for some $\varepsilon_1,\varepsilon_2>0$, and some $\Sigma \subseteq \Z^d$. Then: 
        $$\bb{A}(\Sigma^c) \subset_{\varepsilon'} \alpha(\bb{A}(\Lambda_1^c \cap \Lambda_2^c))$$
 with $\varepsilon':=2(\varepsilon_1 + \varepsilon_2)$.      
    	\end{enumerate}
    		\end{lemma}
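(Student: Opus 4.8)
The plan is to prove the two assertions separately, using the automorphism property of $\alpha$ and the structure of the spin algebra.

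\medskip

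\noindent\textbf{Part (1).} First I would transport the hypothesis through $\alpha^{-1}$. Set $\widetilde{A} := \alpha^{-1}(A)$; since $\alpha$ is an isometric $*$-automorphism, $\|\widetilde{A}\| = \|A\|$ and the hypothesis $\|[A,B]\| \leqslant \varepsilon \|A\|\,\|B\|$ for all $B \in \alpha(\bb{A}(\Lambda))$ becomes
\[
\|[\widetilde{A}, C]\| \;\leqslant\; \varepsilon \|\widetilde{A}\| \; \|C\|\;, \qquad \forall\; C \in \bb{A}(\Lambda)\;.
\]
Now the key geometric fact about spin systems on $\Z^d$ is that $\bb{A}(\Lambda)' = \bb{A}(\Lambda^c)$ (the commutant of a local algebra is the algebra of the complementary region — this follows from the tensor product structure and is standard). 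So I need to show that an element that \emph{approximately} commutes with $\bb{A}(\Lambda)$ is $\varepsilon$-included in $\bb{A}(\Lambda)' = \bb{A}(\Lambda^c)$. The natural candidate for the approximant is the conditional expectation (partial trace) $\Pi_\Lambda(\widetilde{A}) \in \bb{A}(\Lambda^c)$. Writing $\Pi_\Lambda$ as averaging over the Haar measure on $\n{U}_\Lambda$, one has
\[
\widetilde{A} - \Pi_\Lambda(\widetilde{A}) \;=\; \int_{\n{U}_\Lambda} \dd\mu_\Lambda(U)\; \big(\widetilde{A} - U\widetilde{A}U^*\big) \;=\; \int_{\n{U}_\Lambda} \dd\mu_\Lambda(U)\; [\widetilde{A}, U]\,U^*\;,
\]
so $\|\widetilde{A} - \Pi_\Lambda(\widetilde{A})\| \leqslant \sup_{U \in \n{U}_\Lambda} \|[\widetilde{A}, U]\| \leqslant \varepsilon \|\widetilde{A}\|$ since each $U$ is a unitary in $\bb{A}(\Lambda)$ of norm $1$. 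Applying $\alpha$ and using isometry, $\|A - \alpha(\Pi_\Lambda(\widetilde{A}))\| \leqslant \varepsilon \|A\|$, and $\alpha(\Pi_\Lambda(\widetilde{A})) \in \alpha(\bb{A}(\Lambda^c))$, which is exactly the claim $A \in_\varepsilon \alpha(\bb{A}(\Lambda^c))$.

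\medskip

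\noindent\textbf{Part (2).} Here I would chain Part (1) with Lemma~\ref{lemma_AX}. Fix $A \in \bb{A}(\Sigma^c)$; I want to produce an element of $\alpha(\bb{A}(\Lambda_1^c \cap \Lambda_2^c))$ within $\varepsilon'\|A\|$. Let me check that $A$ approximately commutes with $\alpha(\bb{A}(\Lambda_1 \cup \Lambda_2))$, so that I can invoke Part (1) with $\Lambda = \Lambda_1 \cup \Lambda_2$ (note $(\Lambda_1 \cup \Lambda_2)^c = \Lambda_1^c \cap \Lambda_2^c$). Take $B \in \bb{A}(\Lambda_i)$ for $i = 1$ or $2$. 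By hypothesis $\alpha(\bb{A}(\Lambda_i)) \subset_{\varepsilon_i} \bb{A}(\Sigma)$, i.e. $\alpha(B) \in_{\varepsilon_i} \bb{A}(\Sigma)$; pick $B' \in \bb{A}(\Sigma)$ with $\|\alpha(B) - B'\| \leqslant \varepsilon_i \|\alpha(B)\| = \varepsilon_i \|B\|$. Since $A \in \bb{A}(\Sigma^c)$ commutes with $B'$,
\[
\|[A, \alpha(B)]\| \;=\; \|[A, \alpha(B) - B']\| \;\leqslant\; 2\|A\|\,\|\alpha(B) - B'\| \;\leqslant\; 2\varepsilon_i \|A\|\,\|B\|\;.
\]
This gives the commutator bound on each generating subalgebra $\alpha(\bb{A}(\Lambda_i))$, but I need it on the (algebra generated by) $\alpha(\bb{A}(\Lambda_1 \cup \Lambda_2))$. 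The subtlety: $\bb{A}(\Lambda_1 \cup \Lambda_2)$ is generated by $\bb{A}(\Lambda_1)$ and $\bb{A}(\Lambda_2)$, and a general element $B \in \bb{A}(\Lambda_1 \cup \Lambda_2)$ is a sum of products from the two pieces, so the naive estimate would degrade. The cleanest route is to apply Part (1) \emph{twice}: first with $\Lambda = \Lambda_1$, one gets $A \in_{2\varepsilon_1} \alpha(\bb{A}(\Lambda_1^c))$, say $\|A - \alpha(A_1)\| \leqslant 2\varepsilon_1 \|A\|$ with $A_1 \in \bb{A}(\Lambda_1^c)$. Then $\alpha(A_1)$ still approximately commutes with $\alpha(\bb{A}(\Lambda_2))$ (with a slightly worse constant absorbing $2\varepsilon_1$), so apply Part (1) again with $\Lambda = \Lambda_2$ to land in $\alpha(\bb{A}(\Lambda_1^c \cap \Lambda_2^c))$; tracking the constants through the two applications and the triangle inequality yields the factor $\varepsilon' = 2(\varepsilon_1 + \varepsilon_2)$.

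\medskip

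\noindent\textbf{Main obstacle.} The routine parts are the Haar-averaging estimate and the commutator manipulations. The step requiring genuine care is the bookkeeping in Part (2): ensuring the two successive applications of Part (1) compose with exactly the advertised constant $2(\varepsilon_1 + \varepsilon_2)$ rather than something larger (e.g. a product or a factor with $\varepsilon_1\varepsilon_2$ cross-terms), and handling the fact that $A_1$ from the first step has norm only controlled by $\|A_1\| \leqslant \|A\|$ via isometry of $\alpha$, not exactly equal — this must be fed correctly into the second $\varepsilon$-inclusion estimate. One also has to be slightly careful that the commutator hypothesis needed for the second application of Part (1) is verified for \emph{all} of $\alpha(\bb{A}(\Lambda_2))$, which again reduces to the single-generator computation above plus the triangle inequality through $A \to \alpha(A_1)$.
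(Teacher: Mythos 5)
Your proof of Part (1) is correct and is essentially the paper's argument: after transporting by $\alpha^{-1}$, the Haar-averaging estimate $\|\widetilde{A}-\Pi_\Lambda(\widetilde{A})\|\leqslant\sup_U\|[\widetilde{A},U]\|\leqslant\varepsilon\|\widetilde{A}\|$ is exactly the content of the result the paper cites as \cite[Corollary 3.1]{nachtergaele-scholz-werner-13}. You omit the case where $\Lambda$ is infinite (where the unitary group is not compact), which the paper handles by a separate approximation by finite subsets, but that is a minor omission, not a flaw in the strategy.

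For Part (2), however, there is a genuine gap, and it is exactly the one you flag as the ``main obstacle'' without resolving. Applying Part (1) a second time \emph{abstractly} to the perturbed element $\alpha(A_1)$ cannot give the constant $2(\varepsilon_1+\varepsilon_2)$. To invoke Part (1) with $\Lambda=\Lambda_2$ you need a bound of the form $\|[\alpha(A_1),\alpha(B)]\|\leqslant\tilde\varepsilon\|\alpha(A_1)\|\,\|B\|$ for all $B\in\bb{A}(\Lambda_2)$; the best you get from the triangle inequality through $A\to\alpha(A_1)$ is
\[
\|[\alpha(A_1),\alpha(B)]\|\;\leqslant\;\|[A,\alpha(B)]\|+2\|\alpha(A_1)-A\|\,\|B\|\;\leqslant\;(2\varepsilon_2+4\varepsilon_1)\|A\|\,\|B\|,
\]
and since $\|\alpha(A_1)\|$ is only bounded \emph{below} by $(1-2\varepsilon_1)\|A\|$, the resulting $\tilde\varepsilon$ is at least $(4\varepsilon_1+2\varepsilon_2)/(1-2\varepsilon_1)$. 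Composing the two $\varepsilon$-inclusions then gives something like $6\varepsilon_1+2\varepsilon_2$ at best, not the advertised $2(\varepsilon_1+\varepsilon_2)$.

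The paper sidesteps this by never re-applying the abstract Part (1) to the perturbed element. Instead, it fixes the explicit approximant $B_{1,2}=\alpha\circ\Pi_{\Lambda_2}\circ\Pi_{\Lambda_1}\circ\alpha^{-1}(B)$ and estimates $\|\alpha^{-1}(B)-\Pi_{\Lambda_2}\Pi_{\Lambda_1}\alpha^{-1}(B)\|$ by inserting $\Pi_{\Lambda_1}\alpha^{-1}(B)$ and using two structural facts that your abstract argument throws away: the conditional expectations commute, $\Pi_{\Lambda_1}\circ\Pi_{\Lambda_2}=\Pi_{\Lambda_2}\circ\Pi_{\Lambda_1}$, and each $\Pi_\Lambda$ is a contraction. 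This lets one write
\[
\|\Pi_{\Lambda_1}(X)-\Pi_{\Lambda_2}\Pi_{\Lambda_1}(X)\|=\|\Pi_{\Lambda_1}(X-\Pi_{\Lambda_2}(X))\|\leqslant\|X-\Pi_{\Lambda_2}(X)\|
\]
with $X=\alpha^{-1}(B)$, so the two contributions are controlled \emph{independently} by $2\varepsilon_1\|B\|$ and $2\varepsilon_2\|B\|$ with no cross-amplification. If you want to salvage the two-step picture, you should replace the second abstract application of Part (1) with this concrete manipulation of the iterated partial trace.
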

	\begin{proof} (1) Let us start 
with  $\Lambda$  finite. 
Notice that the assumption in (1) is equivalent to $\|[\alpha^{-1}(A), D]\| \leq \varepsilon \|\alpha^{-1}(A)\|  \|D\|$ for all $D \in \bb{A}(\Lambda)$. This implies, in view of \cite[Corollary 3.1]{nachtergaele-scholz-werner-13}, that $\| \alpha^{-1}(A) - \Pi_\Lambda (\alpha^{-1}(A))\| \leq \varepsilon \|\alpha^{-1}(A)\|=\varepsilon \|A\|$.
Therefore, given $A \in \bb{A}$ as in the hypothesis, define $C := \alpha \circ \Pi_\Lambda \circ \alpha^{-1}(A)$.  Then, one has that	
\[
\|A - C\|\; =\; \|\alpha^{-1}(A) - \Pi_\Lambda(\alpha^{-1}(A))\|\; \leq\; \varepsilon\; \|A\|\;.
\]
Since by construction $C \in \alpha(\bb{A}(\Lambda^c))$, one concludes that $A \in_\varepsilon \alpha(\bb{A}(\Lambda^c))$. It remains to prove the infinite case. Take $\Lambda \subseteq \Z^d$ be an infinite subset. By construction of the spin algebra, we can find a sequence of finite subsets $\Lambda_i \subset \Lambda$ such that any element in $\bb{A}(\Lambda)$ can be approximated by an increasing  sequence of elements supported in the respective $\bb{A}(\Lambda_i)$.
Given $A \in \bb{A}$ as in the hypothesis, we can apply the result for the finite case to the elements $C_i := \alpha\circ \Pi_{\Lambda_i} \circ \alpha^{-1} (A)$. This shows that  $A \in_\varepsilon \alpha(\bb{A}(\Lambda_i^c))$ for every $\Lambda_i$. Thus, one gets
\[
\|A - \alpha ( \Pi_\Lambda ( \alpha^{-1} (A)))\|\; \leq \;\liminf_i\; \|A - C_i\| \;\leq\; \varepsilon \|A\|
\]
and this concludes the argument.\\
(2) As we have done previously, let us assume  that $\Lambda_1$ and $\Lambda_2$ are finite subsets. The infinite case can be then deduced with the same approximation procedure used above.  Let $B \in \bb{A}(\Sigma^c)$  and define  $B_1 := \alpha \circ \Pi_{\Lambda_1}\circ \alpha^{-1} (B)$ and $B_{1,2} := \alpha \circ \Pi_{\Lambda_2}\circ \alpha^{-1} (B_1)$. 
Notice that $\alpha^{-1} (B_1)$ belongs to $\bb{A}(\Lambda_1^c)$, and in turn  $B_{1,2}$ belongs to $\alpha(\bb{A}(\Lambda_1^c \cap \Lambda_2^c))$.
Then, one has
	\begin{align*}
	\|B - B_{1,2}\| &\;=\; \|\alpha^{-1}(B) - \Pi_{\Lambda_2}( \alpha^{-1}(B_1))\| = \|\alpha^{-1}(B) - \Pi_{\Lambda_2}( \Pi_{\Lambda_1} ( \alpha^{-1}(B)))\| \\
	& \leq \|\alpha^{-1} (B) - \Pi_{\Lambda_1}( \alpha^{-1}(B))\| + \|\Pi_{\Lambda_1} ( \alpha^{-1}(B)) - \Pi_{\Lambda_2}( \Pi_{\Lambda_1} ( \alpha^{-1}(B)))\|.
	\end{align*}
Let us start by examining the first term of the last inequality. From the hypothesis, one has that $\bb{A}(\Lambda_1) \subset_{\varepsilon_1} \alpha^{-1}(\bb{A}(\Sigma))$. Observing that $\alpha^{-1}(B)\in \alpha^{-1}(\bb{A}(\Sigma^c))$ and $\bb{A}(\Sigma)\subset \bb{A}(\Sigma^c)'$, one can use Lemma \ref{lemma_AX} to deduce
\[
\|[A,\alpha^{-1}(B)]\|\; \leq\; 2\varepsilon_1 \|A\| \; \|\alpha^{-1}(B)\|
\]
for any $A \in \bb{A}(\Lambda_1)$. This allows us to use \cite[Corollary 3.1]{nachtergaele-scholz-werner-13} to deduce
\begin{equation}\label{eq:LLa}
\|\alpha^{-1}(B) - \Pi_{\Lambda_1}( \alpha^{-1}(B))\|\; \leq\; 2\varepsilon_1 \|\alpha^{-1}(B)\|\;=\; 2\varepsilon_1 \|B\|\;.
\end{equation}
For the second term, by observing that $\Pi_{\Lambda_2}\circ \Pi_{\Lambda_1}=\Pi_{\Lambda_1}\circ \Pi_{\Lambda_2}$ and the fact that $\Pi_{\Lambda_1}$ reduces the norm, one gets
\[
\|\Pi_{\Lambda_1} ( \alpha^{-1}(B)) - \Pi_{\Lambda_2}( \Pi_{\Lambda_1} ( \alpha^{-1}(B)))\|\;\leqslant\; \| \alpha^{-1}(B) -   \Pi_{\Lambda_2} ( \alpha^{-1}(B))\|\;\leqslant\;2\varepsilon_2 \|B\|
\]
where the last inequality follows with the same argument used for \eqref{eq:LLa}. 
Summing up, one gets $\|B - B_{1,2}\|\leqslant \varepsilon' \|B\|$, which proves the claim.
\end{proof}

\begin{remark}\label{Generalization Near-Inclusions}
Let us observe that point (ii) in  the previous Lemma \ref{Near-Inclusions} can be generalized for any finite collection of finite subsets $\{\Lambda_k\}_{k = 1}^N$ in the following sense: if $\alpha(\bb{A}(\Lambda_k)) \subset_{\varepsilon_k} \bb{A}(\Sigma)$ for some collection $\varepsilon_k > 0$ and $\Sigma \subseteq \Z^d$, then $\bb{A}(\Sigma^c) \subseteq_{\varepsilon'} \alpha\left(\bb{A}\left(\cap \Lambda_k^c\right)\right)$, where $\varepsilon' = 2(\varepsilon_1 + \ldots + \varepsilon_n)$. 
To show this, define inductively the elements $B_{1, \ldots, k} := \alpha \circ \Pi_{\Lambda_k} \circ \alpha^{-1}(B_{1, \ldots, k-1})$, starting with $B_1 := \alpha \circ \Pi_{\Lambda_1} \circ \alpha^{-1}(B)$. Following the argument used for 
\eqref{eq:LLa}, one gets that 
$$\left\| \alpha^{-1}(B) - \Pi_{\Lambda_j} \circ \alpha^{-1}(B)\right\| \;\leqslant\; 2 \varepsilon_j \|B\|$$
for each $j \in \{1, \ldots, n\}$. Now, we get the following estimation:
\begin{align*}
\|B - B_{1, \ldots, n}\| \; &= \; \left\|\alpha^{-1}(B) - \Pi_{\Lambda_n} \circ \ldots \circ \Pi_{\Lambda_1}\left(\alpha^{-1}(B)\right)\right\|\\
&\leqslant\; \left\|\alpha^{-1}(B) - \Pi_{\Lambda_n}\left(\alpha^{-1}(B)\right)\right\| + \left\|\alpha^{-1}(B) - \Pi_{\Lambda_{n-1}} \circ \ldots \circ \Pi_{\Lambda_1}\left(\alpha^{-1}(B)\right)\right\|\\
&\leqslant\; \sum_{k = 1}^n \left\|\alpha^{-1}(B) - \Pi_{\Lambda_k}\left(\alpha^{-1}(B)\right)\right\|\\
&\leqslant\; 2\left(\varepsilon_1 + \ldots + \varepsilon_n\right)\|B\|,
\end{align*}
where the first inequality follows from $\|\Pi_{\Lambda_k}\| \leqslant 1$, and the second inequality is obtained by applying the argument inductively. Since $B_{1, \ldots, n}$ belongs to $\alpha\left(\bb{A}\left(\cap \Lambda_j^c\right)\right)$, we get the desired result.
Furthermore, the same argument extends to a countable collection of finite subsets. Indeed, let $\{\Lambda_k\}_{k \in \mathbb{N}}$ be such that $\alpha(\bb{A}(\Lambda_k)) \subset_{\varepsilon_k} \bb{A}(\Sigma)$ for all $k$, with $\varepsilon_k > 0$ and $\sum_k \varepsilon_k < \infty$. 
Then, $ \bb{A}(\Sigma^c) \subseteq_{2\sum_k \varepsilon_k} \alpha (\bb{A} (\cap \Lambda_k^c )).$ To see this, set $\widetilde{\Lambda}_n := \bigcup_{k = 1}^n \Lambda_k$ and $C_n = \alpha \circ \Pi_{\widetilde{\Lambda}_n} \circ \alpha^{-1}(B)$. Observe that the sequence $\{C_n\}$ converges to $C = \alpha \circ \Pi_\Lambda \circ \alpha^{-1}(B)$ where $\Lambda = \bigcup_k \Lambda_k$. Hence,
\begin{align*}
	||B - C|| &\;\leqslant\; \liminf_{n} ||B - C_n|| \;\leqslant\; 2 \liminf_n \sum_{k = 1}^n \varepsilon_k \; ||B||\; =\; \sum_{k = 1}^\infty \varepsilon_k \; ||B|| 
\end{align*}
as desired.
\hfill $\blacktriangleleft$
\end{remark}
\medskip

The next concept, based on \cite[Definition 3.5]{Ranarad-Walter-Witteveen-Converse-Lieb}, is needed to introduce an alternative characterization of the notion of an almost locality preserving automorphism.

\begin{definition}[Approximately locality preserving automorphisms]\label{ALP Definition 2}
    Let $\alpha \in {\rm Aut}(\bb{A})$ be an automorphism and $f \in \s{S}^+(\N_0)$. We say that $\alpha$ is an \emph{Approximately Locality-Preserving automorphism} with \emph{$f(r)$-tails} if for any ball $B_p(s)$, $p \in \Z^d$ and $s \in \N$, one has that $\alpha(\bb{A}(B_p(s)))$ is $s^{d-1} f(r)$-contained in $\bb{A}(B_p(s + r))$.
\end{definition}

For $r=0$ the condition in the definition is trivially satisfied whenever $f(0)\geqslant 1$. 
Although the Definition \ref{ALP Definition 2} for approximately locality preserving automorphisms is stated using the notion of near inclusions, in the next Proposition we show that the set of approximately locality preserving automorphisms coincides exactly with the set ${\rm ALP\text{-}Aut}(\bb{A})$. In this sense Definition \ref{ALP Definition 2} provides just a different characterization of the ALP-condition in Definition  \ref{ALP automorphism firts definition}. 

\begin{proposition}\label{Equivalence Definitions ALP}
    Definitions \ref{ALP automorphism firts definition} and \ref{ALP Definition 2} are equivalent.
\end{proposition}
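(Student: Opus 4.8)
The plan is to observe that, once the notion of $\varepsilon$-containment is unwound and the isometry $\|\alpha(A)\|=\|A\|$ is invoked, the two definitions are verbatim reformulations of one another, so the proof is essentially bookkeeping with a single harmless adjustment at $r=0$. Two elementary facts do all the work. First, since $\alpha\in{\rm Aut}(\bb{A})$ is in particular an injective $*$-homomorphism of $C^*$-algebras, it is isometric, so $\|\alpha(A)\|=\|A\|$ for every $A\in\bb{A}$. Second, for any $p\in\Z^d$ and $n\in\N_0$ the ball $B_p(n)$ is finite by \eqref{VolumenBound}, hence $\bb{A}(B_p(n))$ is finite-dimensional and therefore closed in $\bb{A}$; consequently the infimum appearing in $H_\alpha(s,r)$ is attained by some element of $\bb{A}(B_p(s+r))$, which is precisely what $\varepsilon$-inclusion demands — an honest approximant rather than a mere infimum bound.

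For the implication Definition~\ref{ALP automorphism firts definition} $\Rightarrow$ Definition~\ref{ALP Definition 2}, I would assume $\alpha$ is almost locality preserving and fix, as in the remark following Definition~\ref{ALP automorphism firts definition}, a function $f^{(\alpha)}\in\s{S}^+(\N_0)$ with $H_\alpha(s,r)\leqslant s^{d-1}f^{(\alpha)}(r)$; evaluating at $s=1$, $r=0$ forces $f^{(\alpha)}(0)\geqslant H_\alpha(1,0)=1$. Fix $p\in\Z^d$, $s\in\N$, $r\in\N_0$. For $r=0$ the containment $\alpha(\bb{A}(B_p(s)))\subset_{s^{d-1}f^{(\alpha)}(0)}\bb{A}(B_p(s))$ is trivial since $f^{(\alpha)}(0)\geqslant 1$. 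For $r\geqslant 1$, given a nonzero $A\in\bb{A}(B_p(s))$, I would apply the definition of $H_\alpha$ to $A/\|A\|$ and, using attainment, produce $B\in\bb{A}(B_p(s+r))$ with $\|\alpha(A)-B\|\leqslant H_\alpha(s,r)\|A\|\leqslant s^{d-1}f^{(\alpha)}(r)\|A\|=s^{d-1}f^{(\alpha)}(r)\|\alpha(A)\|$, that is $\alpha(A){\in}_{s^{d-1}f^{(\alpha)}(r)}\bb{A}(B_p(s+r))$. Hence $\alpha$ is approximately locality preserving with $f^{(\alpha)}(r)$-tails.

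For the converse, I would assume $\alpha$ is approximately locality preserving with $f(r)$-tails, $f\in\s{S}^+(\N_0)$, and set $f^{(\alpha)}(0):=\max\{f(0),1\}$ and $f^{(\alpha)}(r):=f(r)$ for $r\geqslant 1$; one checks $f^{(\alpha)}\in\s{S}^+(\N_0)$ (monotone decrease because $f^{(\alpha)}(0)\geqslant f(0)\geqslant f(1)$, and the rapid-decay seminorms change only in the finite value at $r=0$). For $r\geqslant 1$: for any $p\in\Z^d$, $s\in\N$ and $A\in\bb{A}(B_p(s))$ with $\|A\|=1$, the $\varepsilon$-containment gives $B\in\bb{A}(B_p(s+r))$ with $\|\alpha(A)-B\|\leqslant s^{d-1}f(r)\|\alpha(A)\|=s^{d-1}f(r)$; taking the infimum over $B$ and then the supremum over such $A$ and over $p$ yields $H_\alpha(s,r)\leqslant s^{d-1}f(r)=s^{d-1}f^{(\alpha)}(r)$. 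For $r=0$ one has $H_\alpha(s,0)=1\leqslant s^{d-1}\leqslant s^{d-1}f^{(\alpha)}(0)$. Thus $\alpha$ satisfies Definition~\ref{ALP automorphism firts definition}, and the two notions describe the same set ${\rm ALP\text{-}Aut}(\bb{A})$.

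I do not expect a genuine obstacle; the statement is definitional. The only two points that require a moment's care are the $r=0$ boundary case — handled by inflating $f(0)$ to be at least $1$, which costs nothing and preserves membership in $\s{S}^+(\N_0)$ — and making sure that the infimum bound in $H_\alpha$ actually yields an approximant, which is where finite-dimensionality (equivalently, closedness) of $\bb{A}(B_p(s+r))$ enters. If one prefers not to invoke attainment, replacing $f$ by $2f$ throughout absorbs the slack with no further change.
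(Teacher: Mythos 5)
Your proof is correct and takes essentially the same route as the paper's: both directions reduce to the observation that, after using the isometry $\|\alpha(A)\| = \|A\|$ and the finite-dimensionality (hence closedness) of $\bb{A}(B_p(s+r))$ to turn the infimum in $H_\alpha$ into an attained approximant, the two definitions are rescalings of one another. Your more careful handling of the $r=0$ boundary case is a small tidying that the paper handles in the remark immediately following Definition~\ref{ALP Definition 2} rather than inside the proof itself.
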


\proof Let $\alpha: \bb{A} \rightarrow \bb{A}$ be an almost locality-preserving automorphism, and let $f^{(\alpha)}$ be the function from Definition \ref{ALP automorphism firts definition}. To show that $\alpha$ satisfies Definition \ref{ALP Definition 2}, we prove that for any ball $B_p(s)$, one has:
      $$\alpha(\bb{A}(B_p(s)))\; \subset_{s^{d-1}f^{(\alpha)}(r)}\; \bb{A}(B_p(s + r))\;.$$
    Indeed, for any $A \in \bb{A}(B_p(s))$, the definition of $f^{(\alpha)}$ implies
    $$\inf_{B \in \bb{A}(B_p(s + r))} \left|\left|\frac{1}{||A||} \alpha(A) - B\right|\right| \;\leq\; s^{d-1} f^{(\alpha)}(r)\;.$$
    Since $\bb{A}(B_p(s + r))$ is finite dimensional, the infimum is attained. This verifies that $\alpha$ is approximately locality-preserving with $f^{(\alpha)}-$tails. Conversely, suppose that $\alpha$ satisfies Definition \ref{ALP Definition 2} with $f(r)$-tails. Then, for any ball $B_p(s)$ and $A \in \bb{A}(B_p(s))$, we have
        $$\inf_{B \in \bb{A}(B_p(s + r))} ||\alpha(A) - B||\; \leq \;s^{d-1} f(r) ||A||\;.$$
Since this bound holds for any ball $B_p(s)$, taking the supremum as in Definition \ref{ALP automorphism firts definition}, one obtains $f^{(\alpha)}(r) = f(r) \in \s{S}^+(\N_0)$. This confirms that $\alpha$ is almost locality preserving.
 \qed

\medskip
 
The equivalent characterization of ALP-automorphisms in terms of the approximately locality preserving condition is a crucial ingredient to prove that set of ALP-automor\-phisms forms a group under composition. In particular, this result implies that the inverse of an ALP-automorphism $\alpha$ also belongs to this class. Before establishing that ALP-automorphisms indeed form a group, we present two technical results. The first is inspired by ideas from \cite[Lemma 3.4]{Ranarad-Walter-Witteveen-Converse-Lieb}, and the second by \cite[Lemma 3.3]{Ranarad-Walter-Witteveen-Converse-Lieb}.
In order to lighten the notation in the following,  let us introduce the integers
\[
z_{i,j}\;:=\;\left\lfloor\frac{1}{2}d(i,j)\right\rfloor
\]
where $\lfloor x \rfloor$ denotes the integer part of $x$, and $d(i,j)$ is the distance between $i$ and $j$. 

\begin{lemma}\label{ALP-Bound for regions}
	Let $\alpha\in {\rm ALP\text{-}Aut}(\bb{A})$ with $f(r)-$tails. Then, for any finite subset $\Sigma \subset \Z^d$ and any subset $\Lambda \subset \Z^d$, the following bound holds:
	\begin{equation}\label{eq0:ALP-Bound for regions}
	\bb{A}(\Sigma)\; \subseteq_{f_{\Lambda,\Sigma}
	} \;\alpha(\bb{A}(\Lambda^c))\;,
	\end{equation}
	       where
	    \begin{equation}\label{eq:F_LS}
	    f_{\Lambda,\Sigma}\;:=\;8 \sum_{i \in \Lambda} \sum_{j \in \Sigma} f(z_{i,j})\;,
	    \end{equation}
	and the convergence of the sum (for $\Lambda$ infinite) is guaranteed by the properties \eqref{VolumenBound} and \eqref{BoundaryBound} of the metric, along with the rapid decay of $f$.
\end{lemma}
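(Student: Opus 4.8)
The plan is to reduce the assertion to a one-site localization estimate and then run the near-inclusion calculus of Remark~\ref{Generalization Near-Inclusions}. Since $\Lambda$ is at most countable, enumerate its sites. For each $i\in\Lambda$ the singleton $\{i\}$ equals the ball $B_i(1)$, so the approximately locality-preserving formulation of the ALP condition (Definition~\ref{ALP Definition 2}, equivalent to Definition~\ref{ALP automorphism firts definition} by Proposition~\ref{Equivalence Definitions ALP}), applied with $s=1$, tells us that $\alpha(A)$ lies within $f(r)\,\|A\|$ of $\bb{A}(B_i(1+r))$ for every $A\in\bb{A}(\{i\})$ and every $r\in\N_0$. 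The first goal is to promote this to the near-inclusion
\[
\alpha\bigl(\bb{A}(\{i\})\bigr)\;\subseteq_{\varepsilon_i}\;\bb{A}(\Sigma^c)\;,\qquad \varepsilon_i:=2\sum_{j\in\Sigma}f(z_{i,j})\;.
\]
Granting this, I would feed the family $\{\{i\}\}_{i\in\Lambda}$, which satisfies $\bigcap_{i\in\Lambda}\{i\}^c=\Lambda^c$, into the countable version of Remark~\ref{Generalization Near-Inclusions} (with the role of its ``$\Sigma$'' played by our $\Sigma^c$) to obtain $\bb{A}(\Sigma)=\bb{A}\bigl((\Sigma^c)^c\bigr)\subseteq_{2\sum_i\varepsilon_i}\alpha(\bb{A}(\Lambda^c))$, that is, an $\varepsilon$-inclusion with $\varepsilon=4\sum_{i\in\Lambda}\sum_{j\in\Sigma}f(z_{i,j})\le f_{\Lambda,\Sigma}$; since enlarging the constant only weakens an $\varepsilon$-inclusion, this is \eqref{eq0:ALP-Bound for regions}. (This route in fact delivers the constant $4$; I do not expect the $8$ in the statement to be optimal.)

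For the one-site near-inclusion I would trace the finitely many sites of $\Sigma=\{j_1,\dots,j_N\}$ out of $\alpha(A)$ one at a time. Normalising $\|A\|=1$ (so $\|\alpha(A)\|=1$) and writing $\Pi_\Sigma=\Pi_{j_N}\circ\cdots\circ\Pi_{j_1}$, the telescoping estimate
\[
\bigl\|\alpha(A)-\Pi_\Sigma(\alpha(A))\bigr\|\;\le\;\sum_{\ell=1}^{N}\bigl\|X_\ell-\Pi_{j_\ell}(X_\ell)\bigr\|\;,\qquad X_\ell:=\Pi_{j_{\ell-1}}\!\cdots\Pi_{j_1}(\alpha(A))\;,\quad X_1:=\alpha(A)\;,
\]
reduces matters to the $\ell$-th step. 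If $j_\ell\neq i$, put $m:=d(i,j_\ell)\ge 1$; then $j_\ell\notin B_i(m)$, and since $\alpha(A)$ is within $f(m-1)$ of $\bb{A}(B_i(m))$ and the preceding partial traces are norm-contractions that do not enlarge supports, $X_\ell$ lies within $f(m-1)$ of a subalgebra on which $\Pi_{j_\ell}$ acts as the identity; the inequality behind $p_r\le 2f_r$ from the proof of Lemma~\ref{Equivalence bounds rapidly decaimiento} then gives $\|X_\ell-\Pi_{j_\ell}(X_\ell)\|\le 2f(m-1)\le 2f(z_{i,j_\ell})$, the last step using $z_{i,j_\ell}=\lfloor m/2\rfloor\le m-1$ and monotonicity of $f$. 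For the at most one step with $j_\ell=i$ (possible only when $i\in\Sigma$) one uses the trivial bound $\|X_\ell-\Pi_{j_\ell}(X_\ell)\|\le 2\|X_\ell\|\le 2\le 2f(0)=2f(z_{i,i})$, where $f(0)\ge 1$ since $H_\alpha(1,0)=1\le f(0)$. Summing over $\ell$ and recalling that $\Pi_\Sigma(\alpha(A))\in\bb{A}(\Sigma^c)$ yields the near-inclusion with $\varepsilon_i=2\sum_{j\in\Sigma}f(z_{i,j})$.

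Finally, the countable version of Remark~\ref{Generalization Near-Inclusions} requires $\sum_{i\in\Lambda}\varepsilon_i<\infty$. For fixed $j\in\Sigma$, grouping the sites $i$ according to $d(i,j)$ and using the boundary bound~\eqref{BoundaryBound} (the volume bound~\eqref{VolumenBound} would also serve) gives $\sum_{i\in\Z^d}f(\lfloor d(i,j)/2\rfloor)\le f(0)+K_d\sum_{m\ge 1}m^{d-1}f(\lfloor m/2\rfloor)$, which is finite because $f\in\s{S}^+(\N_0)$ decays faster than any inverse power; since $\Sigma$ is finite, $\sum_{i\in\Lambda}\varepsilon_i<\infty$, and the conclusion follows as described in the first paragraph. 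I expect the only genuinely delicate point to be the bookkeeping inside the iterated partial trace: one must check that removing a site already absent from the relevant localization ball (in particular a previously traced-out site) costs nothing, which is exactly what makes the per-site errors add up rather than compound; once that is settled, the distance counting and the invocation of Remark~\ref{Generalization Near-Inclusions} are routine.
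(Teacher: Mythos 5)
Your proof is correct, but it takes a genuinely different route from the paper's, and the difference is worth spelling out. The paper establishes only the single-site-pair estimate $\alpha(\bb{A}(\{i\}))\subset_{f(z_{i,j})}\bb{A}(\{j\}^c)$ directly from the ALP condition, and then stacks it via the countable version of Remark~\ref{Generalization Near-Inclusions} \emph{twice}: first over $i\in\Lambda$ to obtain $\alpha^{-1}(\bb{A}(\{j\}))\subset_{2\sum_i f(z_{i,j})}\bb{A}(\Lambda^c)$, and then over $j\in\Sigma$ (now applied to $\alpha^{-1}$) to land on $\bb{A}(\Lambda)\subseteq_{4\sum_{i,j}f(z_{i,j})}\alpha^{-1}(\bb{A}(\Sigma^c))$. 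Since this is the ``wrong'' near-inclusion, the paper must convert it into the desired one through the commutator machinery, applying Lemma~\ref{lemma_AX} and then Lemma~\ref{Near-Inclusions}~(1); the extra conversion is precisely what doubles the constant from $4$ to $8$. You instead prove the stronger per-source-site estimate $\alpha(\bb{A}(\{i\}))\subset_{2\sum_{j\in\Sigma}f(z_{i,j})}\bb{A}(\Sigma^c)$ directly, by telescoping the partial traces $\Pi_{j_1},\dots,\Pi_{j_N}$ over $\Sigma$, controlling the $\ell$-th step with the ALP bound at scale $r=d(i,j_\ell)-1$ and the inequality $p\le 2f$ from Lemma~\ref{Equivalence bounds rapidly decaimiento} (with the degenerate case $j_\ell=i$ handled separately via $f(0)\ge1$). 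With that in hand, a single application of Remark~\ref{Generalization Near-Inclusions} over $i\in\Lambda$ finishes the argument and yields the sharper constant $4\le 8$, exactly as you note. Your telescoping is, structurally, the same mechanism that lies inside the proof of Remark~\ref{Generalization Near-Inclusions}, but aimed at the target region $\Sigma$ instead of the source region $\Lambda$; this is what lets you skip the symmetrization through $\alpha^{-1}$. Both routes reduce the geometry to $z_{i,j}$ and $B_i(1)=\{i\}$, and both invoke the countable version of Remark~\ref{Generalization Near-Inclusions} with the same summability check, so the convergence discussion at the end is identical. Your argument is shorter and gives a better constant at the cost of one more hands-on estimate; the paper's is more modular, relying entirely on the pre-established near-inclusion calculus.
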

\begin{proof}
	Let $\alpha$ be an ALP-automorphism with tail function $f(r)$, and assume without loss of generality that $f(0) = 1$. Let $i, j \in \Z^d$ and observe that $B_i(d(i,j)) \subset \{j\}^c$. If $d(i,j) < 2$, then we trivially have the inclusion
	\[
		\alpha(\bb{A}(\{i\}))\; \subseteq_{f(z_{i,j})} \;\bb{A}(\{j\}^c).
	\]
	Now, if $d(i,j) \geqslant 2$, we obtain, from the ALP condition, the chain of inclusions
	$$\alpha\left(\bb{A}(\{i\})\right) \; = \; \alpha\left(\bb{A}\left(B_i(1)\right)\right)\; \subset_{f(z_{i,j})}\; \bb{A}\left(B_{i}(1 + z_{i,j})\right)\; \subset\; \bb{A}(\{j\}^c)\;.$$  
According to Remark \ref{Generalization Near-Inclusions}, given a subset $\Lambda \subset \Z^d$, it follows that
$$\bb{A}(\{j\})\; \subset_{2 \sum_{i \in \Lambda} f(z_{i,j})}\; \alpha(\bb{A}(\Lambda^c))\;.$$
The convergence of the sum is guaranteed by the properties \eqref{VolumenBound} and \eqref{BoundaryBound} of the metric, along with the rapid decay of $f$. This relation can be equivalently written  as
$$\alpha^{-1}\left(\bb{A}(\{j\})\right)\; \subset_{2 \sum_{i \in \Lambda} f(z_{i,j})}\; \bb{A}(\Lambda^c)\;.$$
By considering again Remark \ref{Generalization Near-Inclusions} for $\alpha^{-1}$, one gets that
$$\bb{A}(\Lambda) \; \subseteq_{4 \sum_{i \in \Lambda} \sum_{j \in \Sigma} f(z_{i,j})} \; \alpha^{-1}(\bb{A}(\Sigma^c)).$$
Then, according to Lemma \ref{lemma_AX} one gets
$$\|[C', D']\|\; \leqslant\; \left(8 \sum_{i \in \Lambda} \sum_{j \in \Sigma} f(z_{i,j})\right) \|C'\| \; \|D'\|$$
for all $C' \in \bb{A}(\Lambda)$ and $D' \in \alpha^{-1}\left(\bb{A}(\Sigma)\right)$. This latter relation can be equivalently written as
$$\|[C, D]\|\; \leqslant\;  f_{\Lambda,\Sigma} \ \|C\| \; \|D\|\;,$$
with $C \in \alpha\left(\bb{A}(\Lambda)\right)$, $D \in \bb{A}(\Sigma)$ and the constant given by \eqref{eq:F_LS}. By applying Lemma \ref{Near-Inclusions} (1), one gets that 
$$
\bb{A}(\Sigma) \;\subseteq_{ f_{\Lambda,\Sigma}} \;\alpha\left(\bb{A}(\Lambda^c)\right),$$
which concludes the proof.
\end{proof}
\begin{lemma}\label{lemma: bounding sum regions}
	Let $F: \mathbb{R}_{\geqslant 0} \rightarrow \mathbb{R}_{\geqslant 0}$ be a monotonically decreasing function. Define the function $g_F: \N_0 \rightarrow \mathbb{R}_{\geqslant 0}$ by
	\begin{equation}\label{eq0: lemma bounding sum regions}
		g_F(t) \; := \; \sum_{k,l \geqslant 1}(t + k + l - 1)^{d - 1} F(t + k + l - 2)\;.
	\end{equation}
	Suppose that $g_F(t) < \infty$ for all $t \in \N_0$. 
	Then, there exists a constant $L_d > 0$ such that for any $p \in \Z^d$, $s \geqslant 1$, and $r \geqslant 0$, 
	\begin{equation}\label{eq2: lemma bounding sum regions}
		\sum_{i \in B_p(s)} \sum_{j \in B_p(s + r)^c} F(d(i,j)) \;\leqslant\; L_d \, s^{d-1} g_F(r)\;.
	\end{equation} 
\end{lemma}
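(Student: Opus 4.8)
The plan is to estimate the double sum by slicing $B_p(s)$ and $B_p(s+r)^c$ into concentric spherical shells, and --- this is the crucial point --- to count the outer shells by their distance to the \emph{inner} point $i$ rather than to the centre $p$, so that their cardinalities depend only on the combined shell index and not on $s$.

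Concretely, I would first write $B_p(s)=\bigsqcup_{k=1}^{s}\partial B_p(s-k+1)$, a telescoping decomposition into the shells $\partial B_p(s-k+1)=\{i:s-k-1<d(p,i)\le s-k\}$, with the innermost piece $\partial B_p(1)=B_p(1)=\{p\}$. By \eqref{BoundaryBound} (after enlarging $K_d$ if necessary so that $K_d\ge1$, which absorbs the degenerate shell), each shell satisfies $|\partial B_p(s-k+1)|\le K_d\,s^{d-1}$ for every $k\in\{1,\dots,s\}$.

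The geometric heart of the argument is: if $i\in\partial B_p(s-k+1)$ and $j\in B_p(s+r)^c$, then the triangle inequality gives $d(i,j)\ge d(p,j)-d(p,i)>(s+r-1)-(s-k)=r+k-1$, so that $j\in B_i(r+k)^c$; thus for a fixed $i$ in the $k$-th inner shell the \emph{entire} outer region lies inside $B_i(r+k)^c$. I then decompose $B_i(r+k)^c=\bigsqcup_{l\ge1}\partial B_i(r+k+l)$; on $\partial B_i(r+k+l)$ one has $d(i,j)>r+k+l-2$, so monotonicity of $F$ yields $F(d(i,j))\le F(r+k+l-2)$, while \eqref{BoundaryBound} gives $|\partial B_i(r+k+l)|\le K_d(r+k+l-1)^{d-1}$. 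Summing over $l$ produces
\[
\sum_{j\in B_p(s+r)^c}F(d(i,j))\;\le\;K_d\sum_{l\ge1}(r+k+l-1)^{d-1}F(r+k+l-2),
\]
a bound that no longer depends on $i$ or on $s$.

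It then remains to sum over $i$: each of the (at most $s$) inner shells contributes at most $K_d s^{d-1}$ copies of the last display, and since all terms are non-negative I may extend the sum over $k$ from $\{1,\dots,s\}$ to all $k\ge1$, obtaining $K_d^2\,s^{d-1}\sum_{k,l\ge1}(r+k+l-1)^{d-1}F(r+k+l-2)=K_d^2\,s^{d-1}g_F(r)$; this gives \eqref{eq2: lemma bounding sum regions} with $L_d=K_d^2$, the finiteness of $g_F(r)$ being the standing hypothesis. The one thing to watch --- and really the only subtlety --- is exactly the re-centring trick: decomposing $B_p(s+r)^c$ into shells centred at $p$ would give shells of size $\sim(s+r+l)^{d-1}$ and hence an $s^{2(d-1)}$ prefactor, whereas centring the outer shells at $i$ confines all $s$-dependence to the count of inner shells. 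The remaining bookkeeping (degenerate shells of radius $0$ or $1$, convergence of the $l$-sums) is routine given $K_d\ge1$ and the hypothesis $g_F<\infty$.
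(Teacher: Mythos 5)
Your proof is correct and follows essentially the same approach as the paper's: foliate $B_p(s)$ into $p$-centred shells, use the reverse triangle inequality to re-centre the outer region at each point $i$, foliate $B_i(r+k)^c$ into $i$-centred shells, and conclude with $L_d = K_d^2$. The only cosmetic difference is that you use the uniform bound $|\partial B_p(s-k+1)| \leqslant K_d s^{d-1}$ (with $K_d \geqslant 1$) across all inner shells, whereas the paper uses the tighter $K_d(s-k)^{d-1}$ and treats the degenerate shell $k=s$ separately; both routes are valid and yield the same constant.
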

\begin{proof}
	Given $p \in \Z^d$, $s \geqslant 1$ and $r \geqslant 0$, we aim to bound the quantity
	$$S_p(s,r) \; := \; \sum_{i \in B_p(s)} \sum_{j \in B_p(s + r)^c} F(d(i,j))\;.$$ 
	From the definition of $B_p(s)$ one gets that $B_p(s + r)^c = \{j \in \Z^d: \, d(p,j) > s + r - 1\}$. Let us consider the shells $\partial B_p(R + 1) = B_p(R + 1) \setminus B_p(R)$ for all $R \geqslant 0$. We can foliate the ball $B_p(s)$ using these disjoint shells
	$$B_p(s) = \bigcup_{R = 0}^{s-1} \partial B_p(R + 1)\;.$$
	Note that $\partial B_p(1) = \{p\}$. We re-index this foliation by letting $R = s - k$. The outer shell of $B_p(s)$ corresponds to $R = s-1$ which means $k = 1$. The inner-most shell, for which $R = 0$, corresponds to $k = s$. Let $\Lambda_k = \partial B_p(s - k + 1)$, for $k = 1, \ldots, s$. The foliation thus takes the form
	$$B_p(s) = \bigcup_{k = 1}^{s} \Lambda_k\;.$$
	From~\eqref{BoundaryBound}, it follows that $|\Lambda_k| \leqslant K_d (s - k)^{d-1}$ for $k < s$, and $|\Lambda_s| = 1$. Now, for each $i \in \Lambda_k$, we define the quantity
	$$M(i) \; := \; \sum_{j \in B_p(s + r)^c} F(d(i,j))\;.$$
	For any $j \in B_p(s + r)^c$, we can bound its distance to $i$ as
	$$d(j,i)\; \geqslant \;d(j,p) - d(i,p)\; > \;(r + s - 1) - (s - k)\; =\; r + k - 1\;.$$
	The first step uses the reverse triangle inequality, and the second uses the bounds $d(j,p) > s+r-1$ and $d(i,p) \le s-k$ (since $i \in \Lambda_k = \partial B_p(s-k+1)$).	
 The inequality $d(j,i) > r+k-1$ implies that $B_p(s + r)^c \subseteq B_i(r + k)^c$. Since $F$ is non-negative, we can bound $M(i)$ by summing over this larger, $i$-centered set
\begin{equation}\label{eq0: proof bounding sum regions}
	M(i)\; \leqslant \;\sum_{j \in B_i(r + k)^c} F(d(i,j))\;.
\end{equation}
Similarly to how we foliated $B_p(s)$, we can foliate the external domain $B_i(r + k)^c$ as
$$B_i(r + k)^c \; = \; \bigcup_{l = 1}^\infty \partial B_i (r + k + l)\;.$$
Given $j \in \partial B_i (r + k + l)$, we have $d(i,j) > r+k+l-2$. Since $F$ is monotonically decreasing, $F(d(i,j)) \leqslant F(r+k+l-2)$. The right hand side in~\eqref{eq0: proof bounding sum regions} is bounded as
\begin{equation}\label{eq1: proof bounding sum regions}
	\sum_{j \in B_i(r + k)^c} F(d(i,j)) \leqslant K_d \sum_{l = 1}^\infty (r + k + l - 1)^{d-1} F(r + k + l - 2),
\end{equation}
where $K_d$ is the geometric constant in~\eqref{BoundaryBound}. 
We can now bound  $S_p(s,r)$ by
\begin{align*}
	S_p(s,r) \;&=\; \sum_{i \in B_p(s)} M(i) \;=\; \sum_{k = 1}^{s} \sum_{i \in \Lambda_k} M(i) \\
	&\leqslant\; K_d \sum_{k = 1}^{s} |\Lambda_k| \left(\sum_{l = 1}^\infty (r + k + l - 1)^{d-1} F(r + k + l - 2)\right) \\
	&\leqslant\; K_d^2 \sum_{k = 1}^{s - 1}(s - k)^{d-1} \left(\sum_{l = 1}^\infty (r + k + l - 1)^{d-1} F(r + k + l - 2)\right) \\
	&\quad + K_d \sum_{l = 1}^\infty (r + s + l - 1)^{d-1} F(r + s + l - 2)\\
	&\leqslant\; K_d^2 s^{d-1} \sum_{k,l = 1}^{\infty} (r + k + l - 1)^{d-1} F(r + k + l - 2)\\
&=\; K_d^2 s^{d-1} g_F(r)\;.
\end{align*}
Note that this bound is uniform in $p \in \Z^d$, thus concluding the proof. Hence, the constant $L_d$ in the statement is equal to $K_d^2$.
\end{proof}
Having established these preliminary results, we now turn to prove the group property of the space of ALP-automorphisms.
\begin{proposition}\label{prop_grup}
The space of ${\rm ALP\text{-}Aut}(\bb{A})$ is a group under composition.
\end{proposition}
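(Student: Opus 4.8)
The plan is to verify that ${\rm ALP\text{-}Aut}(\bb{A})$ contains the identity and is closed under composition and inversion (associativity and the presence of a two-sided identity being inherited from the group ${\rm Aut}(\bb{A})$). The identity is clear, as it is local (a quantum cellular automaton of range $0$) and ${\rm L\text{-}Aut}(\bb{A})\subset{\rm ALP\text{-}Aut}(\bb{A})$. For the rest I would work with the characterization of Definition~\ref{ALP Definition 2}, legitimate by Proposition~\ref{Equivalence Definitions ALP}, and repeatedly use that every $\phi\in{\rm Aut}(\bb{A})$ is a $C^*$-isometry, so $\|\phi(X)-\phi(Y)\|=\|X-Y\|$ and $\|\phi(X)\|=\|X\|$.

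For composition, let $\alpha,\beta\in{\rm ALP\text{-}Aut}(\bb{A})$ with tail functions $f_\alpha,f_\beta\in\s{S}^+(\N_0)$. Fix a ball $B_p(s)$ with $s\ge 1$, a radius $r\ge 1$, and split $r=r_1+r_2$ with $r_1:=\lfloor r/2\rfloor$ and $r_2:=r-r_1$. Given $A\in\bb{A}(B_p(s))$ with $\|A\|=1$, pick a best norm approximation $A'$ of $\alpha(A)$ inside the finite-dimensional algebra $\bb{A}(B_p(s+r_1))$; by Definition~\ref{ALP Definition 2} for $\alpha$ one has $\|\alpha(A)-A'\|\le s^{d-1}f_\alpha(r_1)$, while comparing with the approximant $0$ gives $\|\alpha(A)-A'\|\le 1$ and hence $\|A'\|\le 2$. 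Applying $\beta$ (an isometry) and then Definition~\ref{ALP Definition 2} for $\beta$ to $A'$, there is $A''\in\bb{A}(B_p(s+r_1+r_2))=\bb{A}(B_p(s+r))$ with $\|\beta(A')-A''\|\le(s+r_1)^{d-1}f_\beta(r_2)\|A'\|$, so that
$$
\|\beta(\alpha(A))-A''\|\;\le\;s^{d-1}f_\alpha(r_1)\;+\;2(s+r_1)^{d-1}f_\beta(r_2)\;.
$$
The first term is $s^{d-1}f_\alpha(\lfloor r/2\rfloor)$. For the second I would split cases: if $r_1\le s$ then $(s+r_1)^{d-1}\le 2^{d-1}s^{d-1}$; if $r_1>s$ then $s^{d-1}\ge 1$ and $(s+r_1)^{d-1}\le(2r)^{d-1}\le s^{d-1}(2r)^{d-1}$; in both cases $2(s+r_1)^{d-1}f_\beta(r_2)\le s^{d-1}\,2^{d}\max\{1,r^{d-1}\}f_\beta(\lceil r/2\rceil)$, which is $s^{d-1}$ times a rapidly decaying function since $f_\beta$ beats every polynomial. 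Hence $\|\beta(\alpha(A))-A''\|\le s^{d-1}g(r)$ with $g\in\s{S}(\N_0)$; replacing $g$ by a dominating element of $\s{S}^+(\N_0)$ (Appendix~\ref{app:Rap_seq}) with $g(0)\ge 1$ shows, via Proposition~\ref{Equivalence Definitions ALP}, that $\beta\circ\alpha\in{\rm ALP\text{-}Aut}(\bb{A})$.

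For inversion, let $\alpha\in{\rm ALP\text{-}Aut}(\bb{A})$ with tails $f\in\s{S}^+(\N_0)$. Apply Lemma~\ref{ALP-Bound for regions} with $\Sigma:=B_p(s)$ (finite) and $\Lambda:=B_p(s+r)^c$, so $\Lambda^c=B_p(s+r)$: this yields $\bb{A}(B_p(s))\subseteq_\varepsilon\alpha(\bb{A}(B_p(s+r)))$ with $\varepsilon=8\sum_{i\in B_p(s+r)^c}\sum_{j\in B_p(s)}f(z_{i,j})$. Applying the isometry $\alpha^{-1}$ gives $\alpha^{-1}(\bb{A}(B_p(s)))\subseteq_\varepsilon\bb{A}(B_p(s+r))$, which is exactly the inclusion required by Definition~\ref{ALP Definition 2}; it remains to bound $\varepsilon$. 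Put $F(t):=f(\lfloor t/2\rfloor)$, monotonically decreasing with $f(z_{i,j})=F(d(i,j))$; then $\varepsilon=8\sum_{j\in B_p(s)}\sum_{i\in B_p(s+r)^c}F(d(i,j))$ is precisely $8\,S_p(s,r)$ in the notation of Lemma~\ref{lemma: bounding sum regions}, whence $\varepsilon\le 8L_d\,s^{d-1}g_F(r)$. A reindexing gives $g_F(t)=\sum_{n\ge 0}(n+1)(t+n+1)^{d-1}f(\lfloor(t+n)/2\rfloor)$, which is finite for all $t$ and lies in $\s{S}(\N_0)$ since $f$ decays faster than every polynomial; dominating it by an element of $\s{S}^+(\N_0)$ with value $\ge 1$ at $0$ gives $\alpha^{-1}\in{\rm ALP\text{-}Aut}(\bb{A})$. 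Together with the previous steps, this proves that ${\rm ALP\text{-}Aut}(\bb{A})$ is a group.

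The only genuine difficulty is keeping the prefactors in the right form: a careless chaining of the two approximations in the composition step produces an $s^{2(d-1)}$ prefactor, which is not of the ALP type $s^{d-1}$. The two points that avoid this are (i) bounding the intermediate approximant $A'$ by $\|A'\|\le 2$, independently of $s$ (using that the best approximation is no worse than the approximant $0$), which prevents the surface-area factors from multiplying, and (ii) the elementary case split $r_1\le s$ versus $r_1>s$, exploiting $s^{d-1}\ge 1$ to reabsorb the residual polynomial-in-$r$ loss into the rapidly decaying tail. The remaining ingredients — convergence of the geometric double sums via \eqref{VolumenBound}--\eqref{BoundaryBound}, the isometry property, and the stability of $\s{S}^+(\N_0)$ under finite sums and domination — are routine given the earlier lemmas.
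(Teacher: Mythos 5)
Your proposal is correct and follows essentially the same two-part strategy as the paper: for composition, a two-stage approximation with $r=\lfloor r/2\rfloor+\lceil r/2\rceil$, bounding the intermediate approximant by $2$, and absorbing the polynomial prefactor into the rapidly decaying tail (the paper uses $(a+b)^n\le 2^n(a^n+b^n)$ where you use a case split, which is inessential); for inversion, the same application of Lemma~\ref{ALP-Bound for regions} with $\Sigma=B_p(s)$, $\Lambda=B_p(s+r)^c$ followed by Lemma~\ref{lemma: bounding sum regions} to control the resulting double sum. The only cosmetic difference is that you treat $\beta\circ\alpha$ while the paper treats $\alpha\circ\beta$.
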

\begin{proof}
We first prove that ${\rm ALP\text{-}Aut}(\bb{A})$ is closed under composition.
Let $\alpha, \beta$ be two ALP-automorphisms with tail functions $f^{(\alpha)}$ and $f^{(\beta)}$, respectively. We will show that their composition $\alpha \circ \beta$ is also an ALP-automorphism. In fact, for any ball $B_p(s) \subset \Z^d$ and any $A \in \bb{A}(B_p(s))$ with $\|A\| = 1$, we will provide the bound
$$\inf_{B \in \bb{A}(B_p(s + r))} \|\alpha \circ \beta(A) - B\|\;\leq\; s^{d-1} h(r)$$
for some function $h\in\s{S}^+(\N_0)$. 
Define $n(r) := \lfloor r/2 \rfloor$ and $l(r) := r - \lfloor r/2 \rfloor$. Since $\bb{A}(B_p(R))$ is a finite-dimensional closed subalgebra, the infimum is attained for any $R$. 
Note that $B_p(s + l(r))\subseteq B_p(s + r)$. Let $D \in \bb{A}(B_p(s + l(r)))$ be such that
\begin{equation*}
	\inf_{B \in \bb{A}(B_p(s + l(r)))} \|\beta(A) - B \| \;=\; \|\beta(A) - D\|\;.
\end{equation*} 
By the ALP condition for $\beta$, we have
	\begin{equation}\label{eq0: gp}
		\|\beta(A) - D\| \;\leqslant \;s^{d-1} f^{(\beta)}(l(r))\;.
	\end{equation}
	Moreover, using the reverse triangle, one gets
\begin{equation}\label{eq1: gp}
	\|D \|\; \leq\; \inf_{B \in \bb{A}(B_p(s + l(r)))} \|\beta(A) - B \|+\|\beta(A)\|
	\; \leq\;
	2 \;.
\end{equation}
Similarly, let $C\in \bb{A}(B_p(s + r))$ be the element attaining the infimum for $\alpha(D)$:
$$
\inf_{B \in\bb{A}(B_p(s + r))} \|\alpha (D) - B \| \;=\; \|\alpha(D) - C\|\;.
$$
Since $D$ is supported in $B_p(s + l(r))$ and $s + l(r) + n(r) = s + r$, the ALP condition for $\alpha$ implies
\begin{equation}\label{eq2: gp}
\|\alpha(D) - C\| \leqslant (s + l(r))^{d-1} \, f^{(\alpha)}(n(r))  \; ||D|| \leqslant  2^{d-1} (s^{d-1} + l(r)^{d-1}) \, f^{(\alpha)}(n(r))  \; ||D||\;,
\end{equation}
where, in the last bound, we have used the inequality $(a + b)^n \leqslant 2^n (a^n + b^n)$ for any $n \in \N_0$ and $a,b > 0$.
 Combining~\eqref{eq0: gp}, \eqref{eq1: gp}, and~\eqref{eq2: gp}, one obtains
\begin{align*}
	\inf_{B \in \bb{A}(B_p(s + r))} \|\alpha ( \beta(A)) - B\| \;& \leq\; \|\alpha( \beta (A)) - C - \alpha(D) + \alpha(D)\| \\
	& \leq\; \|\beta(A) - D\| + \|\alpha(D) - C\| \\
	& \leq \;s^{d - 1} f^{(\beta)}(l(r)) + 2^{d-1} (s^{d-1} + l(r)^{d-1}) f^{(\alpha)}(n(r)) \|D\| \\
	& \leq \; s^{d-1} \left( f^{(\beta)}(l(r)) + 2^{d} (1 + l(r)^{d-1})f^{(\alpha)}(n(r)) \right)\;.
\end{align*}
Therefore, one concludes that $\alpha \circ \beta$ has a tail function provided by
\begin{equation}\label{eq3: gp}
	h(r)\; :=\; f^{(\beta)}(l(r)) + 2^{d} (1 + l(r)^{d-1})f^{(\alpha)}(n(r))\;.
\end{equation}
Now we prove that ${\rm ALP\text{-}Aut}(\bb{A})$ is closed under inversion. Let $\alpha$ be an ALP automorphism with tail function $f^{(\alpha)}$. To show that $\alpha^{-1}$ is also an ALP-automorphism, we will use Lemma \ref{ALP-Bound for regions}. Specifically, we take $\Sigma = B_p(s)$ and $\Lambda = B_p(s + r)^c$ in the statement of the lemma. Then, by applying $\alpha^{-1}$ to both sides of the inclusion \eqref{eq0:ALP-Bound for regions}, we obtain:
$$\alpha^{-1}(\bb{A}(B_p(s)))\; \subseteq_{\widetilde{h}_p(s, r)} \;\bb{A}(B_p(s + r))$$
where to simplify the notation we introduced
the function $\widetilde{h}_p: \N \times \N_0 \rightarrow \R_{+}$ given by
$$\widetilde{h}_p(s, r) \;:= \;f^{(\alpha)}_{B_p(s + r)^c, B_p(s)} = 8 \sum_{i \in B_p(s + r)^c} \sum_{j \in B_p(s)} f^{(\alpha)}(z_{i,j}) $$
for each $p \in \Z^d$. To prove that $\alpha^{-1}$ is ALP, it suffices to show that there exists ${h} \in \s{S}(\N_0)$ such that $\widetilde{h}_p(s,r) \leqslant s^{d-1} h(r)$ uniformly in $p$. Since $f^{(\alpha)}$ belongs to $\s{S}^+(\N_0)$, this function satisfies the hypothesis in Lemma~\ref{lemma: bounding sum regions}. Therefore, for any $p \in \Z^d,$ $s \geqslant 1$ and $r \geqslant 0$, it follows that
$$\widetilde{h}_p(s, r)\;\leqslant\; 8L_d s^{d-1} \sum_{k,l \geqslant 1} (r + k + l - 1)^{d-1} f^{(\alpha)}\left(\left\lfloor \frac{r + k + l - 2}{2}\right\rfloor\right)\;.$$
Finally, since $f^{(\alpha)} \in \s{S}^+(\N_0)$, it then follows that the function 
\begin{equation}\label{eq4: gp}
	h(r) \; := \;
	8L_d \sum_{k,l \geqslant 1} (r + k + l - 1)^{d-1} f^{(\alpha)}\left(\left\lfloor \frac{r + k + l - 2}{2}\right\rfloor\right)
\end{equation}  
belongs to $\s{S}(\N_0)$. Here, by the discussion in Appendix \ref{app:Rap_seq}, we can assume that $h$ in fact belongs to $\s{S}^+(\N_0)$.
\end{proof}
\begin{remark}
In the references~\cite{Kapustin-Sopenko-Lieb} and~\cite{Ranarad-Walter-Witteveen-Converse-Lieb} the authors consider the notion of almost locality-preserving (ALP) automorphisms in the one-dimensional case with the standard metric on $\Z$. In \cite{Kapustin-Sopenko-Lieb}, the definition they adopt is similar to Definition \ref{ALP automorphism firts definition}, but instead of balls, the condition is required to hold over all finite intervals in $\Z$. On the other hand, in \cite{Ranarad-Walter-Witteveen-Converse-Lieb}, the authors employ a version similar to Definition \ref{ALP Definition 2}, but the approximation is required to hold uniformly over all intervals --finite or infinite. In this one-dimensional setting, they show that if $\alpha$ is ALP with tail function $f$, then $\alpha^{-1}$ is also ALP with tail function $4f$, which follows from their more rigid interval-based setup.
In our work, motivated by the goal of extending the notion of almost locality-preserving automorphisms to arbitrary spatial dimension $d$, we have slightly adapted the definition and focused instead on metric balls as basic regions. While this adjustment sacrifices some of the simplicity, such as the neat $4f$ bound for $\alpha^{-1}$, it allows us to formulate a consistent definition valid for general lattices $(\Z^d, d)$, beyond the $d=1$ case.
\hfill $\blacktriangleleft$
\end{remark}
We are now in position to justify a fact left open at the end of the Section \ref{sec-ALP}.
\begin{corollary}\label{corol_aut-F}
Any element $\alpha\in {\rm ALP\text{-}Aut}(\bb{A})$ restricts to an automorphism of the Fr\'echet algebra $\bb{A}_\infty$.
\end{corollary}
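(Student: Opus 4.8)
The plan is to obtain the corollary by combining the two principal results of this section. Let $\alpha \in {\rm ALP\text{-}Aut}(\bb{A})$. Proposition~\ref{Proposicion Continuidad ALP en Frechet} already gives that $\alpha(\bb{A}_\infty) \subseteq \bb{A}_\infty$ and that the restriction $\alpha|_{\bb{A}_\infty}$ is a continuous $\ast$-homomorphism of $\bb{A}_\infty$ for the Fr\'echet topology, with explicit continuity constants recorded in~\eqref{eq;mm3}. What is still missing in order to call this restriction an \emph{automorphism} of the Fr\'echet algebra is the continuity of an inverse, and this is exactly what the group structure provides.

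First I would invoke Proposition~\ref{prop_grup}: since ${\rm ALP\text{-}Aut}(\bb{A})$ is a group under composition, the inverse $\alpha^{-1} \in {\rm Aut}(\bb{A})$ again belongs to ${\rm ALP\text{-}Aut}(\bb{A})$. Applying Proposition~\ref{Proposicion Continuidad ALP en Frechet} a second time --- now to $\alpha^{-1}$ --- yields $\alpha^{-1}(\bb{A}_\infty) \subseteq \bb{A}_\infty$ together with the Fr\'echet continuity of $\alpha^{-1}|_{\bb{A}_\infty}$.

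Then I would check that the two restrictions are mutually inverse as maps on $\bb{A}_\infty$. This is immediate from the inclusion $\bb{A}_\infty \subset \bb{A}$: for $A \in \bb{A}_\infty$ one has $(\alpha|_{\bb{A}_\infty}) \circ (\alpha^{-1}|_{\bb{A}_\infty})(A) = (\alpha \circ \alpha^{-1})(A) = A$ and symmetrically $(\alpha^{-1}|_{\bb{A}_\infty}) \circ (\alpha|_{\bb{A}_\infty})(A) = A$, because $\alpha$ and $\alpha^{-1}$ are genuine inverses on the ambient $C^*$-algebra and restriction commutes with composition. Hence $\alpha|_{\bb{A}_\infty}$ is a bijective continuous $\ast$-homomorphism of $\bb{A}_\infty$ with continuous inverse, that is, an element of ${\rm Aut}(\bb{A}_\infty)$.

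I do not expect a genuine obstacle in this last assembly: the substantive content --- closure of ${\rm ALP\text{-}Aut}(\bb{A})$ under inversion, which relies on the approximately locality preserving reformulation (Proposition~\ref{Equivalence Definitions ALP}), Lemma~\ref{ALP-Bound for regions}, and the geometric summation estimate of Lemma~\ref{lemma: bounding sum regions} --- has already been dispatched in Proposition~\ref{prop_grup}. The only point requiring a modicum of care is the bookkeeping that inverting $\alpha$ in ${\rm Aut}(\bb{A})$ and then restricting coincides with restricting first and inverting in ${\rm Aut}(\bb{A}_\infty)$; and, if desired, one could carry through~\eqref{eq3: gp} and~\eqref{eq4: gp} to give explicit tail functions, hence via~\eqref{eq;mm3} explicit continuity constants, for $\alpha^{-1}|_{\bb{A}_\infty}$, though this is superfluous for the stated result.
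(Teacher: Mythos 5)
Your proposal is correct and follows exactly the paper's own argument: use Proposition~\ref{prop_grup} to place $\alpha^{-1}$ in ${\rm ALP\text{-}Aut}(\bb{A})$, then apply Proposition~\ref{Proposicion Continuidad ALP en Frechet} to both $\alpha$ and $\alpha^{-1}$, and observe that the restrictions are mutually inverse on $\bb{A}_\infty$. The only difference is that you spell out the (trivial) compatibility of restriction with composition, which the paper leaves implicit.
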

\proof
Proposition \ref{prop_grup} implies that also $\alpha^{-1}\in {\rm ALP\text{-}Aut}(\bb{A})$. By Propositions \ref{Proposicion Continuidad ALP en Frechet}, one concludes that both $\alpha$ and $\alpha^{-1}$
restricts to continuous $\ast$-homomorphisms of the Fr\'echet algebra $\bb{A}_\infty$. Therefore, such a restriction is a automorphism of $\bb{A}_\infty$.
\qed

\subsection{Lieb-Robinson type automorphisms}
With the Fr\'echat algebra and its automorphism group in hand, we return to Lieb-Robinson bounds. We will follows here the definition of \cite{Nachtergaele-Sims-Young-Quasilocality-bounds} and \cite{Naaijkens-peter}.

\medskip

One says that $F:[0,+\infty]\to[0,+\infty)$ is a \emph{reproducing function} (or $F$-function) for $(\Z^d,d)$ if:
    \begin{enumerate}
        \item[(i)] it is uniformly integrable, meaning that
        $$||F|| \;:= \; \sup_{x \in \Z^d} \sum_{y \in \Z^d} F(d(x,y))\; <\; +\infty\;;$$
        \vspace{1mm}
        \item[(ii)] it satisfies a convolution condition
        $$
        C_F\; :=\; \sup_{x, y \in \Z^d} \sum_{z \in \Z^d} \frac{F(d(x,z))F(d(z,y))}{F(d(x,y))}\; <\; +\infty\;.
        $$
    \end{enumerate}

\medskip

While the following definition does not refer to any derivation, it is inspired by results on well-behaved derivations in spin algebras, see in particular \cite[Corollary 3.6]{Nachtergaele-Sims-Young-Quasilocality-bounds} and~\cite[Lemma 3.3]{Ranarad-Walter-Witteveen-Converse-Lieb}.
\begin{definition}[Lieb-Robinson type automorphisms]\label{defLR-aut}
    Let $\alpha\in {\rm Aut}(\bb{A})$ be an automorphism on the quasi-local algebra $\bb{A}$. We say that $\alpha$ is of \emph{Lieb-Robinson type} if there exists a monotonically decreasing  reproducing function $F$  such that:
    \begin{enumerate}
        \item[(i)] the inequality 
        $$||[\alpha(A), B]||\; \leq\; ||A|| \ ||B|| \sum_{i \in \Lambda} \sum_{j \in \Sigma} F(d(i,j))$$
        holds for all finite subsets $\Lambda,\Sigma \in \s{P}_0(\Z^d)$ and for all $A \in \bb{A}(\Lambda)$, $B \in \bb{A}(\Sigma)$; and
                \vspace{1mm}
        \item[(ii)] the function $f_F: \N_0 \to [0,+\infty)$, defined by
        \begin{equation}\label{eq:f_F}
        f_F(r)\;: =\; K_d^2 g_F(r)\;,
       \end{equation}
         where $g_F$ is the function defined in \eqref{eq0: lemma bounding sum regions}, belongs to $\s{S}(\N_0)$, and the constant $K_d$ is that appearing in the metric condition \eqref{BoundaryBound}.
    \end{enumerate}
\end{definition}

\medskip

\begin{remark}
	For reproducing functions associated with the lattice $(\Z^d,d)$, one of the most natural and widely used examples are those with polynomial decay, such as
	\begin{equation}\label{eq: Repr Function Pol}
		F_{\rm Pol}(r) \;=\; \frac{1}{(1+r)^{d - 1 + \nu}}, \qquad \nu > 1.	
	\end{equation}
A detailed discussion of this family of reproducing functions can be found in \cite[Appe\-ndix~A]{Nachtergaele-Sims-Young-Quasilocality-bounds}. We shall return to the class of automorphisms associated with such functions in the next section. 
For the present rapidly decaying case, however, introducing an exponential weight
\[
F_a(r) \;=\; e^{-a r}\,F_{\rm Pol}(r), \qquad a>0,
\]
yields a reproducing function whose associated $f_{F_a}$ satisfies conditions (i) and (ii) of Defi\-nition~\ref{defLR-aut}.
In the next section, we shall return to automorphisms satisfying condition~(i) but associa\-ted with reproducing functions of the form~\eqref{eq: Repr Function Pol}, see Definition~\ref{Polynomial LR def}.
\hfill $\blacktriangleleft$
\end{remark}

\begin{remark}
In physical application one usually start with an injective $\ast$-homomorphism
$\alpha':\bb{A}_{\rm loc}\to \bb{A}$ defined initially only on local observable which verifies properties (i) and (ii) in Definition \ref{defLR-aut}. Then  $\alpha'$ extends to a unique automorphism $\alpha\in {\rm Aut}(\bb{A})$ of Lieb-Robinson type by continuity. Since $\alpha$ restricted to any subalgebra $\bb{A}(\Lambda)$ for $\Lambda \subset \Z$ finite coincides with $\alpha'$, it follows trivially that $\alpha$ fulfils condition (i) in Definition \ref{defLR-aut}.
\hfill $\blacktriangleleft$
\end{remark}

\medskip

In the following proposition, we show that any automorphism $\alpha$ of Lieb-Robinson type in the sense of Definition~\ref{defLR-aut} is almost locality-preserving in the sense of Definition~\ref{ALP Definition 2}. 
In turn, any Lieb-Robinson type automorphism is Fr\'echet continuous. 

\begin{proposition}\label{Lemma Nuevo Lieb Robinson}
    Let $\alpha\in {\rm Aut}(\bb{A})$ be an automorphism on the spin algebra $\bb{A}$. Suppose there exists a monotonically decreasing reproducing function $F:[0,+\infty)\to[0,+\infty)$ such that     
    \[
    \|[\alpha(A), B]\|\; \leq \;\|A\| \; \|B\| \sum_{i \in \Lambda_1} \sum_{j \in \Lambda_2} F(d(i,j))\;.
    \]
for all $\Lambda_1, \Lambda_2 \subseteq \Z^d$ and all $A \in \bb{A}(\Lambda_1)$, $B \in \bb{A}(\Lambda_2)$. Assume further that for all $t \geqslant 0$, the quantity 
$$\sum_{k,l = 1}^\infty (t + k + l - 1)^{d - 1} F(t + k + l - 2)$$
 is finite. Then, for any ball $B_p(s)$, one has
\[
\alpha(\bb{A}(B_p(s)))\; \subset_{s^{d-1}f_F(r)}\; \bb{A}(B_p(s + r))\;,
\]
    where $f_F$ is the function defined in \eqref{eq:f_F}. In particular, if $F$ is rapidly decreasing, the automorphism $\alpha$ is an ALP automorphism in the sense of Definition~\ref{ALP Definition 2}.
    \end{proposition}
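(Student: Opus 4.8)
\proof
The plan is to reduce the claim to a single commutator estimate and then feed that estimate into the near-inclusion criterion of Lemma~\ref{Near-Inclusions}(1). First I would fix a ball $B_p(s)$, an integer $r\geqslant 0$, and an observable $A\in\bb{A}(B_p(s))$ with $\|A\|=1$ (the case $\|A\|=0$ being trivial), recording that $\|\alpha(A)\|=\|A\|=1$ since $\alpha$ is an automorphism. Putting $\Lambda:=B_p(s+r)^c$, the hypothesis applied with $\Lambda_1=B_p(s)$ and $\Lambda_2=\Lambda$ gives, for every $B\in\bb{A}(\Lambda)$,
\[
\|[\alpha(A),B]\|\;\leqslant\;\|B\|\sum_{i\in B_p(s)}\sum_{j\in B_p(s+r)^c}F(d(i,j))\;.
\]

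Next I would bound this double sum. The finiteness hypothesis in the statement --- that $\sum_{k,l\geqslant 1}(t+k+l-1)^{d-1}F(t+k+l-2)$ is finite for every $t$ --- is precisely the condition $g_F(t)<\infty$ required by Lemma~\ref{lemma: bounding sum regions}. Applying that lemma (whose constant $L_d$ equals $K_d^2$) together with the definition $f_F(r)=K_d^2 g_F(r)$ from \eqref{eq:f_F}, one obtains
\[
\sum_{i\in B_p(s)}\sum_{j\in B_p(s+r)^c}F(d(i,j))\;\leqslant\;K_d^2\,s^{d-1}g_F(r)\;=\;s^{d-1}f_F(r)\;,
\]
so that $\|[\alpha(A),B]\|\leqslant s^{d-1}f_F(r)\,\|\alpha(A)\|\,\|B\|$ for all $B\in\bb{A}(\Lambda)$.

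I would then invoke Lemma~\ref{Near-Inclusions}(1), applied with the identity automorphism in place of $\alpha$, with the subset $\Lambda=B_p(s+r)^c$, with $\alpha(A)$ playing the role of the element there, and with $\varepsilon=s^{d-1}f_F(r)$. Since $\Lambda^c=B_p(s+r)$, the lemma yields $\alpha(A){\in}_{s^{d-1}f_F(r)}\bb{A}(B_p(s+r))$; letting $A$ range over all norm-one elements of $\bb{A}(B_p(s))$ gives $\alpha(\bb{A}(B_p(s)))\subset_{s^{d-1}f_F(r)}\bb{A}(B_p(s+r))$, which is the asserted inclusion. For the final sentence, if $F$ is rapidly decreasing then $f_F=K_d^2 g_F\in\s{S}(\N_0)$ (this is exactly condition (ii) of Definition~\ref{defLR-aut}), and by the domination argument of Appendix~\ref{app:Rap_seq} we may take it in $\s{S}^+(\N_0)$; the inclusion just established is then precisely the defining property of an approximately locality-preserving automorphism with $f_F$-tails in the sense of Definition~\ref{ALP Definition 2}.

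I do not expect a genuine obstacle here: the argument is essentially an assembly of Lemma~\ref{lemma: bounding sum regions} and Lemma~\ref{Near-Inclusions}(1). The two points needing care are that the region $\Lambda=B_p(s+r)^c$ is infinite --- harmless, because the hypothesis is posited for arbitrary subsets $\Lambda_1,\Lambda_2\subseteq\Z^d$ and the resulting double sum is finite by Lemma~\ref{lemma: bounding sum regions} --- and the bookkeeping of the geometric constant, making sure the factor $L_d=K_d^2$ produced by Lemma~\ref{lemma: bounding sum regions} cancels against the normalization $f_F=K_d^2 g_F$ so that one lands on $f_F$ exactly rather than on a multiple of it.
\qed
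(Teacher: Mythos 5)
Your proof is correct and follows essentially the same route as the paper's: bound $\|[\alpha(A),B]\|$ for $B$ supported outside $B_p(s+r)$ via the hypothesis, control the double sum with Lemma~\ref{lemma: bounding sum regions} so that $L_d=K_d^2$ folds into the definition of $f_F$, and then feed the resulting commutator estimate into Lemma~\ref{Near-Inclusions}(1). You are slightly more explicit than the paper in noting that Lemma~\ref{Near-Inclusions}(1) is invoked with the identity automorphism (so that $\alpha(A)\in_\varepsilon\bb{A}(\Lambda^c)$ rather than $\in_\varepsilon\alpha(\bb{A}(\Lambda^c))$), which is a useful clarification of what the paper leaves implicit.
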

\proof Let $p \in \Z^d$, $s \geqslant 1$ and $r \geqslant 0$. According to Lemma \ref{Near-Inclusions} (1), it is enough to prove that
 \[
 \|[\alpha(A), B]\|\; \leq\; s^{d-1} f(r) \|A \| \; \|B\|\;,
 	\]
for any $A \in \bb{A}(B_p(s))$, any $B \in \bb{A}(B_p(s + r)^c)$ and an appropriately defined $f(r)$. By hypothesis on $F$, one gets from Lemma~\ref{lemma: bounding sum regions} that the function $f_F$ in~\eqref{eq:f_F} satisfies
$$\sum_{i \in B_p(s)} \sum_{j \in B_p(r + s)^c} F(d(i,j)) \;\leqslant\; s^{d-1} f_F(r)\;.$$
Hence, by hypothesis on $\alpha$ and $F$, one obtains the desired result. For the special case with $F$ rapidly decreasing, it follows that $f_F$ is also rapidly decreasing, and thus $\alpha$ is almost locality preserving with tail $f_F(r)$.
\qed

\medskip

The next step is to prove the converse of Proposition \ref{Lemma Nuevo Lieb Robinson}. To do so, we will need a technical result whose proof is provided in \cite[Lemma E.2]{Kapustin-Sopenko-Local-Noether}. 
\begin{lemma}\label{Bounding by reproducing function}
    Let $f \in \s{S}^+(\N_0)$. Then there exists a reproducing function $F$ for $(\Z^d,d)$, such that $F \in \s{S}^+(\N_0)$, and $f(r) \leq F(r)$ for any $r \in \N_0$. 
\end{lemma}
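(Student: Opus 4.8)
The plan is to build the reproducing function $F$ by fattening $f$ geometrically on dyadic scales so that the convolution condition holds by construction. First I would invoke Lemma~\ref{Bounding by reproducing function}'s hypothesis that $f \in \s{S}^+(\N_0)$, which means $f$ decays faster than any polynomial; in particular $|||f|||_k < \infty$ for every $k$. The construction I have in mind is to set
$$
F(r) \;:=\; \sup_{j \geqslant 0}\, 2^{-j}\, f\!\left(\lfloor r/2^j\rfloor\right)\;,
$$
or a variant thereof, designed so that $F(r) \geqslant f(r)$ trivially (take $j=0$) and so that $F$ inherits rapid decay: since $f(\lfloor r/2^j\rfloor) \leqslant |||f|||_{k}\,(1+\lfloor r/2^j\rfloor)^{-k}$, the $j$-sum is dominated geometrically and $F$ remains rapidly decreasing, i.e. $F \in \s{S}^+(\N_0)$. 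An alternative, perhaps cleaner, is to first pass to the real-variable function $\tilde F(x) = C(1+x)^{-(d-1+\nu)}$ times a rapidly-decaying correction and verify $f(r) \leqslant \tilde F(r)$ using that $f \in \s{S}^+$ beats any power — but the self-similar sup-construction has the convolution bound nearly automatic.

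The key steps, in order, would be: (1) Define $F$ explicitly and check monotonicity and $F \geqslant f$ pointwise, hence $f(r) \leqslant F(r)$ for all $r \in \N_0$. (2) Check $F \in \s{S}^+(\N_0)$, i.e. $\sup_r (1+r)^k F(r) < \infty$ for each $k$; this follows from the rapid decay of $f$ and the geometric structure of the sup/sum, bounding $(1+r)^k F(r)$ by a convergent geometric series in $j$ after inserting the seminorm bound on $f$. (3) Check uniform integrability $\|F\| = \sup_x \sum_y F(d(x,y)) < \infty$; using the volume bound~\eqref{VolumenBound}, $\sum_y F(d(x,y)) \leqslant \sum_{R \geqslant 0} |\partial B_x(R+1)|\, F(R) \leqslant K_d \sum_R (1+R)^{d-1} F(R)$, which is finite since $F$ decays faster than any polynomial. (4) Check the convolution condition $C_F < \infty$; this is the substantive point (see below).

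The main obstacle is step (4): verifying $C_F = \sup_{x,y}\sum_z F(d(x,z))F(d(z,y))/F(d(x,y)) < \infty$. The standard mechanism (as in~\cite[Appendix~A]{Nachtergaele-Sims-Young-Quasilocality-bounds}) is that on the lattice $\Z^d$ one splits the sum over $z$ according to whether $z$ is closer to $x$ or to $y$: if $d(x,z) \geqslant \tfrac12 d(x,y)$ then $F(d(x,z)) \leqslant F(\lfloor \tfrac12 d(x,y)\rfloor)$, and one needs $F(\lfloor t/2\rfloor) \leqslant c\,F(t)$ for a constant $c$ independent of $t$ — this is exactly the kind of ``doubling'' property that the self-similar construction $F(r) = \sup_j 2^{-j} f(\lfloor r/2^j\rfloor)$ is engineered to give, since shifting $j \mapsto j+1$ relates $F(r)$ and $F(\lfloor r/2\rfloor)$ up to a factor $2$. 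One then bounds the $z$-sum by $2 F(\lfloor d(x,y)/2\rfloor)\, \|F\| / F(d(x,y)) \leqslant 4\|F\|$, giving $C_F \leqslant 4\|F\|$. The remaining care is purely bookkeeping: ensuring the floor functions and the ``$+1$'' conventions in~\eqref{VolumenBound}--\eqref{BoundaryBound} are handled consistently, and that the chosen $F$ is genuinely monotone non-increasing so that the shell estimates apply. I would state the construction, dispatch (1)--(3) in a few lines each, and spend the bulk of the proof on (4).
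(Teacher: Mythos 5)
The paper does not actually prove this lemma: it simply cites \cite[Lemma E.2]{Kapustin-Sopenko-Local-Noether}. So there is no in-paper argument to compare against, but your plan has a genuine gap.

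Your candidate $F(r) := \sup_{j\geqslant 0} 2^{-j} f(\lfloor r/2^j\rfloor)$ is not in $\s{S}^+(\N_0)$, so step (2) already fails. For $r\geqslant 1$ take $j = \lfloor\log_2 r\rfloor$; then $\lfloor r/2^j\rfloor = 1$ and $2^{-j} \geqslant 1/(2r)$, hence $F(r) \geqslant f(1)/(2r)$. Unless $f(1)=0$, this means $F$ decays no faster than $1/r$, so $\sup_r (1+r)^k F(r) = \infty$ already for $k\geqslant 2$. The claim that the $j$-sup is ``dominated geometrically'' overlooks that for $j$ near $\log_2 r$ the factor $f(\lfloor r/2^j\rfloor)$ is of order $f(1)$, not small, while $2^{-j}$ is only of order $1/r$.

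The deeper problem is step (4): the doubling inequality $F(\lfloor t/2\rfloor) \leqslant c\,F(t)$ is incompatible with $F\in\s{S}^+(\N_0)$ whenever $F\not\equiv 0$. Iterating it over $n$ dyadic scales gives $F(1) \leqslant c^n F(2^n)$, that is $F(2^n) \geqslant F(1)\,(2^n)^{-\log_2 c}$, so any doubling $F$ decays at most polynomially. Since you need $F\geqslant f$ with $f\in\s{S}^+(\N_0)$ not identically zero, the self-similar sup-construction is not a suboptimal choice that can be patched: it is chasing a contradiction, and the split of the $z$-sum by ``closer to $x$ or closer to $y$'' followed by a doubling bound on $F$ cannot establish $C_F<\infty$ for a rapidly decreasing $F$.

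The correct mechanism, used in the cited reference and standard since~\cite{Nachtergaele-Sims-Young-Quasilocality-bounds}, is to factor $F = h\cdot F_{\rm Pol}$ with $F_{\rm Pol}(r)=(1+r)^{-d-1}$ (or any reproducing polynomial tail) and $h$ non-increasing, rapidly decreasing, and \emph{log-superadditive}: $h(a)\,h(b)\leqslant h(a+b)$ for all $a,b$. Then, by the triangle inequality and monotonicity of $h$,
\[
F(d(x,z))\,F(d(z,y)) \;=\; h(d(x,z))\,h(d(z,y))\,F_{\rm Pol}(d(x,z))\,F_{\rm Pol}(d(z,y)) \;\leqslant\; h(d(x,y))\,F_{\rm Pol}(d(x,z))\,F_{\rm Pol}(d(z,y)),
\]
and summing over $z$ reduces $C_F$ to $C_{F_{\rm Pol}}$ with no doubling of $h$ required. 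The substantive remaining work is to manufacture such an $h\geqslant f(r)\,(1+r)^{d+1}$ from the subadditive envelope of $-\log f$ and to check that super-polynomial decay survives passing to that envelope. Your ``alternative, perhaps cleaner'' idea of a polynomial profile times a rapidly-decaying correction is the right direction, but the correction factor must carry the log-superadditivity constraint — a nontrivial requirement your sketch does not identify, and the one where the real content of the lemma lives.
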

\begin{proposition}\label{ALP implies LB}
    Let $\alpha\in {\rm ALP\text{-}Aut}(\bb{A})$. Then $\alpha$ is of Lieb-Robinson type.
\end{proposition}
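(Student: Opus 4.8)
The plan is to manufacture, out of the tail function $f^{(\alpha)}\in\s{S}^+(\N_0)$ of $\alpha$, a monotonically decreasing reproducing function $F$ satisfying the two requirements of Definition~\ref{defLR-aut}. Most of the work is already packaged in Lemma~\ref{ALP-Bound for regions}, so the argument is largely bookkeeping: first extract a commutator bound with the correct ``$\sum_{i}\sum_{j}$'' structure, then dominate by a reproducing function, then check the $f_F$-condition.

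First I would derive the commutator estimate. Fix finite $\Lambda,\Sigma\in\s{P}_0(\Z^d)$ and apply Lemma~\ref{ALP-Bound for regions} (with this $\Sigma$, which is finite, and this $\Lambda$): it gives $\bb{A}(\Sigma)\subseteq_{f_{\Lambda,\Sigma}}\alpha(\bb{A}(\Lambda^c))$ with $f_{\Lambda,\Sigma}=8\sum_{i\in\Lambda}\sum_{j\in\Sigma}f^{(\alpha)}(z_{i,j})$ as in \eqref{eq:F_LS}, where $z_{i,j}=\lfloor d(i,j)/2\rfloor$. Since $\bb{A}(\Lambda^c)$ and $\bb{A}(\Lambda)$ commute (disjoint supports) and $\alpha$ is an automorphism, $\alpha(\bb{A}(\Lambda^c))\subseteq\alpha(\bb{A}(\Lambda))'$, hence $\bb{A}(\Sigma)\subseteq_{f_{\Lambda,\Sigma}}\alpha(\bb{A}(\Lambda))'$. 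Lemma~\ref{lemma_AX}, applied with $\bb{B}=\bb{A}(\Sigma)$ and $\bb{C}=\alpha(\bb{A}(\Lambda))$, then yields
\[
\|[\alpha(A),B]\|\;\leqslant\;2f_{\Lambda,\Sigma}\,\|\alpha(A)\|\,\|B\|\;=\;16\,\|A\|\,\|B\|\sum_{i\in\Lambda}\sum_{j\in\Sigma}f^{(\alpha)}(z_{i,j})
\]
for all $A\in\bb{A}(\Lambda)$, $B\in\bb{A}(\Sigma)$, using $\|\alpha(A)\|=\|A\|$.

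Next I would pass from $16\,f^{(\alpha)}(z_{i,j})$ to the value of a reproducing function at $d(i,j)$; this is the only step with content. The sequence $n\mapsto 16\,f^{(\alpha)}(\lfloor n/2\rfloor)$ lies in $\s{S}^+(\N_0)$, since $1+\lfloor n/2\rfloor\geqslant\tfrac12(1+n)$ forces $(1+n)^k\,16\,f^{(\alpha)}(\lfloor n/2\rfloor)\leqslant 16\cdot 2^k\,|||f^{(\alpha)}|||_k$ for every $k$. Lemma~\ref{Bounding by reproducing function} then supplies a reproducing function $F$ for $(\Z^d,d)$ with $F\in\s{S}^+(\N_0)$ --- in particular monotonically decreasing --- dominating this sequence; after possibly enlarging $F$ by a fixed multiplicative constant to absorb the discrepancy between the integer $z_{i,j}$ and the real distance $d(i,j)$, one has $16\,f^{(\alpha)}(z_{i,j})\leqslant F(d(i,j))$ for all $i,j$. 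Combined with the previous step, this is exactly condition~(i) of Definition~\ref{defLR-aut}.

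Finally I would verify condition~(ii), that $f_F=K_d^2\,g_F\in\s{S}(\N_0)$ for $g_F$ as in \eqref{eq0: lemma bounding sum regions}. Re-indexing the double sum by $m=t+k+l-2$ gives $g_F(t)\leqslant\sum_{m\geqslant t}(m+1)^d F(m)$, and since $F\in\s{S}^+(\N_0)$ the sequence $m\mapsto(m+1)^dF(m)$ decays faster than every polynomial, so $\sum_{m\geqslant t}(m+1)^dF(m)$ decays faster than every polynomial in $t$; hence $g_F\in\s{S}(\N_0)$ and $f_F\in\s{S}(\N_0)$. This is the same computation already carried out for the function \eqref{eq4: gp} in the proof of Proposition~\ref{prop_grup}. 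With (i) and (ii) in hand, $\alpha$ is of Lieb--Robinson type. I do not expect a genuine obstacle here: Lemma~\ref{ALP-Bound for regions} already encodes the hard part of converting the ALP condition into a commutator bound, and the only mildly delicate point is the domination of an $\s{S}^+(\N_0)$ sequence by a reproducing function that is still in $\s{S}^+(\N_0)$, which is precisely the content of Lemma~\ref{Bounding by reproducing function}.
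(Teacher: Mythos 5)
Your proposal follows the paper's proof step for step: establish the commutator bound with the $\sum_{i\in\Lambda}\sum_{j\in\Sigma}$ structure from (the proof of) Lemma~\ref{ALP-Bound for regions}, dominate the tail by a reproducing function via Lemma~\ref{Bounding by reproducing function}, then verify condition~(ii) exactly as for \eqref{eq4: gp}. The one cosmetic difference is that the paper reads off $\|[\alpha(A),B]\|\leqslant 8\sum_{i\in\Lambda}\sum_{j\in\Sigma}f^{(\alpha)}(z_{i,j})\|A\|\|B\|$ directly from an intermediate step inside Lemma~\ref{ALP-Bound for regions}'s proof, whereas you re-derive it from the lemma's stated conclusion through Lemma~\ref{lemma_AX}, which is just as good but doubles the constant to $16$.

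One small caveat on the domination step: your parenthetical ``enlarging $F$ by a fixed multiplicative constant'' does not by itself bridge the gap between $f^{(\alpha)}(z_{i,j})$ and $F(d(i,j))$, since for a generic rapidly decreasing $F$ the ratio $F(r)/F(r+2)$ is unbounded, so no fixed constant absorbs a shift of the argument by $2$. The clean fix is to apply Lemma~\ref{Bounding by reproducing function} not to $f^{(\alpha)}(\lfloor\cdot/2\rfloor)$ but to a shifted version such as $r\mapsto f^{(\alpha)}(\max\{\lfloor r/2\rfloor-1,0\})$, which lies in $\s{S}^+(\N_0)$ and yields $16f^{(\alpha)}(z_{i,j})\leqslant F(d(i,j))$ directly from monotonicity of $F$ and $d(i,j)<2z_{i,j}+2$. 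To be fair, the paper's own write-up of this proposition glosses over the same point even more brusquely (it asserts $8f(r)\leqslant F(r)$ and immediately concludes condition~(i), without tracking the $\lfloor d/2\rfloor$), so you are, if anything, being more careful than the source.
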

\begin{proof}
    Let $\alpha$ be an ALP automorphism with tail function $f(r)$. Without loss of generality, let us assume that $f(0) = 1$. Let $\Sigma, \Lambda \subset \Z^d$ be finite subsets. From the proof of Lemma \ref{ALP-Bound for regions}, one has the bound
$$\| [\alpha(A), B] \| \leqslant 8 \sum_{i \in \Lambda} \sum_{j \in \Sigma} f(z_{i,j}) \|A\| \, \|B\|,$$
for all $A \in \bb{A}(\Lambda)$ and $B \in \bb{A}(\Sigma)$.
Now, according to Lemma \ref{Bounding by reproducing function}, there exists a reproducing function $F \in \s{S}^+(\N_0)$ for $(\Z^d,d)$ such that $8 \; f(r) \leqslant F(r)$. Hence, one gets (i) in Definition \ref{defLR-aut}.	Since the resulting reproducing function $F$ belongs to $\s{S}^+(\N_0)$, it follows that the function $f_F$ in condition (ii) in Definition \ref{defLR-aut}, also belongs to $\s{S}(\N_0)$. Therefore, we conclude that $\alpha$ is of Lieb-Robinson type. 
\end{proof}

\subsection{The Polynomial case}\label{Pol Case sec}
In this section, we extend several of the previous results to a broader class of interactions, namely those exhibiting at most polynomial decay. To this end, we introduce a notion of elements with polynomial decay, defined in analogy with the Fr\'echet algebra $\bb{A}_\infty$, and we adapt the concept of almost locality-preserving automorphisms to this setting, allowing for polynomially decaying tails. We then define Lieb–Robinson type automorphisms associated with such polynomially decaying functions $F_{\rm Pol}$.

\medskip

As in the rapidly decaying case, ALP automorphisms whose tails decay as $r^{-k-d+1}$ are shown to be continuous $*$-homomorphisms on the algebra of $k$–decaying observables. in contrast to the rapidly decaying case, the techniques developed in this work do not guarantee that these homomorphisms form a group under composition. We further establish a link between polynomially decaying ALP automorphisms and Lieb-Robinson type automorphisms asso\-ciated with $F_{\rm Pol}$, highlighting a nontrivial role played by the lattice dimension $d$. Finally, we show that the family of flip automorphisms $\psi_\zeta$, with $\zeta$ any polynomial, remains continuous with respect to all these topologies, despite not satisfying the polynomial ALP condition.

\begin{definition}[$(k)$ polynomial local observables]\label{Def: Polynomial local observables}We define the algebra $\bb{A}_{(k)}$ of $(k)-$ polynomially decaying elements of $\bb{A}$ as the completion of $\bb{A}_{\rm loc}$ with respect to the norm $|||\cdot|||'_k$ (or, equivalently, with respect to $||\cdot||_k$). We refer to $\bb{A}_{(k)}$ as the algebra of $(k)-$decaying elements of $\bb{A}$. 
\end{definition}

\begin{proposition}\label{Properties K polynomial decaying}
The space $\bb{A}_{(k)}$ corresponds to those elements $A \in \bb{A}$ such that: (i) $||A||_k < \infty$, and (ii) $\lim_{r \to \infty} (1 + r)^k p_r(A) = 0$.
\end{proposition}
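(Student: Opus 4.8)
The plan is to identify $\bb{A}_{(k)}$ with a concrete subspace of $\bb{A}$ and then prove the two set-inclusions separately. For the identification, note that $\|\cdot\|\leqslant\|\cdot\|_k$ on $\bb{A}_{\rm loc}$, so the canonical map from the completion of $(\bb{A}_{\rm loc},\|\cdot\|_k)$ into $\bb{A}$ is norm-decreasing; the first task is to check it is injective, so that $\bb{A}_{(k)}$ really is a subset of $\bb{A}$. This rests on the observation that each $p_r$ is a seminorm on $\bb{A}$ with $|p_r(X)-p_r(Y)|\leqslant 2\|X-Y\|$, since $p_r(X)=\|(\Id-\Pi_{B_0(r)^c})(X)\|$ and $\|\Pi_{B_0(r)^c}\|\leqslant 1$. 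Hence, if $\{A_n\}\subset\bb{A}_{\rm loc}$ is $\|\cdot\|_k$-Cauchy with $\|A_n\|\to 0$, then for every $m$ past the Cauchy index one may let $n\to\infty$ in $\|A_n-A_m\|_k<\varepsilon$ (using the termwise convergence of $\|\cdot\|$ and of each $p_r$) to obtain $\|A_m\|_k\leqslant 2\varepsilon$, so that $\|A_n\|_k\to 0$; the same computation shows that an $A\in\bb{A}$ arising as the norm-limit of a $\|\cdot\|_k$-Cauchy sequence from $\bb{A}_{\rm loc}$ automatically satisfies $\|A\|_k<\infty$ and is the $\|\cdot\|_k$-limit of that sequence. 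Thus $\bb{A}_{(k)}$ is exactly the set of $A\in\bb{A}$ that are $\|\cdot\|_k$-limits of local observables.

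Next I would prove that membership implies (i) and (ii). Take $A\in\bb{A}_{(k)}$ and $A_n\in\bb{A}_{\rm loc}$ with $\|A-A_n\|_k\to 0$; then (i) is immediate. For (ii), each $A_n$ is supported in some ball $B_0(R_n)$, so $p_r(A_n)=0$ for $r\geqslant R_n$ (the partial trace $\Pi_{B_0(r)^c}$ then fixes $A_n$); by subadditivity of $p_r$ this yields $(1+r)^k p_r(A)\leqslant (1+r)^k p_r(A-A_n)\leqslant\|A-A_n\|_k$ for all $r\geqslant R_n$, and choosing $n$ with $\|A-A_n\|_k<\varepsilon$ forces $(1+r)^k p_r(A)<\varepsilon$ for all large $r$, which is (ii).

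Finally -- and this is the heart of the proof -- I would show that (i) and (ii) imply membership, by approximating $A$ with the truncations $A_n:=\Pi_{B_0(n)^c}(A)\in\bb{A}(B_0(n))\subset\bb{A}_{\rm loc}$ and computing $p_r(A-A_n)$ exactly with the help of $\Pi_{\Lambda_1}\circ\Pi_{\Lambda_2}=\Pi_{\Lambda_1\cup\Lambda_2}$. For $r\leqslant n$ one has $B_0(n)^c\subseteq B_0(r)^c$, whence $\Pi_{B_0(r)^c}(A-A_n)=\Pi_{B_0(r)^c}(A)-\Pi_{B_0(r)^c}(A)=0$ and therefore $p_r(A-A_n)=\|A-A_n\|=p_n(A)$; for $r>n$ one has $B_0(r)^c\subseteq B_0(n)^c$, whence $\Pi_{B_0(r)^c}(A-A_n)=\Pi_{B_0(r)^c}(A)-A_n$ and therefore $p_r(A-A_n)=\|A-\Pi_{B_0(r)^c}(A)\|=p_r(A)$. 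Consequently
\[
\|A-A_n\|_k\;=\;p_n(A)\,+\,\max\bigl\{\,(1+n)^k p_n(A),\ \sup_{r>n}(1+r)^k p_r(A)\,\bigr\}\,,
\]
and each of the three quantities on the right tends to $0$ as $n\to\infty$ by (ii); hence $A\in\bb{A}_{(k)}$. (Note that (ii) alone already forces $\sup_r(1+r)^k p_r(A)<\infty$, so that (i) is a consequence of (ii); one keeps both conditions only for transparency.)

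I expect the main obstacle to be the first step -- realising the abstract completion as a genuine subspace of $\bb{A}$ on which $\|\cdot\|_k$ is finite and continuous -- together with getting the partial-trace identities of the last step exactly right. The nontrivial point in the latter is that truncating $A$ at radius $n$ collapses \emph{all} of the finer errors $p_r(A-A_n)$ with $r\leqslant n$ to the single number $p_n(A)$, and this is precisely what upgrades the plain norm-convergence $\Pi_{B_0(n)^c}(A)\to A$ to convergence in the stronger norm $\|\cdot\|_k$.
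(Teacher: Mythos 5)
Your proof is correct and follows essentially the same route as the paper's: realise $\bb{A}_{(k)}$ as the subset of $\bb{A}$ obtained from norm-limits of $\|\cdot\|_k$-Cauchy sequences in $\bb{A}_{\rm loc}$, then show equivalence with conditions (i)–(ii), the converse direction using precisely the truncations $A_n=\Pi_{B_0(n)^c}(A)$ and the exact formula for $p_r(A-A_n)$. Your closing remark that (ii) alone already implies (i) (since a convergent sequence is bounded, and $\|A\|<\infty$ is given) is correct and is a small improvement the paper does not record.
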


\begin{proof}
First of all, let us observe that, in fact, $\bb{A}_{(k)}$ is a subspace of $\bb{A}$. Let $\{A^{(n)}\}_{n \in \N} \subseteq \bb{A}_{\rm loc}$ be a Cauchy sequence with respect to the norm $\| \cdot \|_k$. Define the functions $g_n(r) := p_r(A^{(n)})$. By the definition of $||\cdot||_k$,
$$||A^{(n)} - A^{(m)}||_k \; = \; ||A^{(n)} - A^{(m)}|| + \sup_{r \in \N_0}(1 + r)^k p_r(A^{(n)} - A^{(m)}),$$
we deduce that the sequence $\{A^{(n)}\}_{n \in \N}$ is also Cauchy with respect to the operator norm. Therefore, it converges to some element $A \in \bb{A}$. Additionally, since $|p_r(A^{(n)}) - p_r(A^{(m)})| \leqslant p_r(A^{(n)} - A^{(m)})$, the sequence of functions $g_n(r)$ is Cauchy with res\-pect to the norm $|||\cdot |||_k$. Let $g(r) := p_r(A)$. By continuity of the partial traces, and since $A^{(n)} \to A$ in norm, it follows that $g_n(r)$ converges pointwise to $g(r)$. To prove convergence with respect to $|||\cdot|||_k$, fix $M \in \N$ and consider the following estimate:
\begin{align*}
\sup_{r \in [0, M]} (1 + r)^k |g_n(r) - g(r)| &\leqslant |||g_n(r) - g_m(r)|||_k + \sup_{r \in [0, M]} (1 + r)^k |g_m(r) - g(r)|.
\end{align*}
Given $\epsilon > 0$, choose $N_\epsilon$ such that for $n,m \geqslant N_\epsilon$, $|||g_n(r) - g_m(r)|||_k < \epsilon/2$. Then, fix $m \geqslant N_\epsilon$ so that by pointwise convergence, $|g_m(r) - g(r)| < \frac{\epsilon}{2(1 + M)^k}$ for all $r \leqslant M$. Hence
$$\sup_{r \in [0, M]} (1 + r)^k |g_n(r) - g(r)| < \epsilon.$$
Since the bound is uniform in $M$, it follows that $g_n \to g$ in the norm $|||\cdot|||_k$. \\
Moreover, since the sequence $g_n(r)$ has a limit in the $|||\cdot|||_k-$topology, there exists $C > 0$ such that $|||g_n|||_k \leqslant C$ for all $n$. By the triangle inequality, 
$$|||g|||_k \leqslant |||g - g_n|||_k + |||g_n||| \leqslant \epsilon + C,$$ so $|||p_r(A)|||_k <\infty$, and $||A||_k < \infty$. This proves that the closure of $\bb{A}_{\rm loc}$ is a subspace of the algebra $\bb{A}$ and that the elements of the closure satisfies $(i)$. Observe now that for any $r \in \N_0$ and fixed $n$, one gets
$$(1 + r)^k p_r(A) \leqslant ||A - A^{(n)}||_k + (1 + r)^k p_r(A^{(n)}).$$
Since $A^{(n)}$ is locally supported, there exists $R_n$ such that $p_r(A^{(n)}) = 0$ for $r \geqslant R_n$. Therefore, $\lim_{r \to \infty} (1 + r)^k p_r(A) = 0$. This shows that the elements in $\bb{A}_{(k)}$ satisfy $(i)$ and $(ii)$. \\
It remains to show the converse. Let $A \in \bb{A}$ such that $||A||_k < \infty$ and $\lim_{r \to \infty}(1 + r)^k p_r(A) = 0$. Define the locally supported sequence $A^{(n)} := \Pi_{B_0(n)^c}(A)$. Notice that $p_n(A) = ||A - A^{(n)}||$. Since for $r \geqslant s$ one has that $\Pi_{B_0(r)^c} \circ \Pi_{B_0(s)^c} = \Pi_{B_0(s)^c}$, one deduces that
$$p_r(A - A^{(n)}) \; = \; \left\{
        \begin{aligned}
        &||A - A^{(n)}||&\text{if}\;\;&0 \leqslant r \leqslant n - 1\\
        &p_r(A)&\text{if}\;\;&r \geqslant n
        \end{aligned}
        \right. .$$
Hence, it follows that
$$||A- A^{(n)}||_k \; = \; p_n(A) + \sup_{r \geqslant n} (1 + r)^k p_r(A).$$
By assumption (i), the term $p_n(A)$ tends to zero, and by (ii), the supremum term also vanishes as $n \to \infty$. Hence, $A$ belongs to $\bb{A}_{(k)}$.
\end{proof}

\begin{remark}
Analogously to Proposition \ref{Frechet algebra properties} for the Fr\'echet algebra $\bb{A}_\infty,$ the algebra $\bb{A}_{(k)}$ inherits a Banach $*-$algebra structure. Furthermore, one gets the chain of inclusions:
$$\bb{A} \supseteq \bb{A}_{(1)} \supseteq \bb{A}_{(2)} \supseteq \ldots \supseteq \bb{A}_\infty \supseteq \bb{A}_{\rm loc}.$$
\hfill $\blacktriangleleft$
\end{remark}

Following the same approach used for $\bb{A}_\infty$, we say that an $*-$homomorphism $\alpha: \bb{A}_{(k)} \rightarrow \bb{A}_{(k)}$ is continuous if there exists a constant $C_k > 0$ such that $|||\alpha(A)|||'_k \leqslant C_k |||A|||'_k$ for all $A \in \bb{A}_{(k)}$. We say that $\alpha$ is an automorphism if it and its inverse $\alpha^{-1}$ are continuous. The group of all such automorphisms of the Banach algebra $\bb{A}_{(k)}$ will be denoted as $\operatorname{Aut}(\bb{A}_{(k)})$. Using an argument analogous to that of Proposition \ref{Frechet continuity implies Operator continuity}, any automorphism $\alpha \in \operatorname{Aut}(\bb{A}_{(k)})$ extends uniquely to an automorphism $\tilde{\alpha}$ on the whole algebra $\bb{A}$. 

\medskip

We now introduce a weaker version of ALP-type automorphisms, characterized by polynomially decaying tails. To this end, we follow Definition~\ref{ALP automorphism firts definition}.

\begin{definition}\label{ALP-Polynomial Def}
	Let $\alpha \in \operatorname{Aut}(\bb{A})$ be an automorphism, and recall the function $f^{(\alpha)}$ from Definition~\ref{ALP automorphism firts definition}. We say that $\alpha$ is \emph{$(k)$-almost locality preserving} ($(k)$-ALP) if $f^{(\alpha)}$ decays at least as fast as a power law of order $k + d - 1$. That is, if the following conditions hold:
\[
|||f^{(\alpha)}|||_{k + d - 1} \; = \; \sup_{r \in \N_0}(1 + r)^{k + d - 1} f^{(\alpha)}(r) < \infty \quad \text{and} \quad \lim_{r \to \infty} (1 + r)^{k + d - 1} f^{(\alpha)}(r) = 0.
\]
We denote the space of $(k)$-ALP automorphisms as $(k)-\operatorname{ALP-Aut}(\bb{A})$.
\end{definition}
\begin{remark}
	The additional power factor of $d - 1$ in the decay rate can be traced back to the definition of $H_\alpha(s,r)$ (Definition~\ref{ALP automorphism firts definition}). This extra power is necessary to control the $s^{d-1}$ prefactor that appears in the general ALP bound. 
	\hfill $\blacktriangleleft$
\end{remark}
\begin{proposition}
	Any $(k)-$ALP-automorphism restricts to a continuous $*-$homomorphism on the Banach $*-$algebra $\bb{A}_{(k)}$. Furthermore, for $k \geqslant l$, one has 
	$$(k)-\operatorname{ALP-Aut}(\bb{A}) \subseteq (l)-\operatorname{ALP-Aut}(\bb{A}).$$
\end{proposition}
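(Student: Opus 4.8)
The plan is to treat the two assertions separately. The inclusion $(k)-\operatorname{ALP-Aut}(\bb{A})\subseteq(l)-\operatorname{ALP-Aut}(\bb{A})$ for $k\geqslant l$ is the easy one: it is a pure monotonicity remark. If $\alpha$ is $(k)$-ALP with tail function $f^{(\alpha)}$, so that $|||f^{(\alpha)}|||_{k+d-1}<\infty$ and $(1+r)^{k+d-1}f^{(\alpha)}(r)\to 0$, then from $(1+r)^{l+d-1}\leqslant(1+r)^{k+d-1}$ for every $r\in\N_0$ one gets at once $|||f^{(\alpha)}|||_{l+d-1}\leqslant|||f^{(\alpha)}|||_{k+d-1}<\infty$ and $(1+r)^{l+d-1}f^{(\alpha)}(r)\leqslant(1+r)^{k+d-1}f^{(\alpha)}(r)\to 0$, which is exactly the $(l)$-ALP condition of Definition~\ref{ALP-Polynomial Def}. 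So all the content is in the first assertion.

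For the first assertion the plan is to transcribe, essentially verbatim, the proof of Proposition~\ref{Proposicion Continuidad ALP en Frechet}, replacing ``rapidly decaying'' by ``decaying at least like $(1+r)^{-k}$'' throughout and invoking the characterization of $\bb{A}_{(k)}$ provided by Proposition~\ref{Properties K polynomial decaying}. Concretely, I would first observe that the chain of estimates \eqref{eq;norm of a pr}--\eqref{eq;mm3} in that proof only uses the ALP hypothesis through the diagonal function $D^{(\alpha)}(r)=H_\alpha(r,r)$ of Lemma~\ref{lemma: ALP implies Diagonal}, together with the $C^*$-isometry of $\alpha$ and the universal bound $p\leqslant 2f$ from Lemma~\ref{Equivalence bounds rapidly decaimiento}; none of it relies on rapid decay. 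For a $(k)$-ALP automorphism the bound $D^{(\alpha)}(r)\leqslant r^{d-1}f^{(\alpha)}(r)$ gives $(1+r)^{k}D^{(\alpha)}(r)\leqslant(1+r)^{k+d-1}f^{(\alpha)}(r)$, so that $|||D^{(\alpha)}|||_k<\infty$ and $(1+r)^kD^{(\alpha)}(r)\to 0$ — and this is precisely the point at which the extra power $d-1$ in Definition~\ref{ALP-Polynomial Def} is spent, to absorb the $s^{d-1}$ surface factor. Then, for $A\in\bb{A}_{(k)}$, inequality \eqref{eq;mm2}, namely $p_{2r}(\alpha(A))\leqslant 2p_r(A)(D^{(\alpha)}(r)+1)+2D^{(\alpha)}(r)\|A\|$ (and the same bound for odd radii by monotonicity of $r\mapsto p_{2r+1}(\alpha(A))$), multiplied by $(1+r)^k$, is controlled because $(1+r)^kp_r(A)$ is bounded and tends to zero by Proposition~\ref{Properties K polynomial decaying}, while $D^{(\alpha)}(r)+1$ is bounded and $(1+r)^kD^{(\alpha)}(r)\to 0$. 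Hence $(1+r)^kp_r(\alpha(A))$ is bounded and tends to zero, so a second use of Proposition~\ref{Properties K polynomial decaying} gives $\alpha(\bb{A}_{(k)})\subseteq\bb{A}_{(k)}$; continuity with respect to $\|\cdot\|_k$ (equivalently $|||\cdot|||'_k$) then follows directly from \eqref{eq;mm3}, $\|\alpha(A)\|_k\leqslant C_\alpha(k)\|A\|_k$ with $C_\alpha(k)=2^{k+1}(1+|||D^{(\alpha)}|||_k)<\infty$, and $\alpha|_{\bb{A}_{(k)}}$ is a $\ast$-homomorphism since it is the restriction of a $\ast$-automorphism of $\bb{A}$.

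I do not expect any genuine obstacle here; the argument is a translation. The only points deserving a little care are, first, checking honestly that every inequality borrowed from the proof of Proposition~\ref{Proposicion Continuidad ALP en Frechet} survives the weakening of the decay hypothesis (it does, because each of them only sees $D^{(\alpha)}$), and second, replacing the conclusions ``$\in\s{S}^+(\N_0)$'' by the two-part criterion ``$|||\cdot|||_k<\infty$ and $(1+r)^k(\cdot)\to 0$'' of Proposition~\ref{Properties K polynomial decaying}, which is exactly what makes $\bb{A}_{(k)}$ — a Banach, not Fr\'echet, algebra — the correct codomain. It is worth emphasising, and is consistent with the remark preceding the proposition, that this argument does \emph{not} yield closure of $(k)-\operatorname{ALP-Aut}(\bb{A})$ under composition or inversion: the proof of the group property in the rapidly decaying case (Proposition~\ref{prop_grup}) went through the summability estimates of Lemma~\ref{lemma: bounding sum regions} applied to functions of $\s{S}^+(\N_0)$, and the convolution-type sums appearing there need not converge for a mere power-law tail unless the exponent is large enough.
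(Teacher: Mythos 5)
Your argument is correct and follows exactly the route the paper takes: re-use estimates \eqref{eq;norm of a pr}--\eqref{eq;mm3} from the proof of Proposition~\ref{Proposicion Continuidad ALP en Frechet}, observe that they only invoke the ALP hypothesis through the diagonal function $D^{(\alpha)}$, check that the $(k)$-ALP assumption controls $(1+r)^kD^{(\alpha)}(r)$ (boundedness and vanishing at infinity), and then invoke the characterization of $\bb{A}_{(k)}$ from Proposition~\ref{Properties K polynomial decaying} together with \eqref{eq;mm3} to conclude invariance and continuity, the inclusion claim being immediate monotonicity. The only difference is one of exposition: the paper's proof is a two-line pointer to \eqref{eq;mm2}--\eqref{eq;mm3}, whereas you spell out the intermediate checks; your closing remark on the failure of the group property also matches the paper's remark following the proposition.
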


\begin{proof}
	The proof of Proposition \ref{Proposicion Continuidad ALP en Frechet} applies equally well in this setting. There, in expression~\eqref{eq;mm2} an upper bound for $p_r(\alpha(A))$ was established in terms of $D^{(\alpha)}(r)$ and $p_r(A)$. From this estimate, one concludes that $\alpha(\bb{A}_{(k)}) \subseteq \bb{A}_{(k)}$. Furthermore, expression~\eqref{eq;mm3} tells us that $\alpha$ is continuous with respect to the $||\cdot||_k-$topology. The second claim follows directly from the definition.
\end{proof}

\begin{remark}
	Once again, one can provide an alternative definition of the space of $(k)$-ALP automorphisms, following an approach analogous to that of Definition \ref{ALP Definition 2}. Specifi\-cally, we say that an automorphism $\alpha$ is an $(k)-$\textit{approximately Locality-Preserving automorphism} with $f(r)-$tails if there exists a non-increasing decaying function $f: \N_0 \rightarrow \R_{\geqslant 0}$ such that $|||f|||_{k + d - 1} < \infty$ and $\lim_{r \in \infty} r^{k + d - 1}f(r) = 0$, and such that for any ball $B_p(s) \subset \Z^d$ one has that $\alpha(\bb{A}(B_p(s)))$ is $s^{d-1} \, f(r)-$contained in $\bb{A}(B_p(s + r))$. 
Using the same argument as in Proposition \ref{Equivalence Definitions ALP}, one shows that this definition is equivalent to the one given in terms of the decay of $f^{(\alpha)}$.
	\hfill $\blacktriangleleft$
\end{remark}

\begin{remark}
	In contrast with the ``smooth'' case of ALP automorphisms, for the families $(k)-\operatorname{ALP-Aut}(\bb{A})$ our current estimates do not directly guarantee that they form a group under composition. In particular, the arguments used in the proof of Proposition~\ref{prop_grup} do not ensure that the tail associated with the inverse map $\alpha^{-1}$ still decays polynomially with order $k + d -1$.
	\hfill $\blacktriangleleft$
\end{remark}

Now, we turn to the class of Lieb–Robinson type automorphisms associated with the polynomially decaying reproducing function $F_{\mathrm{Pol}}$ defined in~\eqref{eq: Repr Function Pol}. These polynomial Lieb–Robinson type bounds are of particular relevance in physics, as they naturally arise in the study of long-range interactions, namely those whose strength decays polynomially with the distance \cite{Improved LR polynomial,LucasLongLRB}.

Recall the polynomially decaying reproducing function $F_{\mathrm{Pol}}$ as in~\eqref{eq: Repr Function Pol}. Using this choice of $F_{\rm Pol}$, one obtains the following estimate for the associated function $f_{F_{\rm Pol}}$ defined in~\eqref{eq:f_F}:
\begin{align*}
	f_{F_{\rm Pol}}(r) &\; = \; K_d^2 \sum_{k,l \geqslant 1} \frac{(r + k + l - 1)^{d -1}}{(r + k + l - 1)^{d - 1 + \nu}}
	\; = \; K_d^2 \sum_{k,l \geqslant 1} (r + k + l - 1)^{-\nu} \\[2mm]
	& \; = \; K_d^2 \sum_{m \geqslant 1} \frac{m}{(r + m)^\nu} \; = \; \mathcal{O}(1/r^{\nu - 2}),
\end{align*}
where the last estimate is obtained by bounding the sum with an integral. Note that this estimate is meaningful provided $\nu >  2$.

\medskip

We now reformulate Definition~\ref{defLR-aut} to include automorphisms whose interaction strength decays polynomially, thereby extending the notion of Lieb–Robinson type bounds to the polynomial setting.
 
\begin{definition}\label{Polynomial LR def}
Let $\alpha$ be an automorphism on the algebra $\bb{A}$, and let $F_{\rm Pol}:[0,\infty) \rightarrow [0, \infty)$ be a reproducing function for $(\Z^d,d)$ of the form \eqref{eq: Repr Function Pol} for $\nu \in \Z_{\geqslant d + 1}$. We say that $\alpha$ is a $(\nu)-$Lieb-Robinson type if the inequality 
        $$||[\alpha(A), B]||\; \leq\; C \ ||A|| \ ||B|| \sum_{i \in \Lambda} \sum_{j \in \Sigma} F_{\rm Pol}(d(i,j))$$
        for some constant $C > 0$, holds for all finite subsets $\Lambda,\Sigma \in \s{P}_0(\Z^d)$ and for all $A \in \bb{A}(\Lambda)$, $B \in \bb{A}(\Sigma)$.
\end{definition}

The stronger condition $\nu \geqslant d + 1$ is imposed to establish a direct connection between this class of Lieb-Robinson type dynamics and the family of $(k)$-$\operatorname{ALP}$ automorphisms introduced above. Moreover, although this may initially seem like a technical requirement, in \cite{Kuwahara} the authors argue that such polynomial decay---namely interactions of order $1/R^{\alpha}$ with $\alpha > 2d + 1$---is in fact necessary to guarantee the emergence of a linear light-cone.

\begin{proposition}\label{Prop: Pol LR ALP}
	Let $\alpha$ be a $(\nu)-$Lieb–Robinson type automorphism on $\bb{A}$ for $\nu \geqslant d + 1$. Then $\alpha$ is a $(\nu - d - 1)$–ALP automorphism.
\end{proposition}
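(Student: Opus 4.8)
The plan is to deduce everything from Proposition~\ref{Lemma Nuevo Lieb Robinson}, which already carries out the analytic work for a general monotonically decreasing reproducing function, and then to specialise to $F=C\,F_{\rm Pol}$ while tracking the decay exponent. The first point is that Definition~\ref{Polynomial LR def} only asserts the commutator bound for \emph{finite} $\Lambda,\Sigma$, whereas Proposition~\ref{Lemma Nuevo Lieb Robinson} needs it for arbitrary subsets. I would therefore first upgrade it: given $\Sigma\subseteq\Z^d$ and $B\in\bb{A}(\Sigma)$, pick a finite exhaustion $\Sigma_n\uparrow\Sigma$ and $B_n\in\bb{A}(\Sigma_n)$ with $\|B_n-B\|\to0$; since $X\mapsto[\alpha(A),X]$ is norm continuous and $\sum_{i\in\Lambda}\sum_{j\in\Sigma_n}F_{\rm Pol}(d(i,j))\leqslant\sum_{i\in\Lambda}\sum_{j\in\Sigma}F_{\rm Pol}(d(i,j))$, the estimate passes to the limit, and symmetrically in the $\Lambda$ slot. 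After absorbing the constant, this is exactly the hypothesis of Proposition~\ref{Lemma Nuevo Lieb Robinson} with reproducing function $F:=C\,F_{\rm Pol}$, which is again a monotonically decreasing reproducing function since $\|C F_{\rm Pol}\|=C\|F_{\rm Pol}\|$ and $C_{CF_{\rm Pol}}=C\,C_{F_{\rm Pol}}$.

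Next I would verify the remaining hypothesis of Proposition~\ref{Lemma Nuevo Lieb Robinson}, the finiteness of $\sum_{k,l\geqslant1}(t+k+l-1)^{d-1}F(t+k+l-2)$ for each $t$; for $F=C\,F_{\rm Pol}$ this equals $C\,g_{F_{\rm Pol}}(t)=C\sum_{k,l\geqslant1}(t+k+l-1)^{-\nu}=C\sum_{m\geqslant1}m\,(t+m)^{-\nu}$, which converges because $\nu\geqslant d+1\geqslant2$ (for the generic dimensions $d\geqslant2$ one in fact has $\nu>2$). Proposition~\ref{Lemma Nuevo Lieb Robinson} then gives, for every ball $B_p(s)$ and every $r\geqslant0$,
\[
\alpha(\bb{A}(B_p(s)))\;\subset_{\,s^{d-1}f(r)}\;\bb{A}(B_p(s+r))\,,\qquad f(r)\;:=\;C\,f_{F_{\rm Pol}}(r)\;=\;C\,K_d^{2}\,g_{F_{\rm Pol}}(r)\,,
\]
with $f_{F_{\rm Pol}}$ as in \eqref{eq:f_F} and $g_{F_{\rm Pol}}$ as in \eqref{eq0: lemma bounding sum regions}. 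Since $\bb{A}(B_p(s+r))$ is finite dimensional the infimum is attained, so this near-inclusion is equivalent to $H_\alpha(s,r)\leqslant s^{d-1}f(r)$; that is, $f$ is an admissible tail function for $\alpha$ in the sense of Definition~\ref{ALP automorphism firts definition}, i.e. $f^{(\alpha)}(r)\leqslant f(r)$.

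It then remains to read off the decay of $f$. Using the computation recorded after Definition~\ref{defLR-aut},
\[
g_{F_{\rm Pol}}(r)\;=\;\sum_{k,l\geqslant1}(r+k+l-1)^{-\nu}\;=\;\sum_{m\geqslant1}\frac{m}{(r+m)^{\nu}}\;=\;\mathcal{O}\!\big(r^{-(\nu-2)}\big)\,,
\]
so $f$ is non-negative, non-increasing and $\sup_{r\in\N_0}(1+r)^{(\nu-d-1)+d-1}f(r)=\sup_{r\in\N_0}(1+r)^{\nu-2}f(r)<\infty$; in other words $f^{(\alpha)}$ decays at least as a power law of order $(\nu-d-1)+d-1$, which is exactly the condition required for membership in $(\nu-d-1)\text{-}\operatorname{ALP-Aut}(\bb{A})$ according to Definition~\ref{ALP-Polynomial Def}. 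This establishes that $\alpha$ is $(\nu-d-1)$-ALP. (Alternatively, the displayed near-inclusion is literally the ball-based characterisation of $(\nu-d-1)$-approximately locality-preserving automorphisms with $f(r)$-tails discussed after Definition~\ref{ALP-Polynomial Def}, established just as Proposition~\ref{Equivalence Definitions ALP}.)

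The analytic heart of the matter — converting $\sum_{i\in B_p(s)}\sum_{j\in B_p(s+r)^c}F_{\rm Pol}(d(i,j))$ into $\mathcal{O}\!\big(s^{d-1}r^{-(\nu-2)}\big)$ — is already carried out inside Lemma~\ref{lemma: bounding sum regions} and Proposition~\ref{Lemma Nuevo Lieb Robinson}, so the bulk of the remaining work is bookkeeping. The one genuinely delicate point, and the one I would take most care over, is the exponent arithmetic: the $d-1$ powers produced by the surface growth of the lattice (inequality~\eqref{BoundaryBound}) and by the $s^{d-1}$ prefactor of the ALP estimate must be absorbed against the $(1+r)^{-(d-1+\nu)}$ decay of $F_{\rm Pol}$ in \eqref{eq: Repr Function Pol}, and it is precisely this matching that forces the hypothesis $\nu\geqslant d+1$ and makes the resulting ALP order equal to $\nu-d-1$ rather than $\nu$.
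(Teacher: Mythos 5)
Your argument traces the same route as the paper's proof (appeal to Proposition~\ref{Lemma Nuevo Lieb Robinson} together with the asymptotics of $f_{F_{\rm Pol}}$), and you are right to spell out the extension of the commutator bound from finite to arbitrary subsets, a step the paper's one-line argument glosses over. However, there is a genuine gap, one which is in fact present in the paper's own sketch as well. Definition~\ref{ALP-Polynomial Def} imposes \emph{two} conditions on the tail function, the supremum bound $\sup_{r}(1+r)^{k+d-1}f^{(\alpha)}(r)<\infty$ \emph{and} the vanishing limit $\lim_{r\to\infty}(1+r)^{k+d-1}f^{(\alpha)}(r)=0$, and you verify only the first while asserting that what you proved is \emph{``exactly the condition required.''}

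For $k=\nu-d-1$ the relevant exponent is $k+d-1=\nu-2$, and the tail delivered by Proposition~\ref{Lemma Nuevo Lieb Robinson} is $f(r)=C\,f_{F_{\rm Pol}}(r)=C\,K_d^{2}\,g_{F_{\rm Pol}}(r)$ with $g_{F_{\rm Pol}}(r)=\sum_{m\geqslant1}m(r+m)^{-\nu}$. The $\mathcal{O}(r^{2-\nu})$ estimate here is two-sided: restricting the sum to $r\leqslant m\leqslant 2r$ gives
\[
g_{F_{\rm Pol}}(r)\;\geqslant\;\sum_{m=r}^{2r} \frac{m}{(r+m)^{\nu}}\;\geqslant\; r\cdot\frac{r}{(3r)^{\nu}}\;=\;3^{-\nu}\,r^{2-\nu}\,,
\]
so $(1+r)^{\nu-2}f(r)$ remains bounded away from zero and the vanishing-limit half of Definition~\ref{ALP-Polynomial Def} is violated by this particular tail. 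What your computation (and the paper's) actually establishes is that $\alpha$ is $(k)$-ALP for every integer $k\leqslant\nu-d-2$; to reach $(\nu-d-1)$-ALP as stated one would need a tail that is $o(r^{2-\nu})$, which neither the commutator bound in Definition~\ref{Polynomial LR def} nor the geometric estimate of Lemma~\ref{lemma: bounding sum regions} supplies. You should either flag this explicitly, weaken the conclusion to $(\nu-d-2)$-ALP, or argue that the canonical (optimal) tail function of $\alpha$ decays strictly faster than the bound $C\,f_{F_{\rm Pol}}$ -- an assertion for which no justification is currently offered.
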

\begin{proof}
	Let $\alpha$ be an automorphism of $(\nu)-$Lieb-Robinson type. Arguing analogously to Proposition~\ref{Lemma Nuevo Lieb Robinson}, and using that $f_{F_{\rm Pol}}(r) = \mathcal{O}\left(1/r^{\nu - 2}\right)$, we conclude that $\alpha$ is indeed a $(\nu - d - 1)-$ALP automorphism.
\end{proof}

\medskip

Finally, the question arises --as in the case of the Fr\'echet algebra $\bb{A}_\infty$--whether every automorphism of the Banach $*-$algebra $\bb{A}_{(k)}$ is a $(k)-$ALP automorphism. To explore this, we once again consider a flip-type automorphism as in Example~\ref{Example Not ALP}.

\begin{proposition}\label{Prop_Flip_kALP}
	For any polynomial $\zeta$ of degree at least 2, the corresponding flip automorphism $\psi_\zeta$ defines a continuous $*-$homomorphism $\psi_\zeta: \bb{A}_{(k)} \rightarrow \bb{A}_{(k)}$ for every $k \in \N$. However, none of these automorphisms belongs to any of the classes $(k)-\operatorname{ALP-Aut}(\bb{A})$.
\end{proposition}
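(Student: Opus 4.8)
The plan is to treat the two assertions separately; both are, in essence, transcriptions of material already developed in Section~\ref{Example Not ALP-2} and in Example~\ref{Example Not ALP}.

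For the continuity statement, the point I would stress is that the bound \eqref{continuity shift automorphism}, namely $\|\psi_\zeta(A)\|_k \leqslant 2 M_\zeta^k \|A\|_k$ for $A \in \bb{D}_{\rm loc}$, was obtained for each fixed $k \in \N_0$ as a statement about the single Banach norm $\|\cdot\|_k$, with $M_\zeta$ depending only on $\zeta$ — nothing there couples different values of $k$. So I would first record that $\psi_\zeta$ is $\|\cdot\|_k$-bounded on $\bb{D}_{\rm loc}$. Next, running the traceless–decomposition argument from the proof of Lemma~\ref{Lemma: Extension} at the single level $k$ (using $\psi_\zeta(\mathbf 1)=\mathbf 1$, the $*$-invariance of $\|\cdot\|_k$, and $|\omega_\infty(A)|\leqslant\|A\|_k$) promotes this to $\|\psi_\zeta(A)\|_k \leqslant C_\zeta(k)\|A\|_k$ for all $A \in \bb{A}_{\rm loc}$, with $C_\zeta(k)$ finite. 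Since $\bb{A}_{(k)}$ is by definition the $\|\cdot\|_k$-completion of $\bb{A}_{\rm loc}$, this bounded map extends uniquely to a $\|\cdot\|_k$-continuous linear map on $\bb{A}_{(k)}$; arguing as in the proof of Proposition~\ref{Frechet continuity implies Operator continuity} — using $\|\cdot\|\leqslant\|\cdot\|_k$ so that $\|\cdot\|_k$-limits of local observables coincide with their $C^*$-norm limits, on which $\psi_\zeta$ already acts continuously — the extension coincides on $\bb{A}_{(k)}\subseteq\bb{A}$ with the original automorphism $\psi_\zeta$. In particular $\psi_\zeta(\bb{A}_{(k)})\subseteq\bb{A}_{(k)}$, which can also be read off directly from Proposition~\ref{Properties K polynomial decaying} ($\|\psi_\zeta(A)\|_k<\infty$ is the bound above, and $\lim_{r\to\infty}(1+r)^k p_r(\psi_\zeta(A))=0$ follows by $\|\cdot\|_k$-approximating $A$ with local observables); and, being the restriction of a $*$-homomorphism of $\bb{A}$ to a $*$-subalgebra, $\psi_\zeta|_{\bb{A}_{(k)}}$ is a continuous $*$-homomorphism of the Banach $*$-algebra $\bb{A}_{(k)}$.

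For the second assertion I would argue by contradiction. Suppose $\psi_\zeta \in (k)-\operatorname{ALP-Aut}(\bb{A})$ for some $k\in\N$, with tail function $f^{(\psi_\zeta)}$ satisfying $H_{\psi_\zeta}(s,r)\leqslant s^{d-1}f^{(\psi_\zeta)}(r)$ and $\lim_{r\to\infty}(1+r)^{k+d-1}f^{(\psi_\zeta)}(r)=0$. By Example~\ref{Example Not ALP} (together with Remark~\ref{Remark: Ex From Z to Z^d not ALP} for the trivial extension to $\Z^d$), since $\zeta$ is an increasing polynomial of degree at least $2$ one has $H_{\psi_\zeta}(s,r)\geqslant\tfrac12$ for all $s\geqslant1$ and $r\geqslant0$. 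Specializing to $s=1$, where $s^{d-1}=1$, forces $f^{(\psi_\zeta)}(r)\geqslant\tfrac12$ for every $r\geqslant0$, so $(1+r)^{k+d-1}f^{(\psi_\zeta)}(r)\geqslant\tfrac12(1+r)^{k+d-1}\to\infty$ since $k+d-1\geqslant1$; this contradicts the decay requirement. Hence $\psi_\zeta$ belongs to none of the classes $(k)-\operatorname{ALP-Aut}(\bb{A})$.

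I do not anticipate a serious obstacle. The only step that genuinely needs attention is verifying that the Fr\'echet estimates of Section~\ref{Example Not ALP-2} descend to the single Banach norm $\|\cdot\|_k$ and that the resulting $\|\cdot\|_k$-continuous extension agrees with the $C^*$-automorphism $\psi_\zeta$; both are immediate because the relevant bounds — \eqref{continuity shift automorphism} and the argument behind Lemma~\ref{Lemma: Extension} — were established level by level, never mixing distinct values of $k$. The non-ALP half is simply Example~\ref{Example Not ALP} read at $s=1$.
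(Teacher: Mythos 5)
Your proposal is correct and follows essentially the same route as the paper: the continuity half reuses the level-$k$ bound~\eqref{continuity shift automorphism} on $\bb{D}_{\rm loc}$, promotes it to $\bb{A}_{\rm loc}$ via the traceless decomposition, and passes to $\bb{A}_{(k)}$ by density; the non-ALP half reads off $H_{\psi_\zeta}(s,r)\geqslant \tfrac12$ from Example~\ref{Example Not ALP}. The only cosmetic differences are that you cite the mechanism of Lemma~\ref{Lemma: Extension} at the single level $k$ where the paper simply uses $p_r(A-\omega_\infty(A)\mathbf 1)=p_r(A)$ inline, and you spell out the $s=1$ specialization that makes the $H\geqslant\tfrac12$ bound incompatible with any polynomial decay of $f^{(\psi_\zeta)}$.
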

	
\begin{proof}
	Recall from Example~\ref{Example Not ALP} that the function $H_{\psi_\zeta}$ associated with a flip automorphism satisfies $H_{\psi_\zeta}(s, r) \geqslant 1/2$ for any $s \geqslant 1$ and $r \geqslant 0$. Therefore, it follows that $\psi_\zeta$ does not belong to any class $(k)-\operatorname{ALP-Aut}(\bb{A})$.
	
	On the other hand, following the arguments of Subsection~\ref{Example Not ALP-2}, one has that for each $k \in \N$, there exists $M_k > 0$, depending only on $k$ and on the polynomial degree of $\zeta$, such that:
	$$||\psi_\zeta(A)||_k \; \leqslant \; M_k ||A||_k$$
	for all $A \in \bb{A}_{(k)}$. Now, to show that $\psi_\zeta(\bb{A}_{(k)}) \subseteq \bb{A}_{(k)}$, we need to verify that for any $A \in \bb{A}_{(k)}$, the following holds:
	$$\lim_{r \to \infty} (1 + r)^k p_r(\psi_\zeta(A)) = 0.$$
	In the proof of Fr\'echet continuity of $\psi_\zeta$, an increasing sequence $\{r_n\}_{n \in \N}\subset \N$--depending on $\zeta$--and a constant $C_\zeta > 0$ were constructed such that, for any $A \in \bb{A}_{\rm loc}$ and any $s \in [0, r_{n+1}-r_n-1]$,
$$
(1 + r_n + s)^k p_{r_n+s}(\psi_\zeta(A)) \;\leqslant\; 2 (C_\zeta + 1)^k (1 + r_n)^k\, p_{r_n}(A).
$$
	This inequality implies that $\lim_{r \to \infty} (1 + r)^k p_r(\psi_\zeta (A)) = 0$ for all $A \in \bb{D}_{\rm loc}$. Since $p_r(A - \omega_\infty(A)) = p_r(A)$ for all $r$, the same conclusion holds for all $A \in \bb{A}_{\rm loc}$. 
	Finally, for arbitrary $A \in \bb{A}_{(k)}$ and any $B \in \bb{A}_{\rm loc}$, one has:
\begin{align*}
(1 + r)^k p_r(\psi_\zeta(A)) &\leqslant (1 + r)^k p_r(\psi_\zeta(A - B)) + (1 + r)^k p_r(\psi_\zeta(B)) \\
&\leqslant 2M_k\ ||A - B||_k + (1 + r)^k p_r(\psi_\zeta(B)).
\end{align*}
Since $\bb{A}_{\rm loc}$ is dense in $\bb{A}_{(k)}$, the desired limit follows.	
	
\end{proof}

\medskip

\newpage

\appendix

\section{Basic facts on Frech\'et $\ast$-algebras}\label{app:Fre_Al}

A Fréchet algebra is a straightforward generalization of a Banach algebra. Recall that a Banach algebra $\bb{B}$ is an algebra over $\C$, with a norm $\|\cdot\|$ which satisfies $\| AB \| \leq \| A \| \| B \|$, for all $A, B \in B$, and which is complete for the topology given by the norm
 $\|\cdot\|$. In the following we will use the notation  $\N_0:=\{0\}\cup\N$.

\subsection{Frech\'et spaces}
To begin with, let $\bb{V}$ be a vector space over $\C$, and let $\{ \| \cdot \|_{k} \}_{k\in\N_0}$ be a sequence of seminorms on $\bb{V}$.
We say that $\bb{V}$ is a Fréchet space if $\bb{V}$ is complete for the topology given by the seminorms. This means that 
a sequence $\{A_{n}\}_{n\in\N_0}$ of elements of $\bb{V}$ 
converges to an element $A \in \bb{V}$ if for each $\epsilon >0$ and each $k \in {\mathbb N}_0$, there is some sufficiently large $N_{k,\epsilon} > 0$ such that $\| A_{n} - A \|_{k} \leq \epsilon$ for $n \geq N_{n, \epsilon}$. In other words $A_{n}$ tends to $A$ in each of the seminorms. This defines the topology on $\bb{V}$, and $\bb{V}$ is of course complete if and only if every Cauchy sequence in $\bb{V}$ converges to an element of $\bb{V}$.
A base of neighborhoods of zero for such a topology consists of sets
\[
\s{U}_{[(k_1,\epsilon_1)\ldots(k_p,\epsilon_p)]}\;:=\;\{A\in\bb{V}\;|\;\|A\|_{k_1}\leqslant\epsilon_1\;,\ldots\;, \|A\|_{k_p}\leqslant\epsilon_p\}\;.
\]
Recall that in a topological vector space, the topology is determined by a base of zero neighborhoods.

\medskip

Now let $\rr{N} = \{\| \cdot \|_{k} \}_{k\in\N_0}$ 
and $\rr{N}' = \{\| \cdot \|'_{k} \}_{k\in\N_0}$
be two sequences of seminorms on $\bb{V}$.  We say that
$\rr{N}$ and $\rr{N}'$ are equivalent sequences of seminorms if the topologies they induce on $\bb{V}$ are equivalent.
We say that $\rr{N}'$ is an increasing sequence of seminorms on $\bb{V}$ if $\| A \|'_{k} \leq \| A \|'_{k+1}$ for all $k \in \N_0$ and $A\in \bb{V}$.
\begin{proposition} If $\bb{V}$ is any vector space over $\C$ with seminorms $\rr{N} = \{\| \cdot \|_{k} \}_{k\in\N_0}$ , then there exists an equivalent increasing sequence  of seminorms  $\rr{N}' = \{\| \cdot \|'_{k} \}_{k\in\N_0}$  given by
\[
\| A \|_{k}^{\prime}\;: =\; \max_{p\leq k} \| A \|_{p}\;.
\]
\end{proposition}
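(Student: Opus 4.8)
The plan is to check three things in turn: that each $\|\cdot\|'_k$ is genuinely a seminorm, that the sequence $\{\|\cdot\|'_k\}_{k\in\N_0}$ is increasing, and that the systems $\rr{N}$ and $\rr{N}'$ generate the same topology on $\bb{V}$. Only the last point involves any content, and even that is elementary.

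First I would observe that, for fixed $k$, the function $A\mapsto\|A\|'_k=\max_{p\leq k}\|A\|_p$ is a finite maximum of seminorms. Absolute homogeneity is clear, since $\|\lambda A\|_p=|\lambda|\,\|A\|_p$ for each $p\leq k$ and the maximum scales the same way; the triangle inequality follows because for each $p\leq k$ one has $\|A+B\|_p\leq\|A\|_p+\|B\|_p\leq\|A\|'_k+\|B\|'_k$, and taking the maximum over $p\leq k$ on the left preserves this bound. Hence $\|\cdot\|'_k$ is a seminorm. The monotonicity $\|A\|'_k\leq\|A\|'_{k+1}$ is immediate, since the maximum defining $\|A\|'_{k+1}$ is taken over the larger index set $\{0,\dots,k+1\}\supseteq\{0,\dots,k\}$.

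For the equivalence of the induced topologies, I would argue at the level of bases of zero-neighborhoods, as recalled above. On one hand, since $\|A\|_k\leq\|A\|'_k$ for every $k$ and every $A$, every basic $\rr{N}$-neighborhood $\s{U}_{[(k_1,\epsilon_1)\ldots(k_p,\epsilon_p)]}$ contains the basic $\rr{N}'$-neighborhood built from the same data; thus the $\rr{N}'$-topology is at least as fine as the $\rr{N}$-topology. On the other hand, for any $k\in\N_0$ and $\epsilon>0$ one has the identity
\[
\{A\in\bb{V}\;|\;\|A\|'_k\leq\epsilon\}\;=\;\bigcap_{p\leq k}\{A\in\bb{V}\;|\;\|A\|_p\leq\epsilon\}\;,
\]
so each basic $\rr{N}'$-neighborhood of zero is a finite intersection of basic $\rr{N}$-neighborhoods, hence is itself an $\rr{N}$-neighborhood of zero; therefore the $\rr{N}$-topology is at least as fine as the $\rr{N}'$-topology. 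The two topologies coincide, which is exactly the claim that $\rr{N}$ and $\rr{N}'$ are equivalent.

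I do not expect any genuine obstacle: the statement is routine, and the single point requiring a moment's care is keeping straight which of the two comparisons---the pointwise bound $\|\cdot\|_k\leq\|\cdot\|'_k$ versus the representation of a $\|\cdot\|'_k$-ball as a finite intersection of $\|\cdot\|_p$-balls---yields which inclusion of topologies. Completeness need not be revisited, since the underlying topology is literally unchanged.
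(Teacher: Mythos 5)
Your proof is correct and complete: the seminorm and monotonicity checks are immediate, and the topology comparison via zero-neighborhoods is exactly the right move---the pointwise inequality $\|\cdot\|_k\leq\|\cdot\|'_k$ gives one inclusion of topologies, and writing the $\|\cdot\|'_k$-ball as the finite intersection $\bigcap_{p\leq k}\{\|A\|_p\leq\epsilon\}$ gives the other. The paper itself states this proposition without proof as a standard fact about Fréchet spaces, so there is nothing to compare against; your argument is the expected one. One small point worth noting: you correctly avoid invoking the paper's subsequent equivalence criterion (the one phrased with constants $C_k$ and indices $m$), since that criterion is stated only for pairs of \emph{increasing} sequences and the original $\rr{N}$ need not be increasing---working directly with the basic neighborhoods $\s{U}_{[(k_1,\epsilon_1)\ldots(k_p,\epsilon_p)]}$ is the clean way around this.
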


\medskip

Once $\bb{V}$ has been endowed with an increasing sequence  of seminorms one can define a  
 base of neighborhoods of zero simply bay
\[
\s{U}_{(k,\epsilon)}\;:=\;\{A\in\bb{V}\;|\;\|A\|_{k}\leqslant\epsilon\}\;.
\]

\begin{proposition}
Two increasing sequences of seminorms 
$\rr{N} = \{\| \cdot \|_{k} \}_{k\in\N_0}$ and $\rr{N}' = \{\| \cdot \|'_{k} \}_{k\in\N_0}$
 on  $\bb{V}$ define equivalent topologies if and only if for every $k \in \N_0$, there exists $m\in \N_0$ and $C_{k}>0$ such that $\| A \|_{k} \leq C_{k} \| A \|_{m}^{\prime}$ and $\| A \|_{k}^{\prime} \leq C_{k} \| A \|_{m}$.
\end{proposition}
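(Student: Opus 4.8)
The plan is to reduce the statement to a comparison of the canonical bases of neighborhoods of zero attached to the two increasing sequences of seminorms. Recall from the discussion preceding the statement that, because $\rr{N}=\{\|\cdot\|_k\}_{k\in\N_0}$ and $\rr{N}'=\{\|\cdot\|'_k\}_{k\in\N_0}$ are \emph{increasing}, the locally convex topology $\tau$ generated by $\rr{N}$ admits the family $\{\s{U}_{(k,\epsilon)}\}_{k\in\N_0,\,\epsilon>0}$ as a base of neighborhoods of zero, where $\s{U}_{(k,\epsilon)}=\{A\in\bb{V}:\|A\|_k\leqslant\epsilon\}$, and likewise $\tau'$ generated by $\rr{N}'$ admits the base $\{\s{U}'_{(k,\epsilon)}\}$ with $\s{U}'_{(k,\epsilon)}=\{A\in\bb{V}:\|A\|'_k\leqslant\epsilon\}$. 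Since both are vector topologies, each is determined by its base of zero neighborhoods, so $\tau=\tau'$ is equivalent to the conjunction of the two inclusions of topologies, and each such inclusion amounts to the statement that every basic zero neighborhood of the coarser side contains a basic zero neighborhood of the finer side. Hence the whole proposition reduces to the following local fact, applied once in each direction: for a fixed $k\in\N_0$, one has ``for every $\epsilon>0$ there are $m\in\N_0$ and $\delta>0$ with $\s{U}_{(m,\delta)}\subseteq\s{U}'_{(k,\epsilon)}$'' if and only if ``there are $m\in\N_0$ and $C_k>0$ with $\|A\|'_k\leqslant C_k\|A\|_m$ for all $A\in\bb{V}$''.

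For the \emph{if} direction of this local fact, given $\|A\|'_k\leqslant C_k\|A\|_m$ it suffices to take $\delta:=\epsilon/C_k$: then $\|A\|_m\leqslant\delta$ forces $\|A\|'_k\leqslant\epsilon$, i.e.\ $\s{U}_{(m,\delta)}\subseteq\s{U}'_{(k,\epsilon)}$. For the \emph{only if} direction, apply the neighborhood inclusion with $\epsilon=1$ to obtain $m$ and $\delta>0$ such that $\|A\|_m\leqslant\delta$ implies $\|A\|'_k\leqslant1$. If $\|A\|_m\neq0$, apply this to the rescaled vector $\delta\,A/\|A\|_m$, whose $\|\cdot\|_m$-seminorm equals $\delta$, to get $\|A\|'_k\leqslant\delta^{-1}\|A\|_m$. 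If instead $\|A\|_m=0$, then $\|tA\|_m=0\leqslant\delta$ for all $t>0$, so $t\,\|A\|'_k=\|tA\|'_k\leqslant1$ for all $t>0$, which forces $\|A\|'_k=0$; the inequality then holds trivially. Thus $C_k:=\delta^{-1}$ works for every $A\in\bb{V}$.

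Assembling the pieces: $\tau$ is finer than $\tau'$ precisely when, for each $k$, every $\s{U}'_{(k,\epsilon)}$ contains some $\s{U}_{(m,\delta)}$, which by the local fact is precisely the existence of $m,C_k$ with $\|A\|'_k\leqslant C_k\|A\|_m$ for all $A$; symmetrically, $\tau'$ is finer than $\tau$ precisely when for each $k$ there are $m,C_k$ with $\|A\|_k\leqslant C_k\|A\|'_m$ for all $A$. Taking the conjunction and, if desired, replacing the two constants by their maximum to obtain a single $C_k$ in each line, one obtains exactly the stated characterization of $\tau=\tau'$.

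The only delicate point, and the nearest thing to an obstacle, is the conversion in the \emph{only if} direction of a purely topological neighborhood inclusion into the quantitative seminorm bound, specifically for vectors $A$ with $\|A\|_m=0$: rather than dividing by zero one must use homogeneity of $\|\cdot\|'_k$ together with the fact that $\s{U}_{(m,\delta)}$ absorbs the whole line $\C A$, forcing $\|A\|'_k=0$. Everything else is a routine unwinding of the definition of the topology generated by an increasing sequence of seminorms, relying only on the neighborhood base already recorded in the excerpt.
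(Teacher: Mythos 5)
Your proof is correct. Note that the paper states this proposition in Appendix~\ref{app:Fre_Al} without proof (it is treated as a standard fact about locally convex topologies defined by increasing families of seminorms), so there is no textual argument to compare against; your reduction to inclusions of basic zero neighborhoods, followed by the rescaling argument to convert the neighborhood inclusion $\s{U}_{(m,\delta)}\subseteq\s{U}'_{(k,1)}$ into the quantitative bound $\|A\|'_k\leqslant\delta^{-1}\|A\|_m$ (with the separate treatment of the degenerate case $\|A\|_m=0$ via absorption of the line $\C A$), is exactly the standard argument and is carried out carefully. One small cosmetic point: in the final assembly you obtain indices $m_1$ and $m_2$ for the two inequalities and mention only merging the two constants into a single $C_k$; to match the statement, which asserts a single $m$ for both inequalities, one should also pass to $m=\max(m_1,m_2)$, which is legitimate precisely because the sequences of seminorms are increasing --- a hypothesis you have already invoked, so this is a one-line remark rather than a gap.
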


\subsection{Frech\'et $\ast$-algebras}
Next assume that $\bb{A}$ is a Frech\'et space 
with respect to a system of  increasing sequences of seminorms $\rr{N} = \{\| \cdot \|_{k} \}_{k\in\N_0}$ and 
endowed with a multiplication operation defined on it, so that $\bb{A}$ is a algebra over $\C$.  We say that $\bb{A}$ is a Fréchet algebra\footnote{Fréchet algebras, as  defined  here, are also called $B_0$-algebras.} if for each $k\in \N_0$, there exists a constant $C_{k}>0$ and a natural number $m\geq k$ such that 
\begin{equation} 
\| AB \|_{k}\; \leq\; C_{k} \| A \|_{m}\; \|B \|_{m}\;, \qquad A, B \in \bb{A}\;. 
\label{eq:mConvDef} \end{equation} 
This implies that the tultiplication is jointly continuous, \ie if 
$A_n\to A$ and $B_n\to B$ one has that $A_nB_n\to AB$.
Therefore, joint continuity of multiplication is part of the definition of a Fréchet algebra.
 A Fréchet algebra $\bb{A}$ may or may not have an identity (or unit) element ${\bf 1}$. If $\bb{A}$ is unital, we do not require that $\| {\bf 1} \|_{k} = 1$, as is often done for Banach algebras.

\medskip

A Fréchet $\ast$-algebra is a Fréchet algebra with a continuous involution.
\subsection{Rapidly decaying sequence}\label{app:Rap_seq}
Let $\s{S}(\N_0)$ be the space of complex valued sequences that decay at infinity faster than any power (Schwartz-type sequences), \ie
\[
\s{S}(\N_0)\;:=\;\{g:\N_0\to\C\;|\;|||g|||_k<\infty\;,\;\; \forall\; k\in\N_0 \}
\]
where the system of norms $|||\cdot|||_k$ is defined by
 \begin{equation}\label{eq:seq_nor}
|||g|||_k\;:=\;\sup_{r\in\N_0}\;(1+r)^k|g(r)|\;<\;+\infty\;,\qquad \forall\;k\in\N_0
\end{equation}
Since $(1+r)^{k+1}\geqslant (1+r)^k$
 it follows that $|||g|||_k\leqslant |||g|||_{k+1}$ for every $g\in\s{S}(\N_0)$, meaning that the system of norms is increasing.
\begin{proposition}\label{prop:compl}
 The vector space  $\s{S}(\N_0)$ turns out to be a  a Fr\'echet space when topologized by the family  of norms $|||\cdot|||_k$. 
\end{proposition}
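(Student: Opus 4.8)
The plan is to prove that $\s{S}(\N_0)$ is a Fr\'echet space, which by definition means that it is a complete topological vector space whose topology is generated by a countable family of seminorms. The countable family $\{|||\cdot|||_k : k \in \N_0\}$ is already at hand and is increasing, so the only real content is \emph{completeness}: every sequence $(g_n)_{n\in\N_0}$ in $\s{S}(\N_0)$ that is Cauchy with respect to every seminorm $|||\cdot|||_k$ must converge, in the topology generated by these seminorms, to some $g \in \s{S}(\N_0)$. I would also note in passing that the $|||\cdot|||_k$ are genuine norms (not merely seminorms), since $|||g|||_0 = \sup_r |g(r)| = 0$ forces $g = 0$; this is a minor observation but worth recording.

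\medskip

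The argument I would carry out has three steps. First, \textbf{construct the candidate limit}. If $(g_n)_n$ is Cauchy in each $|||\cdot|||_k$, then in particular it is Cauchy in $|||\cdot|||_0 = \sup$-norm, so for each fixed $r \in \N_0$ the scalars $g_n(r)$ form a Cauchy sequence in $\C$; define $g(r) := \lim_{n\to\infty} g_n(r)$. This defines a function $g:\N_0\to\C$. Second, \textbf{show $g \in \s{S}(\N_0)$}. Fix $k$. Since $(g_n)_n$ is $|||\cdot|||_k$-Cauchy it is $|||\cdot|||_k$-bounded, say $|||g_n|||_k \le C_k$ for all $n$; then for every $r$,
\[
(1+r)^k |g(r)| \;=\; \lim_{n\to\infty} (1+r)^k |g_n(r)| \;\le\; C_k,
\]
so taking the supremum over $r$ gives $|||g|||_k \le C_k < \infty$. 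As $k$ was arbitrary, $g \in \s{S}(\N_0)$. Third, \textbf{show convergence in each seminorm}. Fix $k$ and $\varepsilon > 0$, and choose $N$ so that $|||g_n - g_m|||_k \le \varepsilon$ for all $n,m \ge N$. For any $n \ge N$ and any $r \in \N_0$,
\[
(1+r)^k |g_n(r) - g(r)| \;=\; \lim_{m\to\infty} (1+r)^k |g_n(r) - g_m(r)| \;\le\; \varepsilon,
\]
and taking the supremum over $r$ yields $|||g_n - g|||_k \le \varepsilon$ for all $n \ge N$. Hence $g_n \to g$ in every $|||\cdot|||_k$, i.e.\ in the Fr\'echet topology.

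\medskip

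I do not expect any genuine obstacle here: the proof is the standard ``diagonal/pointwise limit'' argument for weighted sup-norm spaces, and the only subtlety is the routine interchange of a limit in $m$ with a supremum in $r$, which is handled cleanly by first passing to the limit pointwise (for each fixed $r$) and only then taking the supremum — exactly as in the two displayed inequalities above. The monotonicity $|||g|||_k \le |||g|||_{k+1}$ has already been observed in the excerpt, so the family is automatically an increasing sequence of norms and no reindexing is needed. I would close by remarking that the same completeness argument, restricted to the subset of monotonically decreasing non-negative sequences, shows that $\s{S}^+(\N_0)$ is a closed subset of $\s{S}(\N_0)$ (pointwise limits preserve both monotonicity and non-negativity), which is the form in which the space is used elsewhere in the paper.
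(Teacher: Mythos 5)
Your proof is correct and is the standard completeness argument for weighted sup-norm sequence spaces; the paper states Proposition~\ref{prop:compl} without providing a proof, so your argument simply supplies the routine verification that the authors leave implicit. The three-step structure (pointwise limit from the $|||\cdot|||_0$-Cauchy property, membership in $\s{S}(\N_0)$ via boundedness of the Cauchy sequence in each $|||\cdot|||_k$, and then convergence in each $|||\cdot|||_k$ by passing the limit in $m$ under the fixed-$r$ inequality before taking the supremum over $r$) is exactly the canonical one, and the closing observation that $\s{S}^+(\N_0)$ is closed under this topology is a correct and useful addendum consistent with how the paper uses that subset.
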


\medskip

Observe that 
\[
(1+r)^k\;|f(r)g(r)|\;\leqslant\;(1+r)^{2k}\;|f(r)|\;|g(r)|
\]
one immediately deduces that 
\begin{equation}\label{eq:sub_nor}
|||fg|||_k\;\leqslant\;|||f|||_k\;|||g|||_k\;,\qquad \forall\; k\in\N_0\;
\end{equation}
\ie all the norms are submultiplicative with the pointwise product.
The space $\s{S}(\N_0)$ can also be endowed with the involution given by the complex conjugation. If $g\in\s{S}(\N_0)$ then $g^*$ is the sequence defined by $g^*(r):=\overline{g(r)}$. Clearly $g^*\in \s{S}(\N_0)$ and one can easily check that
\begin{equation}\label{eq:adg}
|||gg^*|||_k\;\leqslant\;|||g|||_k^2\;=\;|||g^*|||_k^2\;,\qquad \forall\; k\in\N_0\;.
\end{equation}
Furthermore, any function $f \in \s{S}(\N_0)$ can be bounded above by a monotonically decreasing function $g \in \s{S}(\N_0)$. For instance, one can define $g$ by:
\begin{equation}\label{eq: BoundAboveS}
 g(r) \; := \; \sup_{s \geqslant r} f(s).
\end{equation}
\begin{proposition}\label{pro:F-alg}
 The space  $\s{S}(\N_0)$ is a Fr\'echet $\ast$-algebra with respect to the pointwise multiplication and the complex conjugation.
\end{proposition}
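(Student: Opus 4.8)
The plan is simply to assemble the ingredients already collected in this appendix. By Proposition~\ref{prop:compl}, $\s{S}(\N_0)$ is a Fréchet space for the increasing family of norms $\{|||\cdot|||_k\}_{k\in\N_0}$, so the only thing left is to check that pointwise multiplication turns it into a Fréchet algebra and that complex conjugation is a compatible continuous involution.

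First I would confirm that $\s{S}(\N_0)$ is a $\C$-algebra under the pointwise operations: closure under pointwise sums and scalar multiples is immediate from the triangle inequality and homogeneity of each $|||\cdot|||_k$, while closure under pointwise products follows from the submultiplicativity estimate \eqref{eq:sub_nor}, namely $|||fg|||_k \leqslant |||f|||_k\,|||g|||_k$ for all $k\in\N_0$, which is itself the one-line consequence of $(1+r)^k|f(r)g(r)| \leqslant \big((1+r)^k|f(r)|\big)\big((1+r)^k|g(r)|\big)$. Associativity, commutativity and distributivity are inherited pointwise from $\C$. Since \eqref{eq:sub_nor} is exactly the defining inequality \eqref{eq:mConvDef} of a Fréchet algebra, with the optimal choices $C_k=1$ and $m=k$, joint continuity of the multiplication follows, and $\s{S}(\N_0)$ is a Fréchet algebra.

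For the involution I would set $g^*(r):=\overline{g(r)}$; from $|g^*(r)|=|g(r)|$ one gets $|||g^*|||_k=|||g|||_k$ for every $k$, which shows at once that $g^*\in\s{S}(\N_0)$ and that $g\mapsto g^*$ is isometric in each seminorm, hence continuous. It is conjugate-linear, involutive ($(g^*)^*=g$), and multiplicative in the $\ast$-algebra sense, $(fg)^*=f^*g^*$, using pointwise conjugation together with commutativity of the product (the bound \eqref{eq:adg} is then a further consequence). This establishes that $\s{S}(\N_0)$ is a Fréchet $\ast$-algebra. I do not expect any genuine obstacle: the proof is pure bookkeeping of the inequalities already proved, and the only point worth highlighting is that the sup-type form of the norms $|||\cdot|||_k$ yields the algebra estimate with constant $1$ and no index shift, which is strictly stronger than the general Fréchet-algebra requirement.
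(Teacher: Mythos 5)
Your proposal is correct and takes exactly the route the paper intends: the proposition is stated right after the submultiplicativity bound \eqref{eq:sub_nor} and the conjugation bound \eqref{eq:adg} precisely because those two estimates, together with Proposition~\ref{prop:compl}, are the whole proof. Your observation that \eqref{eq:sub_nor} gives the Fréchet-algebra inequality \eqref{eq:mConvDef} with $C_k=1$ and no index shift $m=k$, and that conjugation is isometric in every $|||\cdot|||_k$, is the right bookkeeping; the one small point you handle correctly but should keep in mind is that the $\ast$-algebra axiom is $(fg)^*=g^*f^*$, which collapses to $f^*g^*$ only because the algebra is commutative.
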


\end{document}